\theoremstyle{definition}
\newtheorem{definition}{Definition}[section]
\newtheorem{theorem}[definition]{Theorem}
\newtheorem{lemma}[definition]{Lemma}
\newtheorem{corollary}[definition]{Corollary}
\newtheorem{conjecture}[definition]{Conjecture}
\newcommand{\setcover}{\textsc{Set Cover}}
\newcommand{\Ell}{\mathcal{L}}
\tikzset{%
  >={latex},
  header/.style = {rectangle, rounded corners, draw=black,
                           minimum width=3.25cm, minimum height=1cm,
                           text centered},
  construction/.style = {header, fill=cyan!25,
				 minimum width=4cm},
  element/.style = {header, rounded rectangle, fill=orange!10},
  inputconstruction/.style = {construction, fill=magenta!15},
  goalconstruction/.style = {construction, fill=green!30},
  goalelement/.style = {element, fill=orange!20},
  graphnode/.style = {circle, draw, fill=gray!30}
}
\title{A Chronology of Set Cover Inapproximability Results}
\author{Erika Melder}
\begin{document}
\maketitle

\section*{Abstract}
It is well-known that an algorithm exists which approximates the NP-complete problem of Set Cover within a factor of ln(n), and it was recently proven that this approximation ratio is optimal unless P = NP. This optimality result is the product of many advances in characterizations of NP, in terms of interactive proof systems and probabilistically checkable proofs (PCP), and improvements to the analyses thereof. However, as a result, it is difficult to extract the development of Set Cover approximation bounds from the greater scope of proof system analysis. This paper attempts to present a chronological progression of results on lower-bounding the approximation ratio of Set Cover. We analyze a series of proofs of progressively better bounds and unify the results under similar terminologies and frameworks to provide an accurate comparison of proof techniques and their results. We also treat many preliminary results as black-boxes to better focus our analysis on the core reductions to Set Cover instances. The result is alternative versions of several hardness proofs, beginning with initial inapproximability results and culminating in a version of the proof that ln(n) is a tight lower bound.

\section{Introduction}

\subsection{The Set Cover Problem}

The optimization problem \setcover{} is defined as follows:

\begin{definition}[\setcover{}]
Given a set $U$ (a \emph{universe}) of $n$ elements and sets $S_1, S_2, \dots, S_k \subseteq U$, with the guarantee that every element of $U$ is in at least one of the $S_i$, find a minimal collection of subsets whose union is $U$.
\end{definition}

This problem is well-known to be NP-complete \cite{Karp-1972}. As such, the best efficient generic solution we can hope for is an approximation algorithm, which finds a solution to any instance with no more than some function of the smallest possible number of sets. The approximation version of \setcover{} is defined as follows:

\begin{definition}[\setcover{} Approximation]
For a \setcover{} instance $\mathcal{S}$, let $\text{OPT}(\mathcal{S})$ be the number of sets $S_i$ used in a minimal solution of $\mathcal{S}$. An algorithm is said to $a(n)${\it-approximate} \setcover{}, where $a$ is a function of the number of universe elements $n$ in $\mathcal{S}$, if it can generate a solution to any \setcover{} instance $\mathcal{S}$ using at most $a(n) \cdot \text{OPT}(\mathcal{S})$ sets.
\end{definition}

In particular, we are interested in approximations where $a(n)$ is minimized, and which run in polynomial time.

The lower bounds on time complexity of these approximations are well-studied. Notably, several algorithms exist which are proven to $\ln n$-approximate \setcover{} \cite{Johnson-1974, Lovasz-1975}.  On the other hand, there have been several results proving that efficient approximation is impossible within certain thresholds. The proofs of these impossibility results are nontrivial, and rely heavily on the machinery of interactive proof systems \cite{GMR-1989, BM-1988} and the subsequent framework of \emph{probabilistically checkable proofs} (PCP) \cite{Arora-1994, AS-1998}. This paper will trace key developments in proofs of \setcover{} inapproximability, beginning with the initial inapproximability result of Lund and Yannakakis \cite{LY-1994}, and progressing chronologicaly to explore what techniques were necessary to arrive at the current results. The ultimate goal will be to prove the following:

\begin{theorem}[Inapproximability of \setcover{}]
\label{thm:primary}
Unless $\text{P} = \text{NP}$, \setcover{} cannot be approximated within $c \ln n$ for any $0 < c < 1$.
\end{theorem}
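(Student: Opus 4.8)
The plan is to obtain Theorem~\ref{thm:primary} from a gap-preserving reduction out of an NP-hard problem, following the ``PCP plus partition system'' template that originates with Lund and Yannakakis~\cite{LY-1994}, was pushed to the optimal constant by Feige, and was finally made to run in polynomial time --- hence to rest on nothing stronger than $\mathrm{P}\neq\mathrm{NP}$ --- by Dinur and Steurer. Fixing $\epsilon\in(0,1)$, it suffices to construct, in polynomial time from an arbitrary instance of an NP-hard language, a \setcover{} instance on $n$ elements whose optimum is either at most some computable threshold $g$ or strictly larger than $(1-\epsilon)(\ln n)\,g$: a polynomial-time $c\ln n$-approximation with $c=1-\epsilon$ could compare its output to $(1-\epsilon)(\ln n)\,g$ and so decide that language.

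The reduction takes two ingredients. The first is a hard gap problem: a $k$-prover one-round proof system (for a parameter $k$ to be fixed), with a structured, projection-like accept predicate, such that by the PCP theorem and the parallel repetition theorem it is NP-hard to distinguish inputs on which the provers make the verifier accept with certainty from inputs on which no strategies make it accept with probability more than $\delta$, where $\delta$ may be taken as small as needed. The second is a \emph{partition system}: a ground set $B$ of $m$ points together with $h$ distinct partitions of $B$, each into $k$ blocks, chosen so that $B$ cannot be covered using only the blocks made available by fewer than roughly $(1-\epsilon)\,k\ln m$ of the partitions. A probabilistic-method argument shows such a system exists, and can be written down in time polynomial in its size, as long as $h$ is polynomial in $m$ and $k$ is not too large; here $k$ is the number of provers and $h$ the number of distinct prover answers.

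The \setcover{} instance $\mathcal{S}$ then has universe $U=R\times B$, where $R$ is the set of random strings (question tuples) of the verifier, so that $n=|U|=|R|\cdot m$ and the parameters are arranged so that $\ln m=(1-o(1))\ln n$. For each prover $j$, each question $\tilde q$ it might receive, and each answer $a$ to it, $\mathcal{S}$ contains one set consisting of all pairs $(q,b)$ such that on randomness $q$ the verifier sends $\tilde q$ to prover $j$ and $b$ lies in the $j$-th block of the partition that a fixed rule attaches to $a$ --- the rule designed so that mutually consistent accepting answers all point to a single common partition. In the completeness case, the honest answers coming from a perfect strategy give, for every $q\in R$, a family of $k$ sets whose blocks form one whole partition of $B$ and hence cover $\{q\}\times B$; unioning over $q$ produces a cover of size $g=k\cdot(\text{number of distinct questions per prover})$. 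In the soundness case I would argue contrapositively: from a cover of size at most $(1-\epsilon)(\ln n)\,g$ one extracts prover strategies by letting each prover answer a random choice from the (on average short, by the covering-hardness of the partition system) list of answers the cover commits it to on a given question; the projection structure of the accept predicate then makes the verifier accept with probability well above $\delta$, a contradiction.

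I expect the soundness extraction to be the main obstacle: decoding an efficient cover into prover strategies and bounding the acceptance probability from below calls for a delicate averaging argument in which the projection structure of the proof system and the exact covering-hardness exponent of the partition system must be combined tightly enough to drive the gap constant all the way to $1$. A close second is the parameter bookkeeping: forcing the gap up to $\ln n$ makes the game's answer alphabet and number of repetitions grow, and with them $m$, $h$, and $k$, so that keeping the whole construction polynomial-time is delicate --- and it is precisely this last point, rather than any new combinatorics, that separates the definitive ``unless $\mathrm{P}=\mathrm{NP}$'' form of the theorem from Feige's earlier ``unless $\mathrm{NP}\subseteq\mathrm{DTIME}\!\left(n^{O(\log\log n)}\right)$''.
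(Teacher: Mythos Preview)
Your proposal is correct at the high level and tracks the paper's overall arc: Lund--Yannakakis's PCP-plus-set-system template, Feige's $k$-prover amplification with partition systems to reach the optimal $\ln n$ constant, and Dinur--Steurer's parallel-repetition analysis to bring the complexity assumption down to $\mathrm{P}\neq\mathrm{NP}$. Your completeness and soundness sketches match Feige's argument as the paper presents it (Lemmas~\ref{lem:goodseeds} and~\ref{lem:weakaccept}), and you correctly flag the parameter bookkeeping as the bottleneck.

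The one substantive difference is the route you take for the final step. You describe plugging Dinur--Steurer directly into Feige's $k$-prover framework; the paper instead passes through Moshkovitz's reformulation in the language of \emph{projection games}. Moshkovitz replaces the $k$-prover verifier by a bi-regular bipartite \textsc{Label Cover} instance, introduces the notions of \emph{agreement} and \emph{list-agreement soundness error} (Lemmas~\ref{lem:soundnesscomparison} and~\ref{lem:agreementsoundnesscomparison}), and reduces to \setcover{} via a single partition system indexed by $\Sigma_B$ rather than one partition system per random seed. What this buys is a clean interface: Dinur--Steurer's theorem is stated as NP-hardness of \textsc{Label Cover} with soundness $(\log n)^{-c}$ and polynomial alphabet, which slots directly into Moshkovitz's Lemma~\ref{lem:mainclaimmoshkovitz}, whereas threading it through Feige's code-based $k$-prover construction would require reworking the coupling between the code, the repetitions, and the partition-system parameters. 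Your approach is not wrong, but the paper's detour through Moshkovitz is what makes the final assembly modular.
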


\subsection{A Note on the Timeline}
\label{sec:timeline}
The papers are presented in chronological order of their results, but the years of publication listed are those of the respective journals that the papers are ultimately published in. Much of the work on this problem was done collaboratively, with initial results sparking debate and further work before those initial results were printed, and with several papers receiving revisions to reflect newly-obtained results of their successors reliant upon the initial results. For ease of reference, the papers in the main portion of this paper will be referred to by the date that they were published, though this will cause some of these papers to seemingly appear out of order. Appendix \ref{app:timeline} presents timelines of these papers arranged by rough order of initial results, in the form of a rough sketch of which results were strictly necessary to produce others.

\section{Lund and Yannakakis, 1994 \cite{LY-1994}}
\label{sec:ly}

An initial result was the following theorem:

\begin{theorem}
\label{thm:inapproxly}
Unless $\text{NP} \subseteq \text{DTIME}(n^{\text{polylog}(n)})$, \setcover{} cannot be approximated within a ratio of $c \log n$ for any positive $c < \frac{1}{4}$.
\end{theorem}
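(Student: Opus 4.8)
The plan is to build, in polynomial time, a reduction from an $\text{NP}$-hard gap problem to \setcover{} so that ``yes'' instances map to \setcover{} instances with a small optimum while ``no'' instances map to instances whose optimum is larger by a factor $\sim\tfrac14\log n$. A polynomial-time $c\log n$-approximation with $c<\tfrac14$ would then decide the gap problem, and since the reduction inflates the instance size only quasi-polynomially, this would place $\text{NP}$ in $\text{DTIME}(n^{\text{polylog}(n)})$.

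I would take as a black box the PCP / interactive-proof technology available at the time, in the form of a two-prover one-round protocol for an arbitrary $\text{NP}$ language (equivalently, a Label Cover instance): perfect completeness, and soundness error driven down to $o(1)$, say $1/\text{polylog}(n)$. The point to keep track of is that pushing the soundness below a constant costs super-constantly many repetitions, so the number of distinct (question, answer) pairs---and therefore the eventual \setcover{} universe---is only quasi-polynomially bounded in the original input length; this is exactly why the hypothesis in the conclusion is $\text{NP}\subseteq\text{DTIME}(n^{\text{polylog}(n)})$ rather than $\text{P}=\text{NP}$.

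The combinatorial heart is a family of \emph{partition systems}: a ground set $B$ of $m$ points together with $L$ partitions of $B$, each into $d$ equal blocks, such that $B$ is covered by the $d$ blocks of a single partition, yet any cover of $B$ that never uses all blocks of any one partition must use at least $(1-o(1))\,d\ln m$ blocks---so the ``scattered'' cost exceeds the ``cheap'' cost by a factor $\sim\ln m$. Such systems exist whenever $L$ is polynomial in $m$ (and $d$ grows slowly), by a union bound over the possible small covers; they can later be made explicit, but the probabilistic construction suffices. The reduction attaches a disjoint copy $B_e$ of such a system to every edge $e=(q_1,q_2)$ of the Label Cover instance; the universe is $\bigcup_e B_e$, and for each question $q$ and answer $a$ there is one set $C(q,a)$ which, on each incident $B_e$, occupies a single block of the partition of $B_e$ that the edge's projection associates with $a$. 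The blocks are arranged so that a \emph{consistent} answer pair on $e$ occupies exactly a full partition of $B_e$, covering it with $O(1)$ sets, while an inconsistent pair lands in the scattered regime. In the ``yes'' case a satisfying labeling yields a cover with one set per (question, labeled answer), i.e. $O(\#\text{questions})$ sets; in the ``no'' case soundness forces all but a small fraction of edges to carry no consistent answer pair, and the partition-system property then forces the total cover to exceed $(\#\text{questions})$ by an $\Omega(\log n)$ factor, via a counting argument that charges the cost of covering each $B_e$ to its two endpoint questions.

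The main obstacle is the simultaneous tuning of parameters so this closes: one must choose $m$, $d$, and the soundness bound so that $\log m$ stays a constant fraction of $\log n$, where $n=m\cdot(\#\text{edges})$ is the universe size, while ensuring that the ``never uses a whole partition'' hypothesis of the partition-system property genuinely holds in the ``no'' case---this is what forces the block structure of each $B_e$ to be keyed to the projection of its edge, so that a would-be cheap cover on an inconsistent edge cannot sneak in a full partition. The constant $\tfrac14$ (rather than the optimal $1$) is just the accumulation of several constant-factor losses: soundness only makes a constant fraction of edges ``expensive'', the universe is polynomially rather than linearly larger than a single $B_e$, and the ``yes''-case cover is paid for both provers. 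Optimizing these away is the business of the later sections; here it is enough to get the reduction off the ground.
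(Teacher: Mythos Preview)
Your architecture matches the paper's exactly: a two-prover one-round system for an $\text{NP}$-complete language with sub-constant soundness, a set-system gadget glued to each seed/edge, and a good-edge/bad-edge counting argument that charges cover cost back to questions. The quasi-polynomial blowup and the resulting $\text{DTIME}(n^{\text{polylog}(n)})$ hypothesis are also correctly identified. But your description of the gadget contains a real error, and the slip propagates.

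You state the partition-system property as: any cover that \emph{never uses all $d$ blocks of a single partition} needs $(1-o(1))\,d\ln m$ blocks. That is false once $d>2$: take $d-1$ blocks from each of $t=\lceil\log_d m\rceil$ independent random partitions; the $t$ missing blocks meet in $\sim m/d^{t}<1$ points, so roughly $(d-1)\ln m/\ln d\ll d\ln m$ blocks already cover $B$. The hypothesis that actually forces $\Theta(d\ln m)$ blocks is the much stronger ``at most one block per partition,'' and the two coincide only at $d=2$. Relatedly, you say each answer occupies one block and a consistent \emph{pair} fills a whole partition---this forces $d=2$, contradicting your ``$d$ grows slowly.'' The paper in fact keeps the block count equal to $2$ throughout: its \emph{special set system} is a family $C_1,\dots,C_m\subseteq B$ with the property that no $d'$ of the sets $C_i,\overline{C_i}$ can cover $B$ unless some $C_j$ and its complement both appear; what grows is the threshold $d'$, not the block count. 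Prover~1's answers sit in the $\overline{C_a}$'s and prover~2's in the $C_a$'s, so a consistent pair yields $C_a\cup\overline{C_a}=B$ with two sets. The constant $\tfrac14$ then has two concrete sources, neither of which is quite the one you name: the bad-edge cost $>d'$ is shared between its two endpoint questions, giving a gap of $d'/2$; and the deterministic construction has $|B|=\Theta(2^{2d'}m^2)$, so $d'\approx\tfrac12\log|B|$, whence $d'/2\approx\tfrac14\log N$.
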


The proof of this result is nontrivial, relying on a modification of a Feige-Lov\'asz proof system for \textsc{SAT}, originally developed in \cite{FL-1992}. This section presents a simplified, descriptive version of the proof. We omit complex proofs of background information and treat them as black boxes to present only the core logic behind the key result.

\subsection{Preliminary Constructions}
In this section, we will describe two critical constructions which are necessary to complete the proof.

\subsubsection{Feige-Lov\'asz Verifier for \textsc{SAT}}
An interactive proof system is defined as follows:

\begin{definition}[Interactive Proof System]
\label{def:ips}
Fix an input size $n$. A \emph{two-prover, one-round interactive proof system} is a system which takes in an input of size $n$ and outputs \textsf{Yes} or \textsf{No}. It has two kinds of components: a \emph{verifier} and \emph{provers}.
\begin{itemize}
\item A \emph{verifier} consists of the following:
\begin{itemize}
\item A finite input alphabet $\Sigma$.
\item A finite set $R$ of random seeds of length $\mathcal{O}(\text{polylog}(n))$.
\item A finite set of queries of length $\mathcal{O}(\text{polylog}(n))$ that can be asked to each prover. In this case, there will be two provers, so the verifier will have two such sets, $Q_1$ and $Q_2$.
\item A finite set of answers of length $\mathcal{O}(\text{polylog}(n))$ that can be received from each prover. Again, there will be two such sets in this case, $A_1$ and $A_2$.
\item A polynomial-time computable function $f : \Sigma^n \times R \to Q_1 \times Q_2$, which generates a pair of queries based on a random seed and an input.
\item A polynomial-time computable function $\Pi : \Sigma^n \times R \times A_1 \times A_2 \to \{ \textsf{Yes}, \textsf{No} \}$ which determines the output based on the input, seed, and query answers.
\end{itemize}
\item The $i$th \emph{prover} in the system is a function from $Q_i$ to $A_i$.
\end{itemize}
\end{definition}

The verifier takes an input $x \in \Sigma^n$, and selects a random seed $r$ uniformly randomly from $R$. It computes $f(x, r)$ and receives a query pair $(q_1, q_2)$. It then provides these two queries to the provers, who in turn provide the answer pair $(a_1, a_2)$; this process of sending queries and receiving answers is known as \emph{message passing}, and the ``one-round'' designation of the prover refers to using only one iteration of message passing. After this, the verifier computes $\Pi (x, r, a_1, a_2)$, and accepts or rejects the input based on the result. A diagram of the process can be seen in Figure \ref{fig:verifier}.

\begin{figure}[tb]
\centering
\resizebox{\textwidth}{!}{%
\begin{tikzpicture}[node distance=3cm, font=\small, thick, >=latex]

\node[inputconstruction]
	(Input)
	{\textbf{Input} $\varphi$};

\node[element, below right of = Input, xshift=0.75cm]
	(Question1)
	{Query $q_1$};
	
\node[element, below of = Question1, yshift=-0.5cm]
	(Question2)
	{Query $q_2$};

\node[construction, below left of = Question2, xshift=-0.75cm, yshift=0.5cm]
	(Seed)
	{\textbf{Random Seed} $r$};

\node[construction, right of = Question1, yshift=1.5cm]
	(Prover1)
	{\textbf{Prover} $P_1$};

\node[construction, right of = Question2, yshift=1.5cm]
	(Prover2)
	{\textbf{Prover} $P_2$};

\node[goalelement, right of = Prover1, yshift=-1.5cm]
	(Answer1)
	{Answer $a_1$};

\node[goalelement, right of = Prover2, yshift=-1.5cm]
	(Answer2)
	{Answer $a_2$};
	
\path (Input) -- (Seed) node[midway] (FunctionF) {$f(\varphi, r)$};
\path (Answer1) -- (Answer2) node[coordinate,midway] (Dummy2) {};

\node[coordinate, right of = FunctionF, xshift=-2cm]
	(Dummy1)
	{};

\node[right of = Dummy2]
	(FunctionPi)
	{$\Pi (\varphi, r, a_1, a_2)$};

\node[goalconstruction, right of = Input, xshift = 10.2cm]
	(Output)
	{\textbf{Output} \textsf{Yes}/\textsf{No}};

\draw[->] (Input.south) -- (FunctionF);
\draw[->] (Seed.north) -- (FunctionF);
\draw (FunctionF.east)  -- (Dummy1);
\draw[->] (Dummy1) |- (Question1);
\draw[->] (Dummy1) |- (Question2);

\draw[->] (Question1.north) |- (Prover1);
\draw[->] (Question2.north) |- (Prover2);

\draw[->] (Prover1.east) -| (Answer1);
\draw[->] (Prover2.east) -| (Answer2);

\draw[->] (Answer1.east) -| (FunctionPi);
\draw[->] (Answer2.east) -| (FunctionPi);

\draw[->] (FunctionPi.east) -| (Output);

\end{tikzpicture}
}%
\caption{A standard two-prover one-round proof system.}
\label{fig:verifier}
\end{figure}
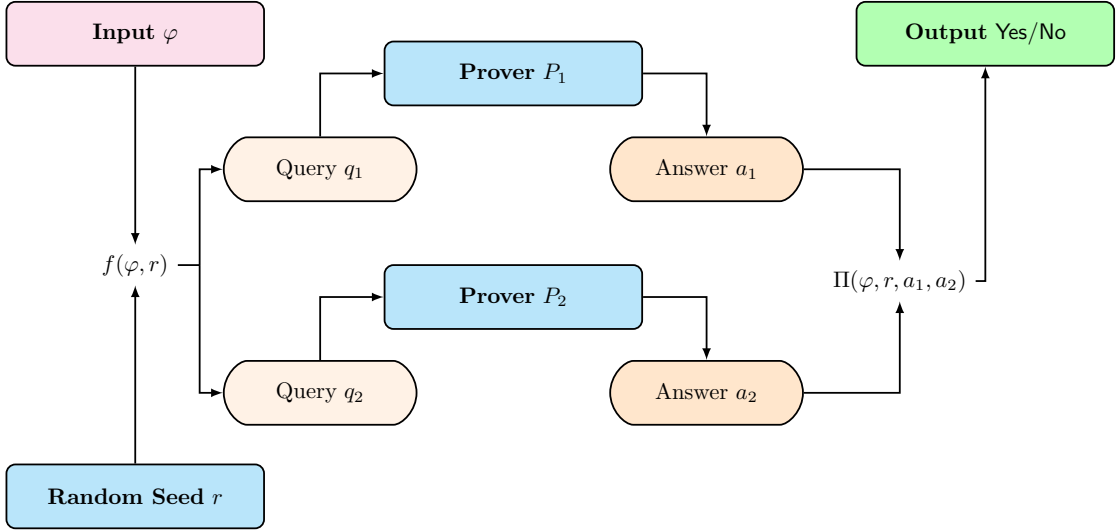

A common example of such a proof system would be a probabilistic \textsc{3SAT} verifier. Such a verifier takes in a formula $\varphi$. It selects a random clause $C$ and a random integer $i \in \{1, 2, 3\}$. The verifier asks prover 1 what assignment of variables, if any, satisfies clause $C$, and then asks prover 2 for an assignment of the $i$th variable of $C$, either 0 or 1 (corresponding to False and True). Note here that the answer sets would be $A_1 = \{ 0, 1 \}^3$ and $A_2 = \{ 0, 1 \}$. After receiving both answers, the verifier then outputs \textsf{Yes} if and only if both verifiers agree on their assignment of variable $i$. It is clear that this system is not deterministic; observe that no strategy of the provers can correctly classify both the formula $x_1 \lor x_2 \lor x_3$ and the formula $\neg x_1 \lor \neg x_2 \lor \neg x_3$ with probability 1.

Lund and Yannakakis, building on the work of Feige and Lov\'asz in \cite{FL-1992}, described a two-prover, one-round interactive proof system for \textsc{SAT} in \cite{LY-1994}, which was subsequently used as a component of the gap-preserving reduction to prove \setcover{} inapproximable. Their result is as follows:

\begin{theorem}[Proof System for \textsc{SAT}]
\label{thm:satips}
There exists a two-prover, one-round interactive proof system for SAT that takes an input formula $\varphi$ and has the following properties:
\begin{itemize}
\item If $\varphi \in \textsc{SAT}$, then there exists a pair of provers such that the verifier will always output \textsf{Yes}.
\item If $\varphi \notin \textsc{SAT}$, then for all pairs of provers, the verifier accepts with probability at most $\frac{1}{n}$, where $n$ is the input size.
\item For a fixed prover $P_i$, the verifier's queries to that prover are uniformly random across the possible query set $Q_i$.
\item The choice of $q_2$ is independent of the choice of $q_1$.
\item For any pair $(r, a_1)$ of a seed and an answer from prover 1, there is at most one $a_2 \in A_2$ that would result in a pair of accepting answers (i.e. $\Pi (\varphi, r, a_1, a_2) = \textsf{Yes}$).
\item $|Q_1| = |Q_2|$, meaning that the verifier has the same number of potential queries for each prover. (This can be assumed without loss of generality.)
\end{itemize}
\end{theorem}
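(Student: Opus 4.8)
The plan is to build the system in two stages: first obtain a \emph{basic} two-prover one-round game for \textsc{SAT} with perfect completeness and soundness bounded by an absolute constant $1 - \varepsilon_0 < 1$, and then amplify the gap to $1/n$ by running many copies in parallel. At each stage I would check off the six required properties, since all of them are either immediate from the construction or robust under the amplification.

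For the basic game I would take as a black box the polylogarithmic two-prover one-round proof system of Feige and Lov\'asz \cite{FL-1992} --- a scaled-down version of their $\mathrm{MIP} = \mathrm{NEXP}$ result, built on PCP-style arithmetization \cite{AS-1998, Arora-1994}. It turns a formula $\varphi$ into a game whose verifier uses $O(\log n)$ random bits, sends each of two provers a query of $O(\log n)$ bits, reads an answer of constant length from each, has perfect completeness, and has soundness at most $1 - \varepsilon_0$ whenever $\varphi \notin \textsc{SAT}$ --- and, crucially, it can be arranged so that the two queries are independent and marginally uniform on query sets of equal size, with an acceptance predicate that is a \emph{projection}: the seed $r$ and prover $1$'s answer $a_1$ together determine the unique $a_2$ that leads to acceptance. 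The intuition is the clause-versus-variable game --- prover $1$ is asked for a satisfying assignment to a random clause, prover $2$ for the value of an associated variable, and the verifier checks that the two agree, so $(r, a_1)$ pins down the admissible $a_2$ --- but making the two queries genuinely \emph{independent} is what the more elaborate Feige--Lov\'asz encoding buys over the bare clause-variable game (where prover $2$'s variable always lies inside prover $1$'s clause); I would invoke their construction rather than rederive it. Equal query-set sizes can in any case be imposed by padding the smaller side with dummy queries that the verifier treats trivially.

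To amplify, run $k$ independent copies of the basic game at once: the verifier draws $k$ seeds, sends prover $1$ the $k$-tuple of its queries and prover $2$ the $k$-tuple of its queries, and accepts iff all $k$ coordinate checks accept. By the parallel repetition theorem (used as a black box), this drops the soundness from $1 - \varepsilon_0$ to $2^{-\Omega(k)}$, the hidden constant depending only on $\varepsilon_0$ and the constant answer length; choosing $k = \Theta(\log n)$ with a large enough constant brings the soundness below $1/n$. The seed set and query sets get raised to the $k$-th power, so seed and query lengths become $O(\log^2 n)$ and answer lengths $O(\log n)$, all still $\mathrm{polylog}(n)$. Perfect completeness survives because honest provers answer coordinatewise; marginal uniformity of each prover's query survives because a product of uniform distributions is uniform; independence of the two queries survives because it holds in each coordinate; $|Q_1| = |Q_2|$ survives as $|Q_1|^k = |Q_2|^k$; and the projection property lifts coordinatewise, since fixing $(r, a_1)$ fixes the admissible answer in every coordinate and hence the admissible $a_2$.

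The substance of the argument is entirely inside the two black boxes, and the parallel repetition theorem is where I expect the real difficulty: the naive guess that $k$ repetitions shrink the acceptance probability to $(1 - \varepsilon_0)^k$ is \emph{false} for two-prover one-round games, so establishing even the exponential-decay bound invoked above is a hard theorem (with the weaker amplification bounds available to Feige--Lov\'asz and to Lund--Yannakakis at the time still sufficing for $1/n$ soundness with polylogarithmic parameters). The only other delicate point is verifying that the independence-of-queries property really can be obtained at the base level without spoiling perfect completeness or the constant gap, which is exactly why I would anchor the construction in the Feige--Lov\'asz system rather than in the plain clause-variable game.
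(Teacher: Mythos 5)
The paper itself does not prove Theorem~\ref{thm:satips}; it explicitly treats it as a black box, citing \cite{LY-1994} (building on \cite{FL-1992}) and stating ``We will use the existence of such a proof system as a cornerstone of the reduction.'' So there is no internal proof for your proposal to be compared against, only a citation. Your sketch is therefore necessarily a reconstruction, and the route you choose --- a constant-gap base game made uniform, independent, and projection-structured, followed by $\Theta(\log n)$ rounds of parallel repetition --- is a genuinely different route from what the cited papers actually do. Feige and Lov\'asz obtain their low-error two-prover system for $\textsc{SAT}$ directly through algebraic arithmetization in the style of the $\mathrm{MIP}=\mathrm{NEXP}$ line (low-degree extensions, sum-check, consistency tests), not by first building a constant-gap game and then amplifying it; and Lund and Yannakakis, writing before Raz, could not have invoked the parallel repetition theorem that your argument leans on. You do flag this anachronism yourself, which is to your credit, and your coordinatewise checks that uniformity, independence, the projection property, and $|Q_1|=|Q_2|$ all survive $k$-fold repetition are correct.

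The one place where the proposal has a substantive soft spot is the base game itself. You claim, as a black box, a two-prover one-round game for $\textsc{SAT}$ with perfect completeness, constant soundness gap, \emph{constant} answer length, independent and marginally uniform queries on equal-size query sets, and the projection (functionality) property, and you attribute this package to Feige--Lov\'asz. But the published \cite{FL-1992} system has polylogarithmic (not constant) answer length, and its error is already subconstant rather than merely bounded away from $1$; it is not the constant-gap object you describe. The natural constant-answer constant-gap game (the clause--variable game, which you discuss) lacks query independence, and you correctly observe this, but you then hand the independence problem back to Feige--Lov\'asz without establishing that the fix preserves the constant answer length and constant gap your amplification step needs. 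This is the one genuinely unsubstantiated link in the chain: either you start from the actual FL system (in which case your parameters and the role of repetition need to be restated, and the Raz constant's dependence on alphabet size must be tracked, since $|\Sigma|$ is now polylogarithmic), or you start from a hypothetical constant-answer game whose existence with all the required structural properties you have not argued. Since the paper delegates this exact point to the literature, your proposal is consistent in spirit with the paper's treatment, but it should not be read as a self-contained alternative proof.
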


We will use the existence of such a proof system as a cornerstone of the reduction. Note that the input size is fixed for verifiers of this nature, so we simply select an appropriate verifier to accommodate the size of our input. In this case, the size may be considered to be the number of literals in the formula.

\subsubsection{Special Set System}
Another critical construction is a special set system (which is a variant of an $(n, k)$-universal set as defined in \cite{NSS-1995} and described in Definition \ref{def:nkuniversalset}), defined as follows:

\begin{definition}[Special Set System]
\label{def:specialset}
A \emph{special set system} $\beta_{m, d} = \{B; C_1, C_2, \dots, C_m \}$ consists of the following:
\begin{itemize}
\item A set $B$, referred to as the \emph{universe} of the set system.
\item $m$ subsets $C_1, C_2, \dots, C_m \subseteq B$, referred to as \emph{special sets}.
\end{itemize}
Additionally, it has the following property, known as its \emph{special property}:
\begin{itemize}
\item No collection of $d$ or fewer indices $i \in \{1, 2, \dots, m\}$ satisfies $\bigcup_{i} D_i = B$, where $D_i = C_i$ or $\overline{C_i}$.
\end{itemize}
\end{definition}

Lund and Yannakakis prove the following lemma:

\begin{lemma}
For any positive integers $m$ and $d$, there exists a special set system $\beta_{m, d}$ with those parameters, and its universe $B$ has $|B| = \mathcal{O}(2^{2d}m^2)$. Moreover, such a set system can be explicitly constructed for any $m, d$ in $\text{DTIME}(n^{\text{polylog}(n)})$.
\end{lemma}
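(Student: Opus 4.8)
The statement is a probabilistic-construction claim: for any $m,d$ we need a universe $B$ of size $O(2^{2d}m^2)$ together with sets $C_1,\dots,C_m$ such that no $d$ of them, each used either complemented or not, cover $B$. I would prove existence via a counting (or Lov\'asz Local Lemma / union-bound) argument and then separately address the explicit construction.

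\section*{Proof proposal}

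The plan is to prove existence by the probabilistic method and then derandomize, or replace the random step with an algebraic construction, to obtain the explicit bound. First I would set up the random model: fix $|B| = t$ for a value $t = \Theta(2^{2d}m^2)$ to be pinned down, and choose each $C_i$ by placing each element of $B$ into $C_i$ independently with probability $\tfrac12$. Fix a \emph{violation pattern}: a choice of $d$ distinct indices $i_1,\dots,i_d$ together with a sign vector deciding, for each, whether $D_{i_j} = C_{i_j}$ or $D_{i_j} = \overline{C_{i_j}}$. For a fixed element $b \in B$, the probability that $b$ is missed by all $d$ of the $D_{i_j}$ is $2^{-d}$, since each $D_{i_j}$ contains $b$ with probability exactly $\tfrac12$ independently; hence the probability that this particular pattern covers $B$ is $(1 - 2^{-d})^{t}$. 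The number of violation patterns is $\binom{m}{d} 2^{d} \le (2m)^d$. By the union bound, the probability that \emph{some} pattern covers $B$ is at most $(2m)^d (1 - 2^{-d})^{t} \le (2m)^d \exp(-t\,2^{-d})$. Choosing $t$ so that $t\,2^{-d} \ge d \ln(2m) + 1$, i.e. $t = O(2^d \cdot d\log m)$, already makes this $< 1$, so a special set system exists with $|B| = O(2^d d \log m)$, which is comfortably within the claimed $O(2^{2d}m^2)$ (the paper's bound is deliberately loose; I would just remark that the cruder bound suffices, or match it exactly if the downstream argument wants a specific shape).

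For the explicit construction in $\mathrm{DTIME}(n^{\mathrm{polylog}(n)})$, I would replace the random choice with the standard algebraic/combinatorial-design route used for $(n,k)$-universal sets (cf.\ \cite{NSS-1995}): identify $B$ with a suitable index set and define the $C_i$ through low-degree polynomials or a small-bias / perfect-hash-family construction so that every $d$-subset of coordinates, with every sign pattern, realizes all $2^d$ local configurations at some point of $B$ — in particular the all-missed configuration, which prevents coverage. Concretely, a $d$-wise independent family of $m$ bit-strings of length $t$ has exactly the property that for any $d$ coordinates every pattern in $\{0,1\}^d$ occurs, and such families admit explicit constructions of size $t = m^{O(\log d)}$ or, via BCH codes, $t = O(2^{2d}\cdot(\text{poly}))$ matching the stated exponent $2^{2d}$; enumerating and verifying takes quasipolynomial time, which is exactly the stated resource bound. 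The main obstacle is the derandomization/explicitness step rather than the existence bound: one must choose a construction whose size is genuinely $O(2^{2d}m^2)$ (not merely quasipolynomial in $m$ for fixed $d$) and simultaneously has the strong ``all sign patterns realized'' property, not just ordinary set coverage. I would handle this by citing the $(n,k)$-universal set machinery of \cite{NSS-1995} essentially as a black box (consistent with the paper's stated methodology of treating preliminary results as black boxes), checking only that its guarantee implies the special property in Definition \ref{def:specialset} and that its size and construction-time bounds are as claimed.
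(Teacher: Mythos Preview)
The paper does not actually prove this lemma: immediately after stating it, the text says only that ``the construction given by Lund and Yannakakis results in a set system whose elements are bit-vectors'' and that ``for the purposes of the proof, the actual elements contained in the set system are irrelevant; only the special property is required.'' In keeping with the paper's stated methodology, the lemma is black-boxed to \cite{LY-1994}. So there is no proof in the paper to compare against, only the citation.

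That said, your existence argument is correct, and in fact it recovers precisely the \emph{randomized} bound $|B| = O(2^d d\log m)$ that the paper itself quotes in Section~2.3 when discussing the ZTIME variant. Your identification of the special set system with an $(m,d)$-universal set is also exactly right: with $B$ viewed as a family of $\{0,1\}^m$-vectors and $C_i = \{b\in B: b_i = 1\}$, the special property of Definition~\ref{def:specialset} is literally the statement that every $d$-subset of coordinates realizes every pattern in $\{0,1\}^d$.

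The one place your sketch is loose is the explicit construction matching the specific bound $O(2^{2d}m^2)$. Citing \cite{NSS-1995} is anachronistic (it postdates \cite{LY-1994}) and yields a different, in fact better, size $2^d d^{O(\log d)}\log m$; and your parenthetical claim that BCH-based $d$-wise independence gives $t = O(2^{2d}\cdot\mathrm{poly})$ is not right as stated --- the standard BCH/linear-algebraic constructions of $d$-wise independent spaces over $\{0,1\}^m$ have size polynomial in $m^{d}$ (or $m^{\lfloor d/2\rfloor}$), with the dependence on $d$ in the exponent of $m$, not isolated as a $2^{2d}$ factor. The $O(2^{2d}m^2)$ bound comes from the specific deterministic construction in \cite{LY-1994}; if you want to match the lemma exactly rather than merely establish existence, you would need to reproduce or cite that construction rather than the later \cite{NSS-1995} machinery. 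For the purposes of this survey paper, simply black-boxing to \cite{LY-1994} as the paper does is the intended route.
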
 

The construction given by Lund and Yannakakis results in a set system whose elements are bit-vectors. For the purposes of the proof, the actual elements contained in the set system are irrelevant; only the special property is required.

One crucial corollary follows from taking the contrapositive of the special property:

\begin{corollary}
\label{cor:specialcor}
Let $\beta_{m, d} = \{B; C_1, C_2, \dots, C_m \}$ be a special set system, and let $A \subseteq \{C_1, \dots, C_m, \overline{C_1}, \dots, \overline{C_m}\}$ be a collection of $C_i$'s and their complements, with $|A| \leq d$. Then for some $j \in \{ 1, 2, \dots, m\}$, both $C_j$ and $\overline{C_j}$ are in $A$.
\end{corollary}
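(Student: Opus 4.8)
The plan is to obtain the corollary directly from the special property of $\beta_{m,d}$ by contraposition, the only real work being to reconcile the two ways of describing a signed sub-collection. Throughout, the union condition $\bigcup_{D\in A} D = B$ is in force --- this is the setting in which the corollary is meaningful and in which it is applied (a sub-collection of size at most $d$ that fails to cover $B$ clearly need not contain a complementary pair) --- so concretely I would prove: if $A \subseteq \{C_1,\dots,C_m,\overline{C_1},\dots,\overline{C_m}\}$ satisfies $|A| \le d$ and $\bigcup_{D\in A} D = B$, then $C_j \in A$ and $\overline{C_j} \in A$ for some index $j$.

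I would argue by contradiction. Suppose no index $j$ has both $C_j$ and $\overline{C_j}$ in $A$. Then for each $j \in \{1,\dots,m\}$ at most one of $C_j,\overline{C_j}$ belongs to $A$, so sending each member $D$ of $A$ to the index $i$ with $D \in \{C_i,\overline{C_i}\}$ is a well-defined \emph{injection} of $A$ into $\{1,\dots,m\}$. Writing $k = |A| \le d$, this presents $A$ as $\{D_{i_1},\dots,D_{i_k}\}$ for $k$ distinct indices $i_1,\dots,i_k$ together with a choice $D_{i_\ell} \in \{C_{i_\ell},\overline{C_{i_\ell}}\}$ for each $\ell$. Now this is exactly a collection of $k \le d$ indices equipped with a selection $D_i = C_i$ or $\overline{C_i}$, and by hypothesis $\bigcup_{\ell=1}^{k} D_{i_\ell} = \bigcup_{D\in A} D = B$ --- precisely the configuration the special property rules out. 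This contradiction forces some $j$ to have both $C_j$ and $\overline{C_j}$ in $A$, which is the claim.

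I do not expect a genuine obstacle here: the corollary really is nothing more than the contrapositive of the definition of the special property. The one point to phrase carefully is the translation used above --- the special property quantifies over ``a collection of $d$ or fewer indices $i$ with $D_i = C_i$ or $\overline{C_i}$,'' whereas the corollary quantifies over ``a size-$\le d$ subset of $\{C_1,\dots,\overline{C_m}\}$'' --- and the bridge between them is exactly the observation that a set of signed sets corresponds to such an index collection precisely when it contains no complementary pair. Once that correspondence is made explicit, the proof is immediate.
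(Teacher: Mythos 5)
Your proposal is correct and takes essentially the same route as the paper, which gives no separate argument and simply treats the corollary as the contrapositive of the special property; your explicit injection from a pair-free collection $A$ to a set of at most $d$ distinct indices with a sign choice is exactly the translation the paper leaves implicit. You were also right to supply the covering hypothesis $\bigcup_{D \in A} D = B$, which the paper's statement omits but clearly intends and uses in the proof of Corollary \ref{cor:setcontain}.
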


We will use this corollary to ensure the presence of the gap in our reduction.

\subsection{The Proof}

The remainder of Section \ref{sec:ly} will be devoted to proving the following lemma using the proof structure of \cite{LY-1994}, and highlighting the critical techniques therein.

\begin{lemma}
\label{lem:mainclaimly}
For a \setcover{} instance $\mathcal{S}$, let $\textsf{OPT}(\mathcal{S})$ denote the smallest number of sets needed to fully cover the universe of $\mathcal{S}$. Then, for any CNF formula $\varphi$, we can construct an instance of \setcover{} $\mathcal{S}_\varphi$ in $\text{DTIME}(n^{\text{polylog}(n)})$ such that:
\begin{itemize}
\item If $\varphi \in \textsc{SAT}$, then $\textsf{OPT}(\mathcal{S}_\varphi) = |Q_1| + |Q_2|$, where $Q_i$ is the set of queries that may be asked to prover $i$ in an appropriate Feige-Lov\'asz \textsc{SAT} proof system for $\varphi$, and
\item If $\varphi \notin \textsc{SAT}$, then $\textsf{OPT}(\mathcal{S}_\varphi) \geq c \log N (|Q_1| + |Q_2|)$ for any $0 < c < \frac{1}{4}$.
\end{itemize}
\end{lemma}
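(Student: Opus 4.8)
The plan is to build the \setcover{} instance $\mathcal{S}_\varphi$ directly from the Feige--Lov\'asz verifier of Theorem~\ref{thm:satips} together with a special set system $\beta_{m,d}$ of Definition~\ref{def:specialset}, where we set $m = |A_1|$ (the number of possible answers from prover~1) and $d = \lfloor c' \log N\rfloor$ for a suitable constant $c'$ depending on the target $c$, with $N$ the final universe size. First I would lay out the universe: take $|R|$ disjoint copies $B^{(r)}$ of the universe $B$ of $\beta_{m,d}$, one for each random seed $r \in R$, so that $N = |R|\cdot|B| = \text{poly}(n)\cdot\mathcal{O}(2^{2d}m^2)$; choosing $d$ logarithmic in $N$ keeps $|B|$ polynomial and the whole construction runs in $\text{DTIME}(n^{\text{polylog}(n)})$. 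Then I would define the candidate sets. For each query $q_1 \in Q_1$ and each answer $a_1 \in A_1$, introduce a set $S_{q_1,a_1}$; for each $q_2 \in Q_2$ and each answer $a_2 \in A_2$, introduce a set $S_{q_2,a_2}$. The content of these sets in copy $B^{(r)}$ is governed by the verifier: if $f(\varphi,r) = (q_1,q_2)$, then $S_{q_1,a_1}$ contributes the block $C_{a_1} \subseteq B^{(r)}$ (identifying answers of prover~1 with indices $1,\dots,m$ of the special set system), and $S_{q_2,a_2}$ contributes $\overline{C_{a_1}} \subseteq B^{(r)}$ for the unique $a_1$ (if any) with $\Pi(\varphi,r,a_1,a_2) = \textsf{Yes}$, using the ``at most one accepting $a_2$'' property to make this well-defined; if $q_1$ or $q_2$ is not the query generated by seed $r$, the set contributes nothing to $B^{(r)}$. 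The key design point is that copy $B^{(r)}$ is covered cheaply exactly when the sets we pick encode a pair of answers the verifier accepts on seed $r$, because only then do we get some $C_{a_1}$ together with its complement $\overline{C_{a_1}}$, which covers all of $B^{(r)}$.

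Next I would prove the completeness direction. If $\varphi \in \textsc{SAT}$, fix the honest provers $P_1, P_2$ from Theorem~\ref{thm:satips}; the cover consists of $S_{q_1, P_1(q_1)}$ for every $q_1 \in Q_1$ and $S_{q_2, P_2(q_2)}$ for every $q_2 \in Q_2$, giving exactly $|Q_1| + |Q_2|$ sets. For each seed $r$ with $f(\varphi,r) = (q_1,q_2)$, since the verifier always accepts, $\Pi(\varphi,r,P_1(q_1),P_2(q_2)) = \textsf{Yes}$, so these two chosen sets contribute $C_{P_1(q_1)}$ and $\overline{C_{P_1(q_1)}}$ to $B^{(r)}$, whose union is all of $B^{(r)}$; hence every copy is covered. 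That this is optimal (not merely an upper bound) needs a short argument: because the queries to each prover are uniform over $Q_i$, every query index appears, and one can check no cover can use fewer than $|Q_1| + |Q_2|$ sets — but actually for the lemma only the upper bound $\textsf{OPT}(\mathcal{S}_\varphi) \le |Q_1|+|Q_2|$ is logically needed in the yes-case and the matching lower bound in the no-case does the real work, so I would be careful to state precisely what is used. (Here I would lean on the $|Q_1| = |Q_2|$ normalization to keep the counting symmetric.)

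The soundness direction is the heart of the proof and the main obstacle. Suppose $\varphi \notin \textsc{SAT}$ and, for contradiction, that some cover $\mathcal{C}$ uses fewer than $c\log N\,(|Q_1|+|Q_2|)$ sets. The averaging step: for a uniformly random seed $r$, the expected number of sets in $\mathcal{C}$ of the forms $S_{q_1,\cdot}$ with $q_1$ the first component of $f(\varphi,r)$, or $S_{q_2,\cdot}$ with $q_2$ the second component, is at most $(|\mathcal{C}_1|/|Q_1| + |\mathcal{C}_2|/|Q_2|)$ by uniformity of the queries and independence of $q_2$ from $q_1$; so for a good fraction of seeds $r$, the number of ``relevant'' sets touching $B^{(r)}$ is at most $d$. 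For such an $r$, the only sets that put anything into $B^{(r)}$ are the relevant ones, and each contributes either some $C_{a_1}$ (from a prover-1 set) or some $\overline{C_{a_1}}$ (from a prover-2 set, via the unique accepting $a_1$). Since $B^{(r)}$ is covered by at most $d$ sets drawn from $\{C_i\} \cup \{\overline{C_j}\}$, Corollary~\ref{cor:specialcor} forces some index $j$ with both $C_j$ and $\overline{C_j}$ among them — i.e., a prover-1 set answering $a_1 = j$ and a prover-2 set whose unique accepting answer is also $j$. Reading this back through the verifier: defining provers that, on query $q_i$, answer according to one of the sets $\mathcal{C}$ selected for that query (picking arbitrarily if several, or none if $\mathcal{C}$ selected none), we obtain prover strategies that make the verifier accept on every such good seed $r$, hence with probability bounded below by a constant — contradicting the $\tfrac1n$ soundness of Theorem~\ref{thm:satips} once $n$ is large. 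Unwinding the constants (how the fraction of good seeds, the choice $d = \lfloor c'\log N\rfloor$, and the soundness error $\tfrac1n$ interact) is exactly where the $c < \tfrac14$ threshold emerges, and tracking those inequalities carefully is the delicate part of the argument.
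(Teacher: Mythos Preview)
Your overall strategy matches the paper's: build the universe as $R \times B$ from a special set system, define one candidate set per query--answer pair so that copy $B^{(r)}$ is cheaply covered exactly when the chosen sets encode an accepting answer pair for seed $r$, then in the soundness direction combine an averaging argument with Corollary~\ref{cor:specialcor} to extract good prover strategies. Two points need correction.

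First, your indexing of the special sets is backwards. The uniqueness clause in Theorem~\ref{thm:satips} says that for each $(r,a_1)$ there is at most one accepting $a_2$, not the converse; the map $(r,a_2) \mapsto a_1$ you invoke when defining $S_{q_2,a_2}$ need not be single-valued, so your construction is not well-defined as written. The paper takes $m = |A_2|$ and indexes the special sets by prover-2 answers: then $S_{q_2,a_2}$ contributes $C_{a_2}$ to copy $r$ (trivially well-defined) while $S_{q_1,a_1}$ contributes $\overline{C_{a_2(r,a_1)}}$ (well-defined precisely by the stated uniqueness). Swapping the roles this way repairs your setup.

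Second, and more substantively, your strategy-extraction step has a real gap. You have each prover answer query $q$ with an arbitrary $a$ for which $S_{q,a} \in \mathcal{C}$, and claim this makes the verifier accept on every good seed. That does not follow: two good seeds $r, r'$ may generate the same query $q_1$ but require different answers $a_1 \ne a_1'$ to form the accepting pair that Corollary~\ref{cor:specialcor} locates, so a single arbitrary choice can fail on many good seeds. The paper resolves this by defining, for each prover $i$, a family of $d$ strategies $P_{i,1},\dots,P_{i,d}$ (the $j$th answers $q$ with the $j$th set $S_{q,\cdot}$ appearing in $\mathcal{C}$) and arguing by pigeonhole that some pair $(P_{1,j}, P_{2,j'})$ makes the verifier accept with probability at least $\delta/d^2$, where $\delta$ is the fraction of good seeds. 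Combining $\delta/d^2 \le 1/n$ with the cover-size lower bound $|\mathcal{C}| \ge (1-\delta)\tfrac{d}{2}(|Q_1|+|Q_2|)$ and the estimate $|B| = \mathcal{O}(2^{2d}m^2)$ is exactly what forces the constant down to $\tfrac14$; the $d^2$ loss is essential to the arithmetic and cannot be elided.
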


This would complete a gap-inducing reduction from \textsc{SAT} to \setcover{}. Because \textsc{SAT} is NP-complete, this would prove that \setcover{} is inapproximable within $c \log(n) \cdot \textsf{OPT}$ for any $0 < c < \frac{1}{4}$, unless $\text{NP} \subseteq \text{DTIME}(n^{\text{polylog}(n)})$. The second possibility is because the special set system construction can be done in $\text{DTIME}(n^{\text{polylog}(n)})$ as shown in \cite{LY-1994}. This would subsequently prove Theorem \ref{thm:inapproxly}.

\subsubsection{Initial Setup}
Consider an input formula $\varphi$ of length $n$, and an appropriate Feige-Lov\'asz verifier for \textsc{SAT}. Recall that the verifier consists of sets $R, Q_1, Q_2, A_1,$ and $A_2$, and functions $f$ and $\Pi$, as in Definition \ref{def:ips}.

Construct the special set system $\beta_{m, d}$, where $m = |A_2|$ and $d$ is an even integer to be determined later. Since $m = |A_2|$, we can associate each special set $C_a$ with a corresponding answer $a \in A_2$. Define $S = R \times B$, where $R$ is the random seed set of the verifier and $B$ is the universe of $\beta_{m, l}$. The set $S$ will be the universe for our \setcover{} instance.

We next add one subset to our \setcover{} instance $\mathcal{S}_\varphi$ for each query-answer pair in $Q_1 \times A_1$ and $Q_2 \times A_2$. Note that ultimately the inapproximability result is independent of the number of subsets in the instance, but the actual construction of the appropriate verifier is the source of the number of subsets.  Let $q(r, i)$ be the query that the verifier asks prover $i$ when using seed $r$. Because $\varphi$ is an input to the query-generating function $f$, determining this query depends on $\varphi$. Let $a_2(r, a_1)$ be the answer that prover 2 can give such that $\Pi(\varphi, r, a_1, a_2(r, a_1))$ accepts, or \textsf{Null} if no such answer exists. Recall that Theorem \ref{thm:satips} guarantees that $a_2(r, a_1)$ is unique if it exists.

The sets corresponding to $Q_1 \times A_1$ are given by
$$ S_{q_1, a_1} = \{ (r, b) \in R \times B : q_1 = q(r, 1), \text{and } a_2(r, a_1) \text{ is non-\textsf{Null}, and } b \notin C_{a_2(r, a_1)}\} .$$

This is the set of all seed-element pairs in $R \times B$ where:
\begin{itemize}
\item $q_1$ is the question asked to prover 1 for input $\varphi$ and seed $r$;
\item There is an answer from prover 2 $a_2(r, a_1)$ that causes the verifier to accept when $a_1$ is the answer from prover 1; and
\item $b$ is \emph{not} in the special set corresponding to $a_2(r, a_1)$.
\end{itemize}

The sets corresponding to $Q_2 \times A_2$ are defined similarly:
$$ S_{q_2, a_2} = \{ (r, b) : q_2 = q(r, 2) \text{ and } b \in C_{a_2}\}.$$

This is the set of all seed-element pairs in $R \times B$ where:
\begin{itemize}
\item $q_2$ is the question asked to prover 2 for input $\varphi$ and seed $r$; and
\item $b$ \emph{is} in the special set corresponding to $a_2$.
\end{itemize}

Then the \setcover{} instance has universe set $U$ and all $S_{q_i, a_i}$ for $i = 1, 2$ as subsets. 
One important lemma arises from this construction. A diagram of the construction dependencies for the instance can be seen in Figure \ref{fig:construction}.

\begin{figure}[tb!]
\centering
\begin{tikzpicture}[node distance=1.5cm, font=\small,thick, >=Latex]

\node[goalconstruction]
	(SetCover)
	{\textbf{Set Cover Instance}};

\node[goalelement, below of = SetCover]
	(SetCoverU)
	{Universe Set};

\node[goalelement, below of = SetCoverU]
	(SetCoverS)
	{Subsets};

\node[construction, below left of = SetCover, xshift=-4cm] 
	(Verifier)
	{\textbf{Verifier for SAT}};

\node[element, below of = Verifier]
	(VerifierR)
	{Random Seeds};

\node[element, below of = VerifierR]
	(VerifierQ1)
	{Questions for $P_1$};

\node[element, below of = VerifierQ1]
	(VerifierQ2)
	{Questions for $P_2$};

\node[element, below of = VerifierQ2]
	(VerifierA1)
	{Answers from $P_1$};

\node[element, below of = VerifierA1]
	(VerifierA2)
	{Answers from $P_2$};

\node[construction, below right of = SetCover, xshift=4cm]
	(SpecialSets)
	{\textbf{Special Set System}};

\node[element, below of = SpecialSets]
	(SpecialSetsB)
	{Universe Set};

\node[element, below of = SpecialSetsB]
	(SpecialSetsC)
	{Subsets};

\path (VerifierQ2) -- (VerifierA1) node[coordinate,midway] (Dummy1Helper) {};

\node [coordinate, right of = Dummy1Helper, xshift = 1cm] (Dummy1) {};
\node [coordinate, right of = VerifierR, xshift = 1cm] (Dummy2) {};
\node [coordinate, left of = SpecialSetsB, xshift = -1cm] (Dummy3) {};

\draw[->] (VerifierA2) -| (SpecialSetsC);

\draw (VerifierQ1.east) -- (Dummy1);
\draw (VerifierQ2.east) -- (Dummy1);
\draw (VerifierA1.east) -- (Dummy1);
\draw (VerifierA2.east) -- (Dummy1);
\draw[->] (Dummy1) -| (SetCoverS);

\draw (SpecialSetsC.west) -| (SetCoverS);

\draw (VerifierR.east) -- (Dummy2);
\draw[->] (Dummy2) |- (SetCoverU);

\draw (SpecialSetsB.west) -- (Dummy3);
\draw[->] (Dummy3) |- (SetCoverU);

\end{tikzpicture}
\caption{A diagram of Lund and Yannakakis's proof structure.}
\label{fig:construction}
\end{figure}
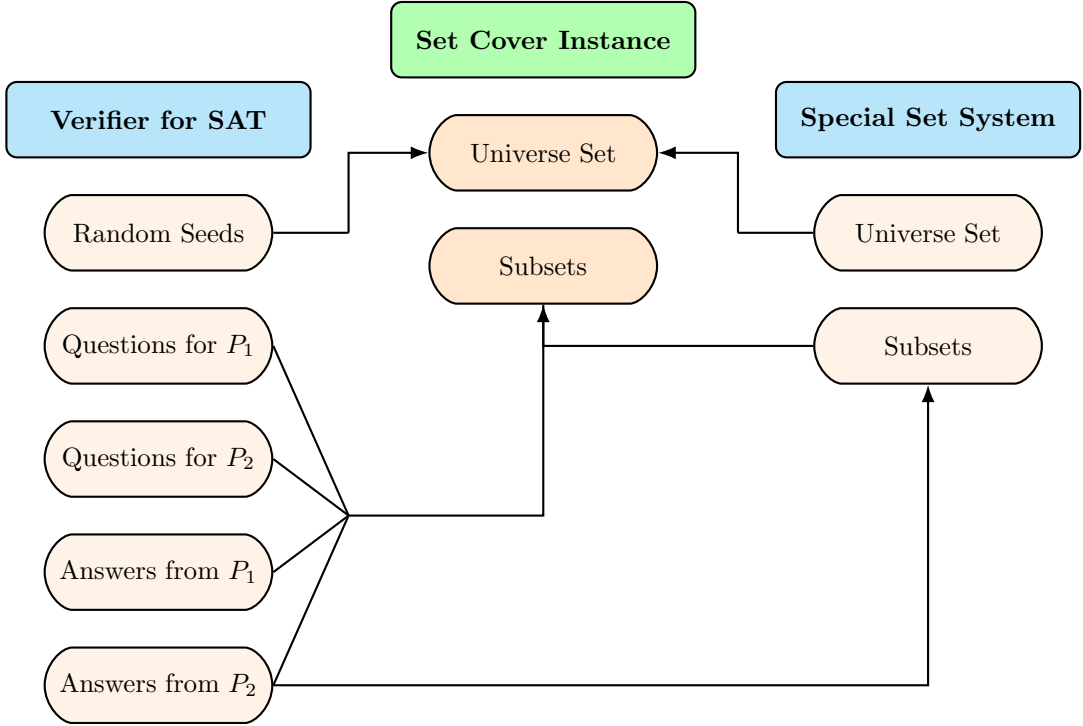

\begin{lemma}
\label{lem:coverseed}
For a fixed seed $r \in R$, we can cover all elements $\{(r, b)\}_{b \in B}$ with any two sets $S_{q(r, 1), a_1}$ and $S_{q(r, 2), a_2(r, a_1)}$.
\end{lemma}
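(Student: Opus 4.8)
The plan is to unfold the definitions of the two named sets and examine how each one intersects the ``slice'' $\{r\} \times B \subseteq S$. Fix a seed $r \in R$ and an answer $a_1 \in A_1$ for which $a_2(r, a_1)$ is non-\textsf{Null}, and write $a_2 := a_2(r, a_1)$ for brevity; this is the only case in which the pair of sets named in the statement is meaningful, since otherwise $S_{q(r,2),\, a_2(r,a_1)}$ is not defined. The goal is to show $S_{q(r,1), a_1} \cup S_{q(r,2), a_2} \supseteq \{r\} \times B$.

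First I would restrict $S_{q(r,1), a_1}$ to the slice. By definition, an element $(r, b)$ lies in $S_{q(r,1), a_1}$ iff (i) $q(r,1) = q(r,1)$, (ii) $a_2(r, a_1)$ is non-\textsf{Null}, and (iii) $b \notin C_{a_2(r, a_1)}$. Once the seed is fixed to $r$, condition (i) is a tautology and condition (ii) holds by our choice of $a_1$, so $S_{q(r,1), a_1} \cap (\{r\} \times B) = \{r\} \times \overline{C_{a_2}}$. Next I would do the same for $S_{q(r,2), a_2}$: an element $(r, b)$ lies in it iff $q(r,2) = q(r,2)$ (again automatic for seed $r$) and $b \in C_{a_2}$, so $S_{q(r,2), a_2} \cap (\{r\} \times B) = \{r\} \times C_{a_2}$.

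Taking the union of the two slices then gives $\{r\} \times \left( \overline{C_{a_2}} \cup C_{a_2} \right) = \{r\} \times B$, since $C_{a_2} \subseteq B$. Hence every element $(r, b)$ with $b \in B$ is covered by one of the two sets, which is exactly the claim.

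There is no genuinely hard step here: the lemma is a direct consequence of the fact that, after fixing the seed to $r$, the query-matching predicates in both set definitions become vacuous, and the two sets then contribute precisely a subset of $B$ and its complement within the slice. The only point requiring care is the bookkeeping around the \textsf{Null} case --- one must note that if $a_2(r, a_1)$ were \textsf{Null}, then $S_{q(r,1), a_1}$ would contribute nothing to the slice and $S_{q(r,2),\, a_2(r,a_1)}$ would be undefined, which is why the statement is to be read as ranging over those $a_1$ whose prover-2 response is non-\textsf{Null}.
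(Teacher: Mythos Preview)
Your proof is correct and follows essentially the same approach as the paper: both arguments fix $r$, assume $a_2(r,a_1)$ is non-\textsf{Null}, and observe that on the slice $\{r\}\times B$ the first set contributes $\{r\}\times\overline{C_{a_2}}$ while the second contributes $\{r\}\times C_{a_2}$. The paper phrases this as a two-case element-by-element check rather than computing the slice intersections explicitly, but the content is identical.
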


\begin{proof}[Proof of Lemma \ref{lem:coverseed}]
Fix a seed $r$. Assume $a_2(r, a_1)$ exists and consider the two sets $S_{q(r, 1), a_1}$ and $S_{q(r, 2), a_2(r, a_1)}$. Take a pair $(r, b)$. Then there are two cases:
\begin{itemize}
\item Case 1: $(r, b) \in S_{q(r, 1), a_1}$. Done.
\item Case 2: $(r, b) \notin S_{q(r, 1), a_1}$. Since $q(r, 1)$ is the question asked to prover 1 for seed $r$ and $a_2(r, a_1)$ exists, the only way this is possible is if $b \in C_{a_2(r, a_1)}$. But then, since $q(r, 2)$ is the question asked to prover 2 for seed $r$, and $b \in C_{a_2(r, a_1)}$, it follows that $(r, b) \in S_{q(r, 2), a_2(r, a_1)}$ by construction.
\end{itemize}
Since all $(r, b)$ must be in one of the two sets, their union must cover all of $\{(r, b)\}_{b \in B}$.
\end{proof}

An important corollary of Lemma \ref{lem:coverseed} arises, which allows us to conclude bounds on the sizes of optimal and suboptimal coverings and thus propagate the gap. The statement of the corollary is as follows:

\begin{corollary}
\label{cor:setcontain}
For a fixed seed $r \in R$, any covering which uses $d$ or fewer of the sets $S_{q_i, a_i}$ to cover $\{(r, b)\}_{b \in B}$ must contain two sets of the form $S_{q(r, 1), a_1}$ and $S_{q(r, 2), a_2(r, a_1)}$.
\end{corollary}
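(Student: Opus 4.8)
The plan is to reduce the statement to the \emph{special property} of $\beta_{m,d}$ via Corollary~\ref{cor:specialcor}. First I would observe that within the ``slice'' $\{(r,b)\}_{b\in B}$ attached to the fixed seed $r$, only sets of the form $S_{q(r,1),a_1}$ and $S_{q(r,2),a_2}$ can contain any point: by construction $(r,b)\in S_{q_1,a_1}$ forces $q_1=q(r,1)$, and $(r,b)\in S_{q_2,a_2}$ forces $q_2=q(r,2)$. Hence if a collection $\mathcal{A}$ of at most $d$ of the $S_{q_i,a_i}$ covers the slice, we may restrict attention to its sub-collection of sets of these two forms, and it still covers the slice.

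Next I would compute the ``footprint'' of each relevant set inside the slice. A set $S_{q(r,1),a_1}$ covers exactly the pairs $(r,b)$ with $b\notin C_{a_2(r,a_1)}$ when $a_2(r,a_1)$ is non-\textsf{Null}, and covers nothing in the slice otherwise; so its footprint is $\{r\}\times\overline{C_{a_2(r,a_1)}}$ or $\emptyset$. A set $S_{q(r,2),a_2}$ covers exactly the pairs $(r,b)$ with $b\in C_{a_2}$, so its footprint is $\{r\}\times C_{a_2}$. Discarding the prover-$1$ sets whose answer $a_2(r,a_1)$ is \textsf{Null} (they contribute nothing to the slice), I collect the corresponding subsets of $B$ into a family $A\subseteq\{C_1,\dots,C_m,\overline{C_1},\dots,\overline{C_m}\}$, identifying $A_2$ with $\{1,\dots,m\}$. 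Since $A$ is the image of a subset of $\mathcal{A}$ under the footprint map, $|A|\le|\mathcal{A}|\le d$; and since $\mathcal{A}$ covers the slice, $\bigcup A=B$.

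Now Corollary~\ref{cor:specialcor} applies to $A$ (which covers $B$): there is an index $j$ with both $C_j\in A$ and $\overline{C_j}\in A$. Tracing back the definition of $A$: $\overline{C_j}\in A$ means some set $S_{q(r,1),a_1}\in\mathcal{A}$ has $a_2(r,a_1)=j$, and $C_j\in A$ means $S_{q(r,2),j}\in\mathcal{A}$. Since $j=a_2(r,a_1)$, the set $S_{q(r,2),j}$ is precisely $S_{q(r,2),a_2(r,a_1)}$, so $\mathcal{A}$ contains the desired pair $S_{q(r,1),a_1}$ and $S_{q(r,2),a_2(r,a_1)}$, completing the proof.

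The main obstacle is the bookkeeping in the second step: one must check that $A$ is genuinely a collection of $C_i$'s and their complements, of size at most $d$, covering $B$, handling correctly the prover-$1$ sets with \textsf{Null} answer (which simply drop out) and the possibility that several members of $\mathcal{A}$ produce the same footprint (which only decreases $|A|$). Once the footprints are identified correctly, the special property does all the work — this is exactly where the gap is injected, dual to Lemma~\ref{lem:coverseed}, where the two well-chosen sets $S_{q(r,1),a_1}$ and $S_{q(r,2),a_2(r,a_1)}$ have footprints $\overline{C_{a_2(r,a_1)}}$ and $C_{a_2(r,a_1)}$, i.e. $A$ is the complementary pair $\{C_j,\overline{C_j}\}$ of size $2$.
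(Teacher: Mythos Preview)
Your proposal is correct and takes essentially the same approach as the paper: both arguments identify the footprint in $B$ of each covering set as a special set $C_{a_2}$ or a complement $\overline{C_{a_2(r,a_1)}}$ and then invoke the special property via Corollary~\ref{cor:specialcor}. The only differences are cosmetic---the paper phrases it as a proof by contradiction and leaves implicit the observations you spell out (that only sets with $q_i=q(r,i)$ touch the slice, that \textsf{Null}-answer sets drop out, and that repeated footprints only shrink $|A|$).
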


\begin{proof}[Proof of Corollary \ref{cor:setcontain}]
Assume not. Notice that sets $S_{q(r, 1), a_1}$ cover all pairs $(r, b)$ for which $b \in \overline{C_{a_2(r, a_1)}}$, which is the complement of a special set. Similarly, the sets $S_{q(r, 2), a_2}$ cover all pairs $(r, b)$ for which $b \in C_{a_2}$, which is a special set. By assumption, there is no pair of sets $S_{q(r, 1), a_1}$ and $S_{q(r, 2), a_2(r, a_1)}$ in the covering, so none of the sets $\overline{C_{a_2(r, a_1)}}$ are complements of any of the sets $C_{a_2}$.

However, this means that a collection of $d$ or fewer of the special sets $C_a$ and their complements $\overline{C_a}$ is able to cover all $b \in B$, since there is one pair $(r, b)$ for each $b \in B$. By Corollary \ref{cor:specialcor}, some set and its own complement must be in the collection, which contradicts the assumption. Therefore, a pair of sets of the form $S_{q(r, 1), a_1}$ and $S_{q(r, 2), a_2(r, a_1)}$ must be in the covering.
\end{proof}

With this knowledge, we can reformulate Lemma \ref{lem:mainclaimly}:

\begin{lemma}
\label{lem:refclaim}
Let $\mathcal{S}$ be a \setcover{} instance. Then, for any CNF formula $\varphi$, we can construct an instance of \setcover{} $\mathcal{S}_\varphi$ in $\text{DTIME}(n^{\text{polylog}(n)})$ such that:
\begin{itemize}
\item If $\varphi \in \textsc{SAT}$, then $\textsf{OPT}(\mathcal{S}_\varphi) = |Q_1| + |Q_2|$.
\item If $\varphi \notin \textsc{SAT}$, then $\textsf{OPT}(\mathcal{S}_\varphi) \geq c \log (|R \times B|) \cdot (|Q_1| + |Q_2|)$ for any $0 < c < \frac{1}{4}$.
\end{itemize}
\end{lemma}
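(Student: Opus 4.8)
The plan is to prove the two bullet points separately: a completeness direction, where a cover of size exactly $|Q_1|+|Q_2|$ exists when $\varphi\in\textsc{SAT}$, and a soundness direction, where every cover blows up by a $\Theta(\log N)$ factor when $\varphi\notin\textsc{SAT}$. Both directions are driven entirely by the per-seed analysis already packaged in Lemma~\ref{lem:coverseed} and Corollary~\ref{cor:setcontain}; the parameter $d$ of the special set system is the one free knob and I would fix it only at the very end, as an even integer whose size is a suitably small multiple of $\sqrt n$.

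For completeness, given $\varphi\in\textsc{SAT}$ I would fix a prover pair $(P_1,P_2)$ that always makes the verifier accept and take the $|Q_1|+|Q_2|$ sets $\{S_{q_1,P_1(q_1)}\}_{q_1\in Q_1}\cup\{S_{q_2,P_2(q_2)}\}_{q_2\in Q_2}$. For a seed $r$, writing $a_1=P_1(q(r,1))$, $a_2=P_2(q(r,2))$, the acceptance condition together with the uniqueness clause of Theorem~\ref{thm:satips} forces $a_2=a_2(r,a_1)$, so Lemma~\ref{lem:coverseed} says these two already-chosen sets cover $\{(r,b)\}_{b\in B}$; ranging over $r$ shows the collection is a valid cover, giving $\textsf{OPT}\le|Q_1|+|Q_2|$. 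For the matching lower bound (which in fact holds for every $\varphi$) let $n_i(q)$ count the sets of the form $S_{q,\cdot}$ in a cover $\mathcal C$. If $n_1(q_1^0)=0$ for some $q_1^0$, then by the independence of $q_1$ and $q_2$ every pair $(q_1^0,q_2)$ is realised by some seed, and for such a seed only the sets $S_{q_2,\cdot}$ (which cover exactly the $C_{a_2}$'s) are available, so $\bigcup C_{a_2}=B$ over the relevant $a_2$; the special property then forces $n_2(q_2)>d$ for \emph{every} $q_2$, whence $|\mathcal C|\ge|Q_2|(d+1)\ge|Q_1|+|Q_2|$ using $|Q_1|=|Q_2|$ and $d\ge 2$. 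The case $n_2(q_2^0)=0$ is symmetric (now via the complements $\overline{C_{a_2(r,a_1)}}$), and if no $n_i$ vanishes the bound $|\mathcal C|\ge|Q_1|+|Q_2|$ is immediate.

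For soundness I would argue the contrapositive: assume $\varphi\notin\textsc{SAT}$ yet some cover $\mathcal C$ satisfies $|\mathcal C|<c\log N\,(|Q_1|+|Q_2|)$ with $N=|R\times B|$, and build randomized provers that, on query $q_i$, answer a uniformly random element of $\{a_i:S_{q_i,a_i}\in\mathcal C\}$. Setting $\mathrm{cost}(r)=n_1(q(r,1))+n_2(q(r,2))$ and using that each query is hit by a $1/|Q_i|$ fraction of seeds, one gets $\mathbb E_r[\mathrm{cost}(r)]=|\mathcal C|/|Q_1|<2c\log N$, so by Markov a fraction $\ge 1-\frac{2c\log N}{d+1}$ of seeds are ``cheap,'' i.e.\ have $\mathrm{cost}(r)\le d$. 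For each cheap seed, Corollary~\ref{cor:setcontain} produces a pair $S_{q(r,1),a_1^*},S_{q(r,2),a_2(r,a_1^*)}\in\mathcal C$, which the two provers hit simultaneously with probability $\bigl(n_1(q(r,1))\,n_2(q(r,2))\bigr)^{-1}\ge 4/d^2$ by AM--GM; hitting this pair makes $\Pi$ output \textsf{Yes}. Averaging over seeds and fixing the provers' coins optimally gives a deterministic prover pair accepting with probability $\gtrsim 1/d^2$, which contradicts the $1/n$ soundness of the Feige--Lov\'asz system once $d^2<n$.

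The only genuinely delicate step — and the one that pins down the constant $\tfrac14$ — is the choice of $d$. We need $d$ large enough that the lower-order terms $\log|R|\le\mathrm{polylog}(n)$ and $\log m$ in $\log N$ are negligible, and small enough that $1/d^2>1/n$ with room to spare; both are satisfied by taking $d=\Theta(\sqrt n)$ with a small implied constant. Because the special set system has $|B|=\mathcal O(2^{2d}m^2)$, this gives $\log N\le(2+o(1))d$, so the fraction of cheap seeds is at least $1-\frac{2c\log N}{d+1}\to 1-4c$, a positive constant exactly when $c<\tfrac14$. Combining the three factors — the constant fraction of cheap seeds, the $4/d^2$ per-seed acceptance bound, and the bound $d^2\le n$ — makes the overall acceptance probability exceed $1/n$ for all sufficiently large $n$, contradicting soundness and completing the proof of the lemma.
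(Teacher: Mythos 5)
Your overall architecture mirrors the paper's exactly: for completeness, fix a winning prover pair and read off a cover of size $|Q_1|+|Q_2|$; for soundness, assume a small cover, build cheating provers from it, and use the low soundness error of the proof system to derive a contradiction. Your soundness argument is just the contrapositive form of the paper's two-lemma decomposition (Lemmas \ref{lem:probability} and \ref{lem:coversize}), with the minor substitution of uniformly random provers plus AM--GM in place of the paper's $d$ deterministic strategy pairs; this buys you a harmless constant factor of four. Your completeness direction is actually more thorough than the paper's, which asserts the equality $\textsf{OPT}(\mathcal{S}_\varphi) = |Q_1|+|Q_2|$ but only argues the upper bound; the case analysis on whether some $n_i(q)$ vanishes, via the special property and the independence of $q_1$ from $q_2$, is a correct way to supply the missing lower bound.

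The genuine gap is the parameter choice $d = \Theta(\sqrt n)$. The special set system has universe size $|B| = \mathcal{O}(2^{2d}m^2)$, so with your $d$ the \setcover{} universe $R \times B$ has $2^{\Theta(\sqrt n)}$ elements, and merely writing the instance down already costs $2^{\Omega(\sqrt n)}$ time, which is not in $\text{DTIME}(n^{\text{polylog}(n)})$. The lemma explicitly claims the construction runs in quasipolynomial time, and this is precisely what lets Theorem \ref{thm:inapproxly} conclude from the hypothesis $\text{NP} \not\subseteq \text{DTIME}(n^{\text{polylog}(n)})$; an exponential-size instance voids that conclusion. The fix is the paper's choice. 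Your two constraints on $d$ --- that $\log|R|$ and $\log|A_2|$ be negligible next to $d$, and that $d^2 \ll n$ --- have enormous slack between them, because the seed and answer lengths in Definition \ref{def:ips} are $\mathcal{O}(\text{polylog}(n))$, so $\log|R|$ and $\log|A_2|$ are themselves only $\text{polylog}(n)$, far below $\sqrt n$. Setting $d = k\max(\log|R|, \log|A_2|) = \mathcal{O}(\text{polylog}(n))$ for a sufficiently large constant $k$ (chosen as a function of the target $c < \frac{1}{4}$ so that the correction factor $1 + \frac{3}{2k}$ is close enough to $1$) satisfies both of your constraints while keeping $|B| = 2^{\mathcal{O}(\text{polylog}(n))}$ and the whole reduction within $\text{DTIME}(n^{\text{polylog}(n)})$ as the lemma requires. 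With that repair, the rest of your argument goes through unchanged.
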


We will prove each part of Lemma \ref{lem:refclaim} separately.

\subsubsection{Proving the First Part}

The proof of the first part is straightforward. We seek to prove that if $\varphi \in \textsc{SAT}$, then $\textsf{OPT}(\mathcal{S}_\varphi) = |Q_1| + |Q_2|$.

Assume that $\varphi \in \textsc{SAT}$. Then there exists some pair of provers which makes the verifier accept with probability 1. This means that for all seeds, if seed $r$ generates query pair $(q_1, q_2)$, then the answer pair $(a_1, a_2)$ returned by the provers causes $\Pi(\varphi, r, a_1, a_2)$ to output \textsf{Yes}. Notice that this means the sets $S_{q(r, 1), a_1}$ and $S_{q(r, 2), a_2(r, a_1)}$ cover all $\{(r, b)\}_{b \in B}$, and such a pair of sets therefore exists in our instance for every seed $r$.
Additionally, because the prover's strategies are fixed, there is exactly one set per query in the strategy, meaning there are $|Q_1| + |Q_2|$ corresponding sets in the instance. Since the sets in the strategy are able to cover all pairs, it follows that the size of a minimal covering for $S$ is $|Q_1| + |Q_2|$. This proves the first part.

\subsubsection{Proving the Second Part}

The proof of the second part is more involved. We seek to prove that if $\varphi \notin \textsc{SAT}$, then $\textsf{OPT}(\mathcal{S}_\varphi) \geq c \log (|R \times B|) \cdot (|Q_1| + |Q_2|)$ for any $0 < c < \frac{1}{4}$.

Assume that $\varphi \notin \textsc{SAT}$. Define a bipartite graph $G$, and let the two sets of nodes be $Q_1$ and $Q_2$. Add an edge $e_r$ between queries $q_1 \in Q_1$ and $q_2 \in Q_2$ if there is some random seed $r$ such that $f(\varphi, r) = (q_1, q_2)$. By the properties of the proof system, and because $|Q_1| = |Q_2|$, $G$ is $\frac{|R|}{|Q_1|}$-regular.

Consider some arbitrary covering $\mathcal{C}$ of $S$. We will use this to define a cost function $c$ as follows:
\begin{itemize}
\item The cost of a node $c(q)$ is the number of sets $S_{q, a}$ in $\mathcal{C}$ for which the query is $q$.
\item The cost of an edge $c(e_r)$, where $e_r = (q_1, q_2)$, is the sum of the costs of its incident nodes, $c(q_1) + c(q_2)$.
\end{itemize}

Note that if the cost of edge $e_r$ is $\leq d$, then because $\mathcal{C}$ is a covering, it follows that all pairs $\{(r, b)\}_{b \in B}$ can be covered in fewer than $d$ sets. By Corollary $\ref{cor:setcontain}$, it follows that seed $r$ has a pair of satisfying answers to its queries.

Let an edge $e_r$ be called \emph{good} if $c(e_r) \leq d$, and let it be called \emph{bad} otherwise. Let $\delta$ be the fraction of good edges in $G$. We then prove two lemmas:

\begin{lemma}
\label{lem:probability}
If $\varphi \notin \textsc{SAT}$, there exists provers such that the verifier accepts with probability at least $\frac{\delta}{d^2}$.
\end{lemma}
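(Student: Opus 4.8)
The plan is to convert the covering $\mathcal{C}$ into a randomized pair of provers, bound the verifier's acceptance probability for that pair, and then derandomize. Given $\mathcal{C}$, define a randomized prover $P_1$ which, on query $q_1$, outputs an answer chosen uniformly at random from $\{a_1 \in A_1 : S_{q_1, a_1} \in \mathcal{C}\}$ whenever that set is nonempty (and arbitrarily otherwise), and define $P_2$ analogously from the sets $S_{q_2, a_2} \in \mathcal{C}$. By construction the number of candidate answers available to $P_i$ on query $q_i$ is exactly the node cost $c(q_i)$, so $P_i$ answers any particular candidate with probability $1/c(q_i)$, and $P_1, P_2$ use independent randomness.

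Next I would analyze this randomized pair. The verifier draws a uniform seed $r$, yielding $(q_1, q_2) = f(\varphi, r)$; with probability exactly $\delta$ the edge $e_r$ is good. Condition on a good $r$. Then $c(q_1) + c(q_2) = c(e_r) \le d$, and since the only sets of $\mathcal{C}$ that can cover a pair $(r,b)$ are those with query $q_1 = q(r,1)$ or $q_2 = q(r,2)$, it follows that at most $d$ sets of $\mathcal{C}$ cover all of $\{(r,b)\}_{b \in B}$. Corollary \ref{cor:setcontain} then supplies a pair $S_{q_1, a_1^*}$ and $S_{q_2, a_2(r, a_1^*)}$ in $\mathcal{C}$; in particular $1 \le c(q_1), c(q_2) \le d$, so $P_1$ and $P_2$ are genuinely defined on $q_1$ and $q_2$. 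Independently, $P_1$ returns $a_1^*$ with probability $1/c(q_1) \ge 1/d$ and $P_2$ returns $a_2(r, a_1^*)$ with probability $1/c(q_2) \ge 1/d$; when both events occur, the verifier computes $\Pi(\varphi, r, a_1^*, a_2(r, a_1^*)) = \textsf{Yes}$ by the very definition of $a_2(r,\cdot)$. Hence the conditional acceptance probability is at least $1/d^2$, and averaging over the uniform seed gives that the randomized provers are accepted with probability at least $\delta/d^2$.

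Finally I would derandomize: the acceptance probability of the randomized provers is the expectation, over the provers' internal coins, of the acceptance probabilities (over the verifier's seed alone) of the induced deterministic provers, so some fixing of those coins yields deterministic $P_1 : Q_1 \to A_1$ and $P_2 : Q_2 \to A_2$ — a legitimate pair in the sense of Definition \ref{def:ips} — accepting with probability at least $\delta/d^2$. The probability bookkeeping here is routine; the two points that need care are checking that the candidate-answer sets are nonempty for exactly the queries arising from good edges (which is precisely what Corollary \ref{cor:setcontain} guarantees, so the randomized strategy really is defined there) and the averaging step, which uses both that the proof system places no constraint on a prover beyond being a function $Q_i \to A_i$ and that the verifier's seed is independent of the provers' coins, so that a single fixed deterministic pair inherits the bound. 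I expect this last independence/averaging bookkeeping to be the main conceptual obstacle rather than any hard estimate.
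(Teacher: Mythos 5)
Your proof is correct and follows essentially the same approach as the paper: both arguments turn the cover $\mathcal{C}$ into a small family of candidate answers per query of bounded size $\leq d$ on good edges, use Corollary~\ref{cor:setcontain} to exhibit an accepting pair for each good seed, and then average to extract a single deterministic pair of provers accepting with probability at least $\delta/d^2$. The paper packages the averaging as a pigeonhole over $d$ explicit deterministic strategies per prover (the ``$j$th answer in the ordering''), while you package it as a uniformly randomized prover over the candidate answers followed by derandomization, which is a cosmetic rephrasing of the same argument.
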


\begin{proof}[Proof of Lemma \ref{lem:probability}]
Define $d$ strategies $P_{i, 1}, P_{i, 2}, \dots, P_{i, d}$ for each of the two provers $P_i$ as such:
\begin{itemize}
\item For each query $q \in Q_i$, look at the strategies $a \in A_i : S_{q, a} \in \mathcal{C}$. Order the strategies arbitrarily.
\item The answer $P_{i, j}$ gives to query $q$ will be the $j$th element of the ordering of $q$'s answers, or an arbitrary answer if there are no more in the ordering.
\end{itemize}
If a seed $r$ is chosen uniformly randomly, it has probability $\delta$ of corresponding to a good edge, meaning that there is an accepting pair of answers for $r$'s queries. These answers must belong to some pair of strategies of the provers by pigeonhole principle, since there are at most $d$ question-answer pairs corresponding to each query on a good edge, and each of the $d$ strategies provides a different answer out of those in the covering. This means that some pair of strategies has at least a $\frac{\delta}{d^2}$ chance of making the verifier accept: probability $\delta$ that the edge $e_r$ is good, and probability at least $\frac{1}{d^2}$ that our strategies choose an accepting answer pair in response.
\end{proof}

\begin{lemma}
\label{lem:coversize}
The size of the covering set, $|\mathcal{C}|$, is at least $(1 - \delta) \cdot \frac{d}{2} \cdot (|Q_1| + |Q_2|)$.
\end{lemma}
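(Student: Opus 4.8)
The plan is a double-counting argument on the bipartite graph $G$, relating $|\mathcal{C}|$ to the total edge cost. First I would record two identities. Since every set in $\mathcal{C}$ is one of the sets $S_{q,a}$ and is therefore attached to a unique query $q$, the cost function satisfies $|\mathcal{C}| = \sum_{q \in Q_1 \cup Q_2} c(q)$. Next, because $G$ is $\frac{|R|}{|Q_1|}$-regular, every vertex is incident to exactly $\frac{|R|}{|Q_1|}$ edges, so summing edge costs over all of $G$ counts each $c(q)$ exactly $\deg(q) = \frac{|R|}{|Q_1|}$ times:
$$\sum_{e_r \in G} c(e_r) \;=\; \sum_{e_r = (q_1,q_2)} \bigl(c(q_1) + c(q_2)\bigr) \;=\; \frac{|R|}{|Q_1|} \sum_{q} c(q) \;=\; \frac{|R|}{|Q_1|}\,|\mathcal{C}| .$$
Here edges are counted with multiplicity, one per seed $r \in R$, so $G$ has exactly $|R|$ edges; this matches the convention under which $\delta$ is the fraction of seeds corresponding to good edges.

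For the lower bound I would simply discard the good edges. Each bad edge $e_r$ has $c(e_r) > d$ by definition, and there are $(1-\delta)|R|$ of them, so, using that all node costs (hence all edge costs) are nonnegative,
$$\sum_{e_r \in G} c(e_r) \;\geq\; \sum_{e_r \text{ bad}} c(e_r) \;\geq\; d \cdot (1-\delta)|R| .$$
Combining the two displays gives $\frac{|R|}{|Q_1|}|\mathcal{C}| \geq d(1-\delta)|R|$, i.e. $|\mathcal{C}| \geq d(1-\delta)|Q_1|$. Finally, since the proof system is set up with $|Q_1| = |Q_2|$, we have $|Q_1| = \tfrac{1}{2}(|Q_1| + |Q_2|)$, and substituting yields $|\mathcal{C}| \geq (1-\delta)\cdot\frac{d}{2}\cdot(|Q_1| + |Q_2|)$, as claimed.

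I do not expect a real obstacle here: it is a one-line averaging argument once the right quantity, $\sum_{e_r} c(e_r)$, is identified. The only points requiring care are (i) being consistent about treating $G$ as a multigraph with one edge per seed, so that ``fraction of good edges'' agrees with the probabilistic statement used in Lemma~\ref{lem:probability}; and (ii) observing that the factor $\tfrac12$ in the statement is exactly the cost of converting $|Q_1|$ into $|Q_1| + |Q_2|$ via $|Q_1| = |Q_2|$ — the bad-edge estimate itself loses nothing beyond weakening $c(e_r) > d$ to $c(e_r) \geq d$.
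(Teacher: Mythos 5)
Your argument is identical to the paper's: the same double-counting identity $\sum_{e_r} c(e_r) = \frac{|R|}{|Q_1|}|\mathcal{C}|$ via $\frac{|R|}{|Q_1|}$-regularity, the same bound via discarding good edges, and the same final substitution $|Q_1| = \tfrac{1}{2}(|Q_1|+|Q_2|)$. The remarks on multigraph multiplicity and on relaxing the strict inequality are accurate clarifications but do not alter the approach.
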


\begin{proof}[Proof of Lemma \ref{lem:coversize}]
The sum of all edge costs is bounded below by the sum of all bad edge costs. Since each bad edge costs $>d$, and the number of bad edges is $(1 - \delta) \cdot |R|$, we can conclude that the sum of edge costs is strictly bounded below by $(1 - \delta) \cdot |R| \cdot d$.

The sum of all edge costs is also equal to the sum of all costs of the corresponding vertices to each edge. Since $G$ is $\frac{|R|}{|Q_1}$-regular, this evaluates to $\frac{|R|}{|Q_1|} \cdot \sum_{q \in Q_1 \cup Q_2} c(q)$. By construction, the sum of all individual vertex costs is $C$.

Combining these two facts yields
\begin{align*}
\frac{|R|}{|Q_1|} |\mathcal{C}| &> (1 - \delta) |R| d \\
|\mathcal{C}| &> (1 - \delta) \cdot d \cdot |Q_1| \\
|\mathcal{C}| &> (1 - \delta) \cdot \frac{d}{2} \cdot 2|Q_1| \\
|\mathcal{C}| &> (1 - \delta) \cdot \frac{d}{2} \cdot (|Q_1| + |Q_2|) \\
\end{align*} 
since $|Q_1| = |Q_2|$.
\end{proof}

Now we can complete the proof. By Lemma \ref{lem:probability}, if $\varphi \notin \textsc{SAT}$, then $\frac{\delta}{d^2} \leq \frac{1}{n}$. Substituting directly into the result of Lemma \ref{lem:coversize} yields that
$$|\mathcal{C}| \geq \left( \frac{d}{2} - \frac{d^3}{2n} \right)(|Q_1| + |Q_2|).$$

Choose arbitrarily large $k > 0$ and let $d = \max (k \log |R|, k \log|A_2|)$. This choice of $d$ ensures that the following step will hold.

Note that $|R \times B| = |R||B|$, and by construction, $|B| = \mathcal{O}(2^{2d}|A_2|^2)$. This means that
$$\log(|R||B|) \leq 2d \left( 1 + \frac{3}{2k} \right).$$

We may make $k$ arbitrarily large, which makes the $\frac{3}{2k}$ vanish as $k$ increases. This results in the simplified statement
$$\log(|R||B|) \leq 2 d$$
Solving for $d$, substituting into the result of Lemma \ref{lem:coversize}, and simplifying yields
$$|\mathcal{C}| \geq \frac{1}{2} \cdot \left( \frac{\log(|R||B|)}{2} - \frac{\log(|R||B|)^3}{8|R||B|} \right) \cdot (|Q_1| + |Q_2|).$$
We may assume that $\log(|R||B|)^3 \ll 8|R||B|$, which yields the final result of
$$|\mathcal{C}| \geq \frac{1}{4} \cdot \log(|R||B|) \cdot (|Q_1| + |Q_2|)$$
which proves the second part of the claim. This proves Lemma \ref{lem:refclaim}, hence proving Lemma \ref{lem:mainclaimly} and completing the reduction.

\subsection{Using a Randomized Construction}

By using a randomized construction for the special set system, rather than a deterministic one, Lund and Yannakakis are able to construct it in $\text{ZTIME}(n^{\text{polylog}(n)})$, where ZTIME is zero-error probabilistic polynomial time (the class of languages decidable via Las Vegas algorithms). The size of the new special set is $|B| = (d + d \ln m + 2)2^d$, which is approximately the square root of the size of the deterministic algorithm, which is $|B| = \mathcal{O}(2^{2 d} m^2)$. Using this, they achieve the following result.

\begin{theorem}
\label{thm:inapproxly-rand}
Unless $\text{NP} \subseteq \text{ZTIME}(n^{\text{polylog}(n)})$, \setcover{} cannot be approximated within a ratio of $c \log n$ for any positive $c < \frac{1}{2}$.
\end{theorem}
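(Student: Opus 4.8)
The plan is to re-run the entire reduction of Lemma~\ref{lem:refclaim} essentially verbatim, changing only the combinatorial object that is plugged in: in place of the deterministic special set system $\beta_{m,d}$ with $|B| = \mathcal{O}(2^{2d}m^2)$, we use the randomized construction with $|B| = (d + d\ln m + 2)2^d$, where $m = |A_2|$. Every step between the construction of $\mathcal{S}_\varphi$ and the final gap --- Lemma~\ref{lem:coverseed}, Corollaries~\ref{cor:specialcor} and~\ref{cor:setcontain}, and Lemmas~\ref{lem:probability} and~\ref{lem:coversize} --- uses only the \emph{special property} of the set system together with the identity $m = |A_2|$, never how $B$ was built, so all of these carry over word for word. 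Consequently the construction enters the final statement in only two places: the resources needed to build $\mathcal{S}_\varphi$, and the inequality relating $d$ to $\log|R\times B|$.

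For the first, the randomized special set system is produced by a Las Vegas algorithm running in $\text{ZTIME}(n^{\text{polylog}(n)})$, while the remainder of $\mathcal{S}_\varphi$ is assembled deterministically in $\text{DTIME}(n^{\text{polylog}(n)})$ from the Feige--Lov\'asz verifier of Theorem~\ref{thm:satips}; composing the two, the whole reduction is a $\text{ZTIME}(n^{\text{polylog}(n)})$ procedure. Hence, exactly as in Section~\ref{sec:ly}, a polynomial-time $c\log n$-approximation for \setcover{} at the threshold we are about to establish, composed with this reduction, would decide $\textsc{SAT}$ --- and therefore all of NP --- in $\text{ZTIME}(n^{\text{polylog}(n)})$, which is the escape clause in the theorem.

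For the second, fix an arbitrarily large $k>0$ and take $d$ to be an even integer with $d = \Theta\big(\max(k\log|R|, k\log|A_2|)\big)$, as before. With the new size bound, $\log|B| = d + \log(d + d\ln m + 2)$; since $d \geq k\log m$, the additive correction is $O(\log d)$, which is $o(d)$, and likewise $\log|R| \leq d/k + O(1)$. Therefore $\log(|R\times B|) = \log|R| + \log|B| \leq d\big(1 + \tfrac1k + o(1)\big)$, and letting $k\to\infty$ this sharpens to $\log(|R\times B|) \leq d$ --- a factor-$2$ improvement over the deterministic bound $\log(|R\times B|) \leq 2d$, which is precisely what buys the jump from $c<\tfrac14$ to $c<\tfrac12$. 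Substituting $d \geq (1-o(1))\log(|R\times B|)$ into the inequality $|\mathcal{C}| \geq \big(\tfrac{d}{2} - \tfrac{d^3}{2n}\big)(|Q_1|+|Q_2|)$ obtained at the end of the previous section, and discarding the lower-order error term via $\log(|R\times B|)^3 \ll |R\times B|$, yields $\textsf{OPT}(\mathcal{S}_\varphi) \geq \big(\tfrac12 - o(1)\big)\log N\cdot(|Q_1|+|Q_2|)$ when $\varphi\notin\textsc{SAT}$, where $N = |R\times B|$ is the number of universe elements, against $\textsf{OPT}(\mathcal{S}_\varphi) = |Q_1|+|Q_2|$ when $\varphi\in\textsc{SAT}$. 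Since the $o(1)$ can be driven below any prescribed $\tfrac12 - c$ by choosing $k$ large, the gap rules out a $c\log n$-approximation for every fixed $c<\tfrac12$.

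The only real obstacle is bookkeeping: one must confirm that the lower-order terms $\log(d + d\ln m + 2)$ and $\log|R|$ are genuinely negligible next to $d$ for the chosen $d$ --- that is, that the factor-$2$ saving in $\log|B|$ is not eaten by the polynomial prefactor or by $|R|$ --- and double-check that no step of Lemmas~\ref{lem:coverseed}--\ref{lem:coversize} quietly relied on the deterministic size bound. Neither point is deep. The substantive ingredient is the black-box fact, imported from \cite{LY-1994}, that zero-error randomization suffices to construct a special set system with universe roughly the square root of the size of the best known deterministic one.
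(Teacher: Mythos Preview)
Your proposal is correct and follows exactly the approach the paper takes: swap in the randomized special set system with $|B| = (d + d\ln m + 2)2^d$, observe that only the special property was used in the intermediate lemmas, and redo the final size computation to get $\log(|R\times B|) \lesssim d$ rather than $\lesssim 2d$, doubling the constant from $\tfrac14$ to $\tfrac12$ at the price of the ZTIME assumption. In fact the paper itself offers only a one-paragraph sketch noting that the randomized $|B|$ is roughly the square root of the deterministic one, so your writeup is more detailed than the paper's own treatment.
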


This is a stronger result than using the deterministic construction, but requires a stronger assumption in turn. As such, it is effectively incomparable to the deterministic result in terms of quality.

\section{Bellare, Goldwasser, Lund, and Russell, 1993 \cite{BGLR-1993}}
\label{sec:bglr}

\subsection{Inapproximability Result}

The main theorem that this paper achieves is as follows:

\begin{theorem}
\label{thm:inapproxbglr} ~
\begin{enumerate}
\item Unless $\text{P} = \text{NP}$, \setcover{} cannot be approximated within a ratio of $c$ for any constant $c$.
\item Unless $\text{NP} \subseteq \text{DTIME}(n^{\mathcal{O}(\log \log n)})$, \setcover{} cannot be approximated within a ratio of $c \log n$ for any positive $c < \frac{1}{8}$.
\end{enumerate}
\end{theorem}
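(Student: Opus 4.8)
The plan is to recycle the Lund--Yannakakis reduction of Section~\ref{sec:ly} almost verbatim, but to feed it the more efficient two-prover one-round proof system for \textsc{SAT} built by Bellare, Goldwasser, Lund and Russell, and then to tune the free parameter $d$ of the special set system in two different ways to obtain the two parts.

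First I would record, as a black box, the BGLR proof system: a two-prover one-round system for \textsc{SAT} with perfect completeness, possessing all the structural features required in Theorem~\ref{thm:satips} (each prover's queries uniform, $q_2$ independent of $q_1$, at most one accepting $a_2$ per pair $(r,a_1)$, and $|Q_1|=|Q_2|$), but whose soundness error $s$ can be driven down while the randomness, query length and answer length grow only mildly as functions of $\log(1/s)$ --- in particular, prover~$2$'s answers have length $O(\log(1/s))$, so $m=|A_2|=(1/s)^{O(1)}$, and $|R|,|Q_i|\le n^{O(\log(1/s))}$. Next I would check that the construction of $\mathcal S_\varphi$ and the analysis of Lemmas~\ref{lem:coverseed}--\ref{lem:coversize} go through unchanged, the only difference being that the quantity "$\tfrac1n$" in the final calculation of Section~\ref{sec:ly} is replaced by $s$. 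The upshot is: for every even $d$ one obtains, in time polynomial in $N+|Q_1|\cdot|A_1|$ where $N=|R\times B|$ and $|B|=O(2^{2d}m^2)$, an instance with $\textsf{OPT}(\mathcal S_\varphi)=|Q_1|+|Q_2|$ when $\varphi\in\textsc{SAT}$ and $\textsf{OPT}(\mathcal S_\varphi)\ge\bigl(1-sd^2\bigr)\tfrac d2\,(|Q_1|+|Q_2|)$ when $\varphi\notin\textsc{SAT}$.

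For part~(1) I would fix the target constant $c$, then choose $d$ to be a large enough constant and $s$ a small enough constant (say $d\approx 3c$ and $s<\tfrac1{3d^2}$) so that the gap above is at least $c$; with $s$ constant the proof system needs only $O(\log n)$ random bits and $O(1)$-bit answers, so $N$ and the number of sets are polynomial in $n$ and the reduction runs in polynomial time, whence a polynomial-time $c$-approximation for \setcover{} would decide \textsc{SAT} and force $\text{P}=\text{NP}$. For part~(2) I would instead let $s=s(n)\to0$ as slowly as the argument permits: keeping $\log(1/s)=\Theta(\log\log n)$ makes $m=\mathrm{polylog}(n)$ and $|R|,|Q_i|=n^{O(\log\log n)}$, and then choosing $d=\Theta(\log n\log\log n)$ makes $2^{2d}=n^{O(\log\log n)}$, hence $N=n^{O(\log\log n)}$, so the whole reduction (including building $\beta_{m,d}$) runs in time $n^{O(\log\log n)}$, while $sd^2=o(1)$. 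The resulting gap is $(1-o(1))\tfrac d2$; since in this regime $\log N=\log|R|+\log|B|$ is, up to $o(d)$, a fixed constant multiple of $d$, the gap is at least $(\tfrac18-o(1))\log N$, so it beats $c\log N$ for every fixed $c<\tfrac18$ once $n$ is large, and a $c\log N$-approximation running in time $\mathrm{poly}(N)=n^{O(\log\log n)}$ would place \textsc{SAT}, hence NP, inside $\text{DTIME}(n^{O(\log\log n)})$.

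The main obstacle is quantitative: one has to open the BGLR proof system far enough to pin down exactly how $|R|$, $m$ and $s$ trade off, and to verify that in the balanced choice of parameters $\log N$ is, up to lower-order terms, $8d$ rather than $4d$. This is precisely where a factor of two is lost relative to the $\tfrac14$ of Theorem~\ref{thm:inapproxly}: there one may take $d$ enormous compared with $\log|R|$, so $\log N\approx 2d$; here the requirement that the whole reduction stay within time $n^{O(\log\log n)}$ caps $d$ at $\Theta(\log n\log\log n)$, and within that cap the soundness one is forced to use inflates $|R|$ until $\log|R|$ is itself comparable to $2d$, so $\log N\approx 4d$. A secondary, and essentially routine, point is to confirm that the BGLR construction genuinely delivers the uniqueness-of-accepting-$a_2$ and uniformity properties, since those are exactly what make Corollary~\ref{cor:setcontain} and Lemma~\ref{lem:probability} go through.
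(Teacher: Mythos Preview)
Your proposal diverges from the paper's account in a basic structural way. The paper does not give a detailed proof of Theorem~\ref{thm:inapproxbglr}, but it does say explicitly that the BGLR results ``come from increasing the number of provers in the system to four, and providing a slightly different construction.'' You instead posit that BGLR supplies a \emph{two}-prover one-round system with sharply improved randomness/answer-size/soundness trade-offs, and then re-run the Lund--Yannakakis machinery on top of it. That premise is not what the paper describes, and it is also anachronistic: a two-prover system with soundness driven to $1/\mathrm{polylog}(n)$ at cost only $O(\log\log n)$ in the exponent is essentially what one gets \emph{after} Raz's parallel repetition (Section on Raz, Lemma~\ref{lem:razrepetition}), which the paper places chronologically after BGLR and credits for the improvement to $\tfrac14$ under $\text{DTIME}(n^{O(\log\log n)})$.

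The constant $\tfrac18$ has a different, and much simpler, provenance in the paper's framing: the Weak Sliding Scale Conjecture stated immediately after Theorem~\ref{thm:inapproxbglr} gives $c<\tfrac{1}{2p}$ for a $p$-prover system, and with $p=4$ this is exactly $\tfrac18$. So the factor-of-two loss relative to Lund--Yannakakis is the price of doubling the number of provers, not (as you conjecture) a parameter-balancing artifact in which $\log|R|$ is forced up to match $2d$. Your ``main obstacle'' paragraph is therefore aimed at the wrong target: the calculation you would actually need is the four-prover analogue of Lemmas~\ref{lem:probability} and~\ref{lem:coversize}, where the cost of an edge is spread over four query nodes and the strategy-guessing argument pays $1/d^4$ rather than $1/d^2$, and one must check that the BGLR four-prover system has enough structure (uniform queries, a functional accepting condition) for Corollary~\ref{cor:setcontain} to survive. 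Your approach to part~(1) is closer to viable, since any constant-soundness PCP suffices there, but for part~(2) you are assuming a black box the paper does not give you.
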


These results come from increasing the number of provers in the system to four, and providing a slightly different construction. Though the assumptions of these results are weaker, the constants are also worse; therefore, this result is effectively incomparable with the result of Section \ref{sec:ly}.

\subsection{The Sliding Scale Conjecture}

The paper also formulates the following conjecture, a weaker version of which was ultimately proven in \cite{DS-2014}.

\begin{conjecture}[Sliding Scale Conjecture]
\label{conj:slidingscale}
There exist two-prover, one-round proof systems for SAT that have logarithmic randomness, logarithmic answer sizes, and $\frac{1}{n}$ error probability.
\end{conjecture}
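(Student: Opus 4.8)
The plan is to obtain the desired system by \emph{soundness amplification} from a cheap base system, while fighting the resource blow-up that amplification normally causes. First I would invoke the PCP theorem in its two-prover one-round (``label cover'') form: SAT has a two-prover one-round verifier with perfect completeness, some fixed constant soundness $s_0 < 1$, randomness $O(\log n)$, and constant-size answer sets --- exactly the kind of Feige--Lov\'asz-style verifier of Section~\ref{sec:ly}, but with a genuinely constant gap rather than a $\frac{1}{n}$ gap; its existence is standard from \cite{Arora-1994, AS-1998}. Then I would apply $t$-fold parallel repetition. By Raz's parallel repetition theorem (or, for projection games, the sharper later bounds) the soundness drops to $s_0^{\Omega(t)}$, so taking $t = \Theta(\log n)$ with a large enough constant forces the error below $\frac1n$; completeness stays perfect, and the answer sets grow only to size $O(1)^t = \mathrm{poly}(n)$, i.e.\ $O(\log n)$-bit answers, which is within budget. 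The one parameter that misbehaves is randomness: $t$ \emph{independent} base seeds cost $t\cdot O(\log n)=O(\log^2 n)$ bits, one logarithmic factor over target.

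The key new ingredient would therefore be a \emph{derandomized} repetition. Instead of drawing the $t$-tuple of base seeds from the product distribution on $R^t$, I would sample it by a $t$-step walk on a constant-degree expander on the seed set $R$ (equivalently, from an $\varepsilon$-biased or almost-$t$-wise-independent generator over $R^t$), so that the total randomness is $O(\log n) + O(t) = O(\log n)$. The verifier's decision rule and the provers are unchanged: each prover answers the $t$-tuple of its base answers to the $t$-tuple of base queries, and the verifier accepts iff every coordinate's base predicate accepts. Completeness is trivially preserved, the answer length is still $O(\log n)$, and the randomness is now logarithmic, so the only thing left to establish is soundness: a strategy succeeding on a $\geq \frac1n$ fraction of \emph{pseudorandom} $t$-tuples ought, by a Chernoff-type bound for expander walks, to induce a base strategy succeeding with probability bounded away from $1$, contradicting the choice of $s_0$.

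\textbf{The main obstacle is precisely that last soundness step.} Every known proof of exponential decay for parallel repetition --- Raz, Holenstein, Rao, Dinur--Steurer, Braverman--Garg --- uses the \emph{product} structure of the question distribution in an essential way: one conditions on a subset of coordinates, finds a ``lightly loaded'' coordinate by an information-theoretic or Fourier-analytic averaging argument, and rounds a strategy on that coordinate into a base strategy. For a non-product distribution such as an expander walk no analogue is known; derandomized \emph{direct-product testing} is well understood (Impagliazzo--Kabanets--Wigderson, Dinur--Meir), but derandomized parallel repetition of \emph{games} with an exponential rate is open, and a positive answer there is essentially equivalent to the statement we are trying to prove.

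An alternative to amplifying a base system is to build the two-query low-error PCP directly by composition --- a low-degree test as the outer verifier composed with a two-query locally decodable inner PCP, in the style of Moshkovitz--Raz. That route does reach error $\varepsilon$ for $\varepsilon$ as small as roughly $2^{-(\log n)^{1-\delta}}$ with near-linear size; but at $\varepsilon = \frac1n$ --- which forces alphabet size $\mathrm{poly}(n)$ and hence answer length $\Theta(\log n)$ --- the current analyses pay either a super-logarithmic answer length or a quasi-polynomial (rather than polynomial) number of random bits. Closing that residual gap is exactly the content of the conjecture, and is the reason only the weaker form resolved in \cite{DS-2014} is presently available.
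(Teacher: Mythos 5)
The statement you were asked to prove is not a theorem of the paper at all: it is the Sliding Scale Conjecture of Bellare, Goldwasser, Lund, and Russell \cite{BGLR-1993}, which the paper records explicitly as an open conjecture --- only a weakened form, sufficient for the \setcover{} application via Moshkovitz's reduction \cite{Moshkovitz-2015}, was later established by Dinur and Steurer \cite{DS-2014}. Consequently the paper contains no proof for me to compare your attempt against, and none should be expected.

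Read as a proof, your write-up has a genuine gap, and to your credit you name it yourself: the soundness of the derandomized (expander-walk) repetition is exactly the missing step, since every known analysis of parallel repetition \cite{Raz-1998, DS-2014} exploits the product structure of the question distribution, and a low-randomness analogue with exponential error decay is open --- indeed essentially equivalent to the conjecture itself. Your accounting of the two natural routes is accurate: na\"ive Raz repetition of a constant-error two-prover system gets the $\frac{1}{n}$ error and $O(\log n)$-bit answers but spends $O(\log^2 n)$ random bits, while the composition route of Moshkovitz--Raz \cite{MR-2008} reaches subconstant error with almost-linear size but not error $\frac{1}{n}$ within logarithmic randomness and answer length. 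So what you have produced is a correct survey of the obstructions that make this a conjecture rather than a theorem; it is not, and could not honestly be presented as, a proof of the statement.
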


Contingent on this conjecture, the authors posit that better \setcover{} results are achievable. In particular, they observe that a \setcover{} result arises from a weaker form of the conjecture.

\begin{conjecture}[Weak Sliding Scale Conjecture for \setcover{}]
Suppose that for some constant $p$ and some function $\epsilon(n) = \frac{1}{\log^{\omega(1)} n}$, SAT has $p$-prover proof systems with logarithmic randomness, logarithmic answer sizes, and error $\epsilon(n)$. Then, unless $\text{P} = \text{NP}$, \setcover{} cannot be approximated within a ratio of $c \log n$ for any positive $c < \frac{1}{2p}$.
\end{conjecture}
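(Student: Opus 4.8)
The plan is to run the Lund--Yannakakis reduction of Section~\ref{sec:ly} with $p$ provers in place of two, substituting a $p$-ary \emph{partition system} for the special set system of Definition~\ref{def:specialset}, and then checking that the hypothesised error $\epsilon(n)$ is still small enough to survive the parameter choices. First I would put the $p$-prover system into a normal form analogous to Theorem~\ref{thm:satips}: queries uniform over each $Q_i$, all $|Q_i|$ equal, and for each seed $r$ and each $(p-1)$-tuple of answers from provers $1,\dots,p-1$ at most one answer of prover $p$ completing an accepting tuple. Uniformization is routine; the projection property costs only a constant factor in answer length, so answers stay logarithmic. That the randomness and answer lengths are logarithmic is exactly what keeps the reduction in \emph{deterministic polynomial time} ($|R|=n^{O(1)}$, $|A_i|=n^{O(1)}$), which is why the conclusion here is ``$\mathrm{P}=\mathrm{NP}$'' rather than the weaker hypothesis of Theorem~\ref{thm:inapproxly}.

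Next I would build a $p$-ary partition system: a ground set $B$ with $L=|A_p|$ partitions of $B$, each into $p$ blocks, such that any union of at most $d$ blocks that equals $B$ must be exactly the $p$ blocks of a single partition. The probabilistic method (each point of $B$ gets a uniformly random block in each partition; union bound over the at most $(pL)^d$ bad sub-collections) gives such a system with $|B|=n^{O(1)}$ when $p$ is constant and $d=\Theta(\log n)$, and a deterministic version with $|B|=2^{2d}\cdot n^{O(1)}$ follows just as the special set system is derandomised in \cite{LY-1994}. With this in hand the reduction is the obvious lift of Section~\ref{sec:ly}: universe $R\times B$, one set $S_{q_i,a_i}$ per $(q_i,a_i)\in Q_i\times A_i$, whose $r$-slice is the block of the partition indexed by the unique accepting completion. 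Lemma~\ref{lem:coverseed} becomes ``the $p$ sets given by an accepting answer tuple cover $\{(r,b)\}_{b\in B}$'' and Corollary~\ref{cor:setcontain} becomes ``any cover of $\{(r,b)\}_{b\in B}$ by at most $d$ sets contains such a $p$-tuple,'' now invoking the partition-system property in place of Corollary~\ref{cor:specialcor}.

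Completeness is unchanged: a perfect strategy gives $\mathrm{OPT}=\sum_i|Q_i|$. For soundness I would copy the cost-function argument verbatim, working with $p$-uniform hyperedges $e_r=(q(r,1),\dots,q(r,p))$ of cost $\sum_i c(q(r,i))$. If $\delta$ is the fraction of hyperedges of cost $\le d$, the pigeonhole argument of Lemma~\ref{lem:probability} produces a $p$-prover strategy accepted with probability $\ge\delta/d^{\,p}$, and the regularity double-count of Lemma~\ref{lem:coversize} gives $|\mathcal{C}|>(1-\delta)\tfrac{d}{p}\sum_i|Q_i|$; so $\varphi\notin\textsc{SAT}$ forces $\delta\le d^{\,p}\epsilon(n)$ and $|\mathcal{C}|\ge(1-d^{\,p}\epsilon(n))\tfrac{d}{p}\sum_i|Q_i|$. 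Taking $d=k\log n$ for a large constant $k$ makes $\log_2(|R|\,|B|)\le 2d(1+o(1))$ exactly as in Section~\ref{sec:ly}, so $\tfrac{d}{p}\ge(1-o(1))\tfrac{\log_2 N}{2p}$ with $N=|R|\,|B|$, while $d^{\,p}=\log^{O(1)}n$ and $\epsilon(n)=1/\log^{\omega(1)}n$ give $d^{\,p}\epsilon(n)=o(1)$. Interpreting $N$ as the universe size of the produced \setcover{} instance (it is $n^{O(1)}$), the resulting gap of $(1-o(1))\tfrac{\log_2 N}{2p}$ yields inapproximability within $c\log N$ for every $c<\tfrac{1}{2p}$, unless $\mathrm{P}=\mathrm{NP}$.

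The main obstacle is making all of the parameters cooperate at once: $d$ must be pushed up to $\sim\tfrac12\log_2 N$ to extract the constant $\tfrac1{2p}$, yet $|B|=2^{2d}\cdot n^{O(1)}$ must remain polynomial — which works only because $p$ is a \emph{constant} and randomness and answers are logarithmic — and $d^{\,p}\epsilon(n)$ must remain $o(1)$, which is precisely why the hypothesis insists on $\epsilon(n)=1/\log^{\omega(1)}n$ rather than just $1/\mathrm{polylog}(n)$, since $d^{\,p}$ is only polylogarithmic. A secondary difficulty is securing a genuinely deterministic polynomial-time $p$-ary partition system meeting the $2^{2d}$ bound (a randomized construction would weaken the conclusion from $\mathrm{P}=\mathrm{NP}$ to a derandomization hypothesis), and verifying that the projection normal form for $p$ provers does not inflate answer lengths beyond logarithmic.
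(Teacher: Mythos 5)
This statement is presented in the paper as a \emph{conjecture} attributed to Bellare, Goldwasser, Lund, and Russell, and the surrounding text explicitly says it ``remains unproven''; the paper gives no proof of it. So there is no in-paper argument to compare against, and I can only assess the plausibility of your sketch on its own terms.

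At that level the plan is the natural one --- the lift of the Lund--Yannakakis machinery of Section~\ref{sec:ly} from $2$ to $p$ provers --- and you correctly identify where the two hypotheses bite: logarithmic randomness and answer lengths keep $|R|$ and each $|A_i|$ polynomial in $n$, so the entire reduction runs in deterministic polynomial time and the conclusion is conditioned only on $\text{P}\neq\text{NP}$; and since $d^{\,p}=\mathrm{polylog}(n)$ once $d=\Theta(\log n)$ and $p$ is constant, $\epsilon(n)=1/\log^{\omega(1)}n$ is exactly the error strength needed to make $d^{\,p}\epsilon(n)=o(1)$. The cost-function double-count giving $|\mathcal{C}|>(1-\delta)\tfrac{d}{p}\sum_i|Q_i|$ and the pigeonhole bound $\delta/d^{\,p}$ on acceptance probability are the correct analogues of Lemmas~\ref{lem:coversize} and~\ref{lem:probability}.

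Two details of the construction need repair. First, the definition of the $r$-slice of $S_{q_i,a_i}$ for a prover $i<p$ as ``the block of the partition indexed by the unique accepting completion'' is not well-defined for $p>2$: given only $(r,a_i)$, the accepting answer of prover $p$ is not determined, since it depends on all of $a_1,\dots,a_{p-1}$. You need some projection structure --- a shared commitment that each answer $a_i$ determines on its own, as the distinguished-variable assignment does in Feige's reduction (Section~\ref{sec:feige}) --- to serve as the partition index; otherwise there is no way for the $p$ slices of an accepting tuple to line up as a single partition. Second, the stated partition-system property, ``any union of at most $d$ blocks that equals $B$ must be \emph{exactly} the $p$ blocks of a single partition,'' is unsatisfiable as written (adding superfluous blocks to a cover still covers); the property you in fact use, and the correct $p$-ary generalization of Corollary~\ref{cor:specialcor}, is that any cover by at most $d$ blocks must \emph{contain} all $p$ blocks of some single partition. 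With those two repairs, together with a verification that the deterministic construction meets $|B|=2^{O(d)}\cdot n^{O(1)}$, the remaining parameter arithmetic goes through and delivers an inapproximability constant at least $\tfrac{1}{2p}$.
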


While this conjecture remains unproven, it forms the basis of the Projection Games Conjecture in \cite{Moshkovitz-2015}.

\section{Raz, 1998 \cite{Raz-1998}}

Raz does not focus on \setcover{} in this paper; rather, he offers a breakthrough in error of multi-prover interactive proof systems in the form of the following lemma:

\begin{lemma}[Raz Repetition Lemma]
\label{lem:razrepetition}
If a two-prover, one-round proof system is repeated $\ell$ times independently in parallel, then the error is $2^{-c\ell}$, where $c > 0$ is constant and is dependent only on the error in the original proof system and the length of the answers in that proof system.
\end{lemma}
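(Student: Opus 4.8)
The plan is to give the modern information-theoretic proof of the parallel repetition theorem (Raz's argument, in streamlined form). Model the two-prover one-round system as a \emph{game} $G$: a distribution $\mu$ over question pairs $(x,y)$ together with an acceptance predicate, where \emph{provers} are functions $x\mapsto a$ and $y\mapsto b$ and the \emph{value} $\omega(G)$ is the best achievable acceptance probability. Write $\delta=\omega(G)<1$ for the original error and $\epsilon=1-\delta$ for the gap, and let $s$ bound the total answer length. Fix an arbitrary prover strategy for the $\ell$-fold repetition $G^{\otimes\ell}$ and, in the induced probability space over question tuples $X=(X_1,\dots,X_\ell)$, $Y=(Y_1,\dots,Y_\ell)$ and answer tuples $A,B$, let $W_i$ be the event that coordinate $i$ is won; the quantity to bound is $\Pr[\bigcap_{i=1}^\ell W_i]=\omega(G^{\otimes\ell})$. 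The only easy fact is $\Pr[W_i]\le\delta$ for a single $i$ (fix the other coordinates by shared randomness to get a one-coordinate strategy), and the whole content of the theorem is that conditioning on winning a block of coordinates cannot keep pushing $\Pr[W_i\mid\cdot]$ back up toward $1$.

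The argument rests on one lemma: there is $\gamma=\gamma(\epsilon,s)>0$ such that for every $S\subseteq[\ell]$ with $|S|\le\ell/2$, if $\Pr[\bigcap_{j\in S}W_j]\ge 2^{-\gamma\ell}$ then some coordinate $i\notin S$ has $\Pr[W_i\mid\bigcap_{j\in S}W_j]\le 1-\epsilon/2$. Granting this, the theorem follows by \emph{greedy chaining}: build $S_0=\emptyset\subset S_1\subset\cdots$ by repeatedly appending the coordinate the lemma provides, so the running probability $p_t=\Pr[\bigcap_{j\in S_t}W_j]$ shrinks by a factor $\le 1-\epsilon/2$ per step, hence $p_t\le(1-\epsilon/2)^t$. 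This makes the chain \emph{self-limiting}: after $t^\ast=O(\gamma\ell/\epsilon)$ steps $p_{t^\ast}$ has dropped below $2^{-\gamma\ell}$, and for $\gamma$ small enough in terms of $\epsilon$ we keep $t^\ast\le\ell/2$ throughout, so the hypothesis $|S|\le\ell/2$ is always met. At termination $\omega(G^{\otimes\ell})=\Pr[\bigcap_iW_i]\le p_{t^\ast}<2^{-\gamma\ell}$, which is the claim with $c=\gamma$, a constant depending only on $\epsilon$ and $s$.

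To prove the lemma, suppose toward a contradiction that $\Pr[W_i\mid T]>1-\epsilon/2$ for \emph{every} $i\notin S$, where $T=\bigcap_{j\in S}W_j$; from this I extract a single-game strategy beating $\omega(G)=1-\epsilon$. Pass to the space conditioned on $T$. Since $\Pr[T]\ge 2^{-\gamma\ell}$, the relative entropy between the $T$-conditioned law and the original law is at most $\log(1/\Pr[T])\le\gamma\ell$; distributing this budget over the $\ell-|S|\ge\ell/2$ free coordinates via the chain rule shows that for an average free coordinate $i$, the $T$-conditioned law of $(X_i,Y_i)$ --- together with suitable public auxiliary data --- is within statistical distance $O(\sqrt{\gamma})$ of $\mu$ (Pinsker). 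Fix such a low-distortion coordinate $i$, which also satisfies $\Pr[W_i\mid T]>1-\epsilon/2$. The provers simulate $G$ on input $(x,y)\sim\mu$: using shared randomness they publicly sample the index $i$ and just enough conditioning data --- the questions, answers, and win-bits on $S$, plus a correlated draw of the other coordinates' questions --- so that Prover $1$, seeing only its half, can privately sample $A_i$ and Prover $2$, seeing only its half, can privately sample $B_i$, with the joint law of everything matching the $T$-conditioned law up to $O(\sqrt{\gamma})$ error. Then coordinate $i$ is won with probability at least $(1-\epsilon/2)-O(\sqrt{\gamma})$, which exceeds $1-\epsilon$ once $\gamma$ is small enough --- the desired contradiction.

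The main obstacle is exactly this last construction, the \emph{dependency-breaking} step. Each prover sees only half of the event $T$ (Prover $1$ the $X$-coordinates, Prover $2$ the $Y$-coordinates), so neither can unilaterally sample its answer from the $T$-conditioned distribution; the point is to reveal just enough public information that the two provers' conditional answer-distributions become close enough for a correlated-sampling lemma to let them jointly reproduce the target distribution with only $O(\sqrt{\gamma})$ loss --- while still leaving the marginal on $(X_i,Y_i)$ genuinely equal to $\mu$. Pinning down what to reveal, tracking the statistical distance against the relative-entropy budget (it is here that the answer length $s$ enters, through the information the revealed answer tuples carry about the free coordinates, which is what forces $c$ to degrade with $s$), and verifying the correlated-sampling guarantee is the technical heart; everything else is bookkeeping.
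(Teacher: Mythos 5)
The paper does not prove this lemma at all: it is cited from Raz \cite{Raz-1998} and, consistent with the paper's stated policy of treating background results as black boxes, used only as a plug-in to relax the hardness assumptions of the Lund--Yannakakis and Bellare \emph{et al.} theorems. So there is no in-paper proof to compare against; you have supplied a proof where the paper deliberately declined to.

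Taking your sketch on its own terms, it is a faithful high-level outline of the information-theoretic proof of parallel repetition --- Raz's argument as reorganized by Holenstein. The reduction to a single key lemma, the greedy-chaining argument with its self-limiting bound $t^\ast = O(\gamma\ell/\epsilon) \le \ell/2$ (which works because the lemma only gets stronger as $\gamma$ shrinks, so $\gamma$ can be taken below, say, $\epsilon/4$), the entropy budget $\log(1/\Pr[T]) \le \gamma\ell$ spread over the $\ge \ell/2$ free coordinates via the chain rule and Pinsker, and the identification of dependency-breaking as the crux are all correct and in the right order. You also correctly locate where the answer length $s$ enters: the revealed answer tuples carry information about the free coordinates' questions, and bounding that information is what makes $c$ degrade with $s$.

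That said, be careful about calling everything outside the dependency-breaking step ``bookkeeping'' and then leaving the dependency-breaking step itself as a paragraph of intentions. That step is not a technicality; it essentially \emph{is} the theorem. A complete proof must specify exactly what public data is sampled (in Holenstein's formulation, for each free coordinate $j$ one reveals either $X_j$ or $Y_j$ but not both, chosen by a public coin, together with the $S$-coordinates and their answers), prove that after this reveal the two provers' conditional distributions on the target coordinate are statistically close so a correlated-sampling lemma applies, and verify that the reveal neither determines coordinate $i$'s answers nor distorts the marginal on $(X_i, Y_i)$ away from $\mu$. Your outline is accurate and you are candid about where the weight sits, but as written the proof does not close that gap.
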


This implies that parallel repetition of a two-prover one-round proof system reduces the error of the system exponentially fast. Taken in combination with Theorems \ref{thm:inapproxly} and \ref{thm:inapproxly-rand}, the hardness assumptions can be relaxed, yielding the following:

\begin{theorem}
\label{thm:inapproxraz} ~
\begin{enumerate}
\item Unless $\text{NP} \subseteq \text{DTIME}(n^{\mathcal{O}(\log \log n)})$, \setcover{} cannot be approximated within a ratio of $c \log n$ for any positive $c < \frac{1}{4}$.
\item Unless $\text{NP} \subseteq \text{ZTIME}(n^{\mathcal{O}(\log \log n)})$, \setcover{} cannot be approximated within a ratio of $c \log n$ for any positive $c < \frac{1}{2}$.
\end{enumerate}
\end{theorem}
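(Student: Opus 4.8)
The plan is to reuse the Lund--Yannakakis reduction of Section~\ref{sec:ly} essentially unchanged, but to hand it a much cheaper proof system as input. The key observation is that in the analysis of the second part of Lemma~\ref{lem:refclaim} (Lemmas~\ref{lem:probability} and~\ref{lem:coversize}) the verifier's soundness error $\epsilon$ enters only through the requirement $\epsilon d^{2}=o(1)$, where $d=\Theta(\log N)$ and $N=|R\times B|$ is the universe of the constructed instance $\mathcal{S}_\varphi$: it never actually needs to be as small as $1/n$, only smaller than $1/\operatorname{polylog}(N)$. So instead of the Feige--Lov\'asz machinery that forces $n^{\operatorname{polylog}(n)}$ work, I will start from a constant-error system and amplify it just enough with Raz's Repetition Lemma (Lemma~\ref{lem:razrepetition}).

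First I would take the standard two-prover one-round proof system for $\textsc{SAT}$ coming from the PCP theorem: perfect completeness, some fixed constant soundness error $s_{0}<1$, $O(\log n)$ randomness, and $O(1)$-length answers. After the usual symmetrization this system has all the structural features demanded by Theorem~\ref{thm:satips} \emph{except} the $1/n$ error bound --- uniform queries to each prover, $q_{2}$ chosen independently of $q_{1}$, $|Q_{1}|=|Q_{2}|$, and the projection property that each pair $(r,a_{1})$ admits at most one accepting $a_{2}$. Next I would apply Lemma~\ref{lem:razrepetition} with $\ell$ parallel repetitions, obtaining a two-prover one-round system with soundness error $2^{-\Omega(\ell)}$, randomness $O(\ell\log n)$, and answer length $O(\ell)$. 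The point is that parallel product preserves perfect completeness, uniformity, independence of the two query coordinates, $|Q_{1}|=|Q_{2}|$, and --- crucially --- the projection property (a product of projection constraints is again a projection constraint), so the repeated system is a legitimate input to the reduction of Section~\ref{sec:ly}.

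Now set $\ell=\Theta(\log\log n)$ with a sufficiently large implied constant. Then $|R|=2^{O(\ell\log n)}=n^{O(\log\log n)}$, $|A_{2}|=2^{O(\ell)}$, the special-set parameter is $d=\Theta(\log|R|)=\Theta(\ell\log n)$, and hence the universe has size $N=|R|\cdot|B|=n^{O(\log\log n)}$ --- whether $|B|=\mathcal{O}(2^{2d}|A_{2}|^{2})$ from the deterministic special-set construction or $|B|=(d+d\ln|A_{2}|+2)2^{d}$ from the randomized one. A large enough constant in $\ell$ makes $2^{-\Omega(\ell)}=o\!\left(1/(\ell\log n)^{2}\right)=o(1/d^{2})$, which is exactly the hypothesis under which the $(1-\delta)$ factor in Lemma~\ref{lem:coversize} becomes $1-o(1)$. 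Running the reduction, the yes-case argument is unchanged and gives $\textsf{OPT}(\mathcal{S}_\varphi)=|Q_{1}|+|Q_{2}|$; the no-case argument, with the new error plugged into Lemmas~\ref{lem:probability} and~\ref{lem:coversize}, gives $\textsf{OPT}(\mathcal{S}_\varphi)\ge(1-o(1))\frac{d}{2}(|Q_{1}|+|Q_{2}|)$. With the deterministic construction one has $d\ge\frac12\log N$, so this is at least $c\log N\cdot(|Q_{1}|+|Q_{2}|)$ for any $c<\frac14$; with the randomized construction one can arrange $\log N\le(1+o(1))d$, giving any $c<\frac12$. Since the whole construction runs in $\text{DTIME}(n^{O(\log\log n)})$ in the first case and $\text{ZTIME}(n^{O(\log\log n)})$ in the second, a polynomial-time $c\log n$-approximation for $\setcover$ with $c<\frac14$ (resp. $c<\frac12$) would decide $\textsc{SAT}$, hence all of NP, in $\text{DTIME}(n^{O(\log\log n)})$ (resp. $\text{ZTIME}(n^{O(\log\log n)})$), which are precisely the two assertions of the theorem.

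The main obstacle is the quantitative bookkeeping around the choice of $\ell$: one must check that the smallest number of repetitions pushing $2^{-\Omega(\ell)}$ below the relevant $1/\operatorname{polylog}$ threshold is $\Theta(\log\log n)$ and no larger, which means untangling the mild circularity that $d$ --- and hence the polylog we must beat --- itself grows linearly with $\ell$; because $\log(d^{2})=O(\log\ell+\log\log n)$ this does close, but it has to be verified carefully, and it is what pins the exponent at $\log\log n$ rather than $\log n$. The secondary point needing care is confirming that Raz's lemma genuinely applies to, and its conclusion preserves, exactly the structural hypotheses of Theorem~\ref{thm:satips} that Section~\ref{sec:ly} consumes --- above all the ``unique accepting $a_{2}$'' projection property --- after which everything else in the reduction, including Lemma~\ref{lem:mainclaimly} and its corollaries, can be invoked verbatim.
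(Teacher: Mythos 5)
Correct, and this is exactly the route the paper takes (to the extent that it gives any proof at all): apply Raz's parallel repetition to a base two-prover system with constant error and logarithmic randomness, take $\ell=\Theta(\log\log n)$ repetitions so the soundness error $2^{-\Omega(\ell)}$ drops below $1/d^2$ while $|R|$, $|A_2|$, and hence the special-set universe and the \setcover{} universe all stay at $n^{O(\log\log n)}$, and then run the Lund--Yannakakis reduction of Section~\ref{sec:ly} verbatim, obtaining $c<\frac14$ from the deterministic special-set bound $|B|=\mathcal{O}(2^{2d}|A_2|^2)$ and $c<\frac12$ from the randomized bound $|B|=(d+d\ln|A_2|+2)2^d$. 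The paper compresses all of this into a single sentence, so the substance of your write-up — checking that $\ell=\Theta(\log\log n)$ resolves the circularity between $d$ and the soundness threshold, and that parallel repetition preserves perfect completeness, uniformity, $|Q_1|=|Q_2|$, and the projection property that Theorem~\ref{thm:satips} supplies and Lemma~\ref{lem:mainclaimly} consumes — is precisely the bookkeeping that is implicitly being asserted, and it is sound.
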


Both parts of this theorem are strict improvements on the results of \cite{LY-1994}; the first part is also a strict improvement over the second part of Theorem \ref{thm:inapproxbglr} from \cite{BGLR-1993}.

\section{Naor, Schulman, and Srinivasan, 1995 \cite{NSS-1995}}

\subsection{\protect$(n, k)$-Universal Sets}

The key component of this paper was the development of new techniques for constructing $(n, k)$-universal sets, which are defined as follows:

\begin{definition}[$(n, k)$-Universal Set]
\label{def:nkuniversalset}
An $(n, k)$-universal set $T \subseteq \{ 0, 1 \}^n$ is a minimal set of bitstrings such that, for any set of indices $S \subseteq [n]$ with $|S| = k$, the projection of $T$ on $S$ (defined as the set of subsequences of bitstrings of $T$ taken at the indices in $S$) is exactly the set $\{ 0, 1 \}^k$. 
\end{definition}

Naor, Schulman, and Srinivasan developed new deterministic procedures for constructing these sets in \cite{NSS-1995}. Their main result is the following:

\begin{theorem}
There is a deterministic, explicit construction of $(n, k)$-universal sets of size $2^k k^{\mathcal{O}(\log k)} \log n$.
\end{theorem}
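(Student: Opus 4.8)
The target size $2^k k^{\mathcal{O}(\log k)}\log n$ has three clearly separable sources, and I would organize the construction so that each factor is produced by its own self-contained gadget, via a two-phase reduction: Phase~1 removes the dependence on $n$, and Phase~2 handles a universe polynomial in $k$.

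\textbf{Phase 1 (perfect hashing; this produces the $\log n$).} I would first build an explicit family $\mathcal{H}$ of functions $[n]\to[m]$ with $m=\mathrm{poly}(k)$ such that every $k$-subset $S\subseteq[n]$ is mapped injectively by some $h\in\mathcal{H}$, with $|\mathcal{H}|=k^{\mathcal{O}(1)}\log n$. Given $\mathcal{H}$ and \emph{any} $(m,k)$-universal set $T'$, the set $\{\,(t'_{h(1)},\dots,t'_{h(n)}) : h\in\mathcal{H},\ t'\in T'\,\}$ is $(n,k)$-universal: for a given $S$, pick $h$ injective on $S$; then as $t'$ ranges over $T'$ its restriction to the $k$ indices $h(S)\subseteq[m]$ realizes all of $\{0,1\}^k$. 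The size is $|\mathcal{H}|\cdot|T'|$, so it remains to build $(m,k)$-universal sets of size $2^k k^{\mathcal{O}(\log k)}$ for $m=\mathrm{poly}(k)$. Existence of such an $\mathcal{H}$ (say with $m=k^2$) is a one-line union bound — a uniform random $h$ is injective on a fixed $S$ with probability $\ge\frac12$, and $\log_2\binom{n}{k}=\mathcal{O}(k\log n)$ — so the real content is derandomizing this into a deterministic explicit construction, which I would obtain from the hashing/splitter machinery of NSS itself and treat essentially as a black box.

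\textbf{Phase 2 (recursive splitting; this produces the $2^k$ and the $k^{\mathcal{O}(\log k)}$).} Call a family $\mathcal{B}$ of functions $[m]\to[\ell]$ an \emph{$(m,k,\ell)$-splitter} if for every $k$-set $S$ some $h\in\mathcal{B}$ is balanced on $S$, i.e.\ $|h^{-1}(j)\cap S|\le\lceil k/\ell\rceil(1+o(1))$ for every $j$. A Chernoff bound over the $\ell$ buckets shows a random $h$ is balanced on a fixed $S$ with probability $\ge\frac12$ provided the buckets are not too small, so splitters of size $\mathcal{O}(\log\binom{m}{k})=k^{\mathcal{O}(1)}$ exist when $m=\mathrm{poly}(k)$; NSS derandomize this, and I would cite/black-box that step. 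The combining step mirrors Phase~1 block by block: given an $(m,k,\ell)$-splitter $\mathcal{B}$ and, for each block, a $(\lceil m/\ell\rceil,\ \lceil k/\ell\rceil(1+o(1)))$-universal set, independently concatenating one string from each block's universal set and then ranging over $h\in\mathcal{B}$ yields an $(m,k)$-universal set of size $|\mathcal{B}|\cdot\prod_{\text{blocks}}(\text{block universal set size})$. I would iterate this combiner, driving the ``number of marked coordinates'' parameter down from $k$ toward $\mathrm{polylog}(k)$ along a carefully chosen schedule of block counts, keeping every sub-universe polynomial in its current parameter, and terminating the recursion at parameter $\mathrm{polylog}(k)$ with a direct greedy construction of a universal set, computable in time $2^{\mathrm{polylog}(k)}$, which is within the ``explicit'' budget. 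The base cases should multiply out to $2^{k(1+o(1))}$ and the $k^{\mathcal{O}(1)}$ splitter-family sizes should accumulate, in the exponent, to $\mathcal{O}(\log^2 k)$, giving $k^{\mathcal{O}(\log k)}$ overall.

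\textbf{Main obstacle.} Two things are genuinely delicate. The first is derandomizing the splitter and perfect-hash-family existence proofs; this is the bulk of NSS's paper and I would carry it as a black box. The second, and the real combinatorial crux of assembling the stated bound, is the Phase~2 schedule: because the combiner \emph{multiplies} sizes, a badly-staged or insufficiently-balanced recursion easily inflates the $2^k$ to $2^{\mathcal{O}(k)}$ — losing the sharp constant $1$ in the exponent — or inflates the overhead to $k^{\mathrm{polylog}(k)}$. Choosing the block counts and the stopping threshold so that the $1+o(1)$ slack factors telescope to $1+o(1)$, the base cases contribute $2^{k+\mathcal{O}(\log^2 k)}$ rather than $2^{\Theta(k)}$, and the per-level $k^{\mathcal{O}(1)}$ costs sum to $\mathcal{O}(\log^2 k)$ in the exponent, is where the argument must be done with care, and is where I expect most of the effort to go.
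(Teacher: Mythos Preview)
The paper does not actually prove this theorem: it is quoted as a black-box result from Naor, Schulman, and Srinivasan \cite{NSS-1995}, with the single remark that it is ``a corollary of their more general result on $k$-restriction problems.'' There is nothing in the paper to compare your argument against.

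That said, what you have sketched is essentially the NSS construction itself, and the two-phase decomposition (perfect hashing to kill the dependence on $n$, then recursive splitting on a $\mathrm{poly}(k)$-size universe) is the right shape. Your identification of the delicate point --- controlling the multiplicative blowup in Phase~2 so that the leading exponential stays $2^k$ rather than $2^{ck}$ and the overhead stays $k^{\mathcal{O}(\log k)}$ --- is accurate, and your willingness to black-box the derandomization of the splitter and perfect-hash families is appropriate given that this is where the technical weight of \cite{NSS-1995} lies. For the purposes of this survey, which explicitly treats such constructions as preliminaries, simply citing the theorem as the paper does would suffice; your sketch goes well beyond that and is a reasonable outline of the underlying argument.
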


The theorem is a corollary of their more general result on $k$-restriction problems. Their procedure makes the deterministic construction much more efficient, yielding performance on par with the randomized construction. This unifies the deterministic and randomized lower bounds, yielding the following refinement of Theorem \ref{thm:inapproxraz}:

\begin{theorem}
\label{thm:inapproxnss}
Unless $\text{NP} \subseteq \text{DTIME}(n^{\mathcal{O}(\log \log n)})$, \setcover{} cannot be approximated within a ratio of $c \log n$ for any positive $c < \frac{1}{2}$.
\end{theorem}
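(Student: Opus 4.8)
The plan is to observe that the proof behind Theorem~\ref{thm:inapproxraz}(2) --- the $\mathrm{ZTIME}(n^{\mathcal{O}(\log\log n)})$ bound with constant $c<\tfrac12$ --- is deterministic in every step \emph{except} the construction of the special set system $\beta_{m,d}$, where Section~\ref{sec:ly} resorts to a randomized construction in order to shrink the universe from $|B| = \mathcal{O}(2^{2d}m^2)$ down to $|B| = (d+d\ln m+2)2^d$, roughly its square root. The Naor--Schulman--Srinivasan theorem provides a \emph{deterministic, explicit} construction of $(m,d)$-universal sets (Definition~\ref{def:nkuniversalset}) of size $2^{d}d^{\mathcal{O}(\log d)}\log m$, which matches the randomized bound up to subexponential factors. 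Substituting it for the randomized special set system turns the Las Vegas reduction into a deterministic one of the same asymptotic size, upgrading $\mathrm{ZTIME}$ to $\mathrm{DTIME}$ while keeping $c<\tfrac12$.

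In more detail, I would first recall the pipeline underlying Theorem~\ref{thm:inapproxraz}(2): start from a two-prover one-round proof system for \textsc{SAT} with constant soundness error, logarithmic randomness and logarithmic answer length (from the PCP theorem \cite{AS-1998} plus the standard clause-versus-variable parallelization, arranged to satisfy the structural conditions of Theorem~\ref{thm:satips}); apply the Raz Repetition Lemma~\ref{lem:razrepetition} with $\ell = \Theta(\log\log n)$ parallel repetitions to push the error down to $\varepsilon(n) = 1/\mathrm{polylog}(n)$ at the cost of randomness $\mathcal{O}(\log n\log\log n)$ and answer length $\mathcal{O}(\log n\log\log n)$; and then run the Lund--Yannakakis reduction of Section~\ref{sec:ly}. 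I would set $m=|A_2|$ and $d = \lceil\max(k\log|R|,\,k\log|A_2|)\rceil$ for a large constant $k$ (rounding up to keep $d$ even), and build $\beta_{m,d}$ as an NSS $(m,d)$-universal set: its defining property is exactly what Corollary~\ref{cor:specialcor} requires, since the projection of the universal set onto any $d$ coordinates realizes all of $\{0,1\}^d$, so no $d$ of the $C_a$'s and their complements over distinct indices can cover $B$. Because $\log(|R||B|) = (1+o(1))d$ with this smaller universe, the soundness analysis of Section~\ref{sec:ly} (using $\varepsilon(n)\le\delta/d^2$, hence $\delta = o(1)$, in Lemmas~\ref{lem:probability} and~\ref{lem:coversize}) gives $\textsf{OPT}(\mathcal{S}_\varphi)\ge(1-o(1))\tfrac{d}{2}(|Q_1|+|Q_2|)\ge(\tfrac12-o(1))\log N\cdot(|Q_1|+|Q_2|)$ in the NO case, against $\textsf{OPT}(\mathcal{S}_\varphi)=|Q_1|+|Q_2|$ in the YES case, where $N=|R\times B|$; thus for every fixed $c<\tfrac12$ the $c\log N$ gap holds for all large $n$.

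It remains to check the running time. The instance has $N = |R||B| = 2^{\mathcal{O}(\log n\log\log n)}$ universe elements, the verifier's functions $f$ and $\Pi$ are polynomial-time computable after repetition, and the NSS construction runs deterministically in time polynomial in its (explicitly bounded) output size. Hence $\varphi\mapsto\mathcal{S}_\varphi$ is computable deterministically in time $\mathrm{poly}(N)=n^{\mathcal{O}(\log\log n)}$, so a polynomial-time $c\log N$-approximation of \setcover{} with $c<\tfrac12$ would decide \textsc{SAT} in $\mathrm{DTIME}(n^{\mathcal{O}(\log\log n)})$, giving $\mathrm{NP}\subseteq\mathrm{DTIME}(n^{\mathcal{O}(\log\log n)})$.

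The main obstacle --- really the only new content beyond re-threading known pieces --- is verifying that the NSS construction simultaneously keeps $\log|B|$ at $(1+o(1))d$ rather than $(2+o(1))d$ (this is exactly what separates the constant $\tfrac12$ from $\tfrac14$) \emph{and} runs in time polynomial in $|B|$, so that, combined with the $\Theta(\log\log n)$-fold Raz repetition, the total instance size stays $n^{\mathcal{O}(\log\log n)}$ rather than $n^{\mathrm{polylog}(n)}$. Confirming that the uniqueness and regularity hypotheses invoked by the lemmas of Section~\ref{sec:ly} survive parallel repetition is also needed, but is routine by comparison.
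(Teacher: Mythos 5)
Your proposal matches the paper's approach exactly: the paper's (very brief) argument is precisely that the Naor--Schulman--Srinivasan deterministic $(m,d)$-universal-set construction of size $2^d d^{\mathcal{O}(\log d)}\log m$ matches the randomized special-set-system bound up to $2^{o(d)}$ factors, so substituting it into the pipeline behind Theorem~\ref{thm:inapproxraz}(2) upgrades $\mathrm{ZTIME}$ to $\mathrm{DTIME}$ while preserving the constant $c<\tfrac{1}{2}$. Your write-up supplies the details (the identification of the universal-set property with Corollary~\ref{cor:specialcor}, the $\log|B|=(1+o(1))d$ calculation, and the $n^{\mathcal{O}(\log\log n)}$ running-time bookkeeping) that the paper elides, and all of them check out.
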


\subsection{\protect$(n, k, b)$-Anti-Universal Sets}

In addition to $(n, k)$-universal sets, another result of Naor \emph{et al.} on restriction games involves a structure known as $(n, k, b)$-anti-universal sets, which were initially proposed by Feige and are a cornerstone of the reduction in \cite{Feige-1998}. They are defined as follows:

\begin{definition}[$(n, k, b)$-Anti-Universal Set]
\label{def:nkbantiuniversalset}
An $(n, k, b)$-anti-universal set $\mathcal{T} \subseteq [n] \times [b]$ is a family of functions $t_i : [n] -> [b]$ where, for every pair of vectors $u \in [n]^k$ and $v \in [b]^k$, there is a function $t_j \in \mathcal{T}$ which transforms $u$ into a vector that disagrees with $v$ on each coordinate (meaning that $t_j$ takes every element $u_i \in u$ to something \emph{other} than the corresponding element $v_i \in v$). 
\end{definition}

This is a generalization of $(n, k)$-universal sets, which are simply the special case where $b = 2$. Naor \emph{et al.} achieve the following result:

\begin{theorem}
\label{thm:nkbantiuniversalsetconstruction}
For any fixed $b$, there is a deterministic, explicit construction of $(n, k, b)$-anti-universal sets of size $\left( \frac{b}{b - 1} \right)^k k^{\mathcal{O}(\log k)} \log n$.
\end{theorem}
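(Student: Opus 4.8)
The plan is to construct an $(n,k,b)$-anti-universal set by a reduction to a $k$-restriction problem and then invoke the general $k$-restriction machinery that underlies the $(n,k)$-universal set construction of Theorem stated just above. First I would set up the $k$-restriction formulation: the ground ``alphabet'' is $[b]$, and a candidate object is a function $t : [n] \to [b]$. For each pair $(u,v) \in [n]^k \times [b]^k$ together with a choice of $k$ coordinate positions $i_1 < \cdots < i_k$, I would define a ``demand'' that is satisfied by $t$ precisely when $t(i_\ell) \neq v_\ell$ for all $\ell$ with $u_\ell = i_\ell$ — i.e.\ the local pattern on those $k$ positions avoids the forbidden values. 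The density of a random function satisfying a single demand is at least $\left(\frac{b-1}{b}\right)^k$, since each of the (at most) $k$ coordinates is hit with the forbidden value independently with probability $1/b$. Feeding this density bound into the generic $k$-restriction derandomization gives a family of size $\left(\frac{b}{b-1}\right)^k \cdot k^{\mathcal{O}(\log k)} \cdot \log n$, which is exactly the claimed bound; the $k^{\mathcal{O}(\log k)}$ factor and the $\log n$ factor come from the recursive splitting and the base-case covering in the general construction, respectively.

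Next I would carry out the explicitness argument. The $k$-restriction framework yields a deterministic algorithm whose running time is polynomial in $n$, in the number of demands, and in $b^k$; since for fixed $b$ the quantity $b^k$ is of the same order as the target size (up to the quasipolynomial-in-$k$ overhead), and the number of demand types is $n^{\mathcal{O}(k)} b^{\mathcal{O}(k)}$, the whole construction runs in time polynomial in its output size for fixed $b$. I would then translate the resulting family of functions back into the notation of Definition~\ref{def:nkbantiuniversalset}, identifying $\mathcal{T}$ with the set of functions produced and verifying that the anti-universality condition is literally the statement that every demand is met by some member of the family.

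The step I expect to be the main obstacle is getting the density estimate and the demand encoding to match the $k$-restriction template exactly, rather than merely approximately. In particular, a naive encoding creates one demand for every triple $(u,v,\text{position set})$, but many of these are redundant or have density far above $\left(\frac{b-1}{b}\right)^k$, and the general theorem's size bound is sensitive to using the \emph{worst-case} density; one must argue that restricting attention to the ``tight'' demands (those where all $k$ coordinates are genuinely constrained) suffices, because looser demands are implied by tighter ones. A secondary subtlety is that the $(n,k)$-universal set result is quoted only for $b=2$, so I would need to confirm that the cited $k$-restriction theorem of Naor et al.\ is stated for general alphabet size $b$ (it is — the $(n,k)$-universal set theorem is explicitly described as the $b=2$ specialization), and then simply instantiate it. Once the encoding is pinned down, the remainder is a direct substitution into the $k$-restriction size formula.
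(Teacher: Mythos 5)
The paper does not prove this theorem; it is stated as a black-box result cited from Naor, Schulman, and Srinivasan~\cite{NSS-1995} and is used downstream to instantiate Feige's partition system deterministically, so there is no paper proof to compare your proposal against. That said, your sketch does follow the route the original NSS paper takes: cast the object as an instance of their general $k$-restriction problem, use the single-demand density bound $\left(\frac{b-1}{b}\right)^k$, and invoke the derandomized construction to obtain the size $\left(\frac{b}{b-1}\right)^k k^{\mathcal{O}(\log k)} \log n$.

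One thing worth cleaning up in your demand encoding: you introduce both $u \in [n]^k$ and a separate ``choice of $k$ coordinate positions $i_1 < \cdots < i_k$,'' but $u$ already specifies those positions; the demand should read simply $t(u_\ell) \neq v_\ell$ for all $\ell \in [k]$, and the condition ``for all $\ell$ with $u_\ell = i_\ell$'' is an artifact of this doubling-up. Once you remove the redundancy, the ``tight vs.\ loose demands'' worry largely disappears; the only genuine caveat is the standard one that the coordinates of $u$ should be distinct for the density estimate to hold exactly, which is implicit in the $k$-restriction template (and is all one needs, since repeated coordinates give strictly weaker demands subsumed by the distinct case for $b \geq 2$ when the corresponding $v$-entries agree, and an unsatisfiable demand otherwise, which the definition does not require covering).
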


In combination with the reduction of Section \ref{sec:feige}, this is sufficient to prove Theorem \ref{thm:inapproxfeige}.

\section{Feige, 1998 \cite{Feige-1998}}
\label{sec:feige}

Feige provides the first crucial innovation by generalizing the frameworks of \cite{LY-1994} and by beginning from MAX 3SAT-5 rather than conventional SAT. The ultimate result is as follows:

\begin{theorem}
\label{thm:inapproxfeige}
Unless $\text{NP} \subseteq \text{DTIME}(n^{\mathcal{O}(\log \log n)})$, \setcover{} cannot be approximated within a ratio of $c \ln n$ for any positive $c < 1$.
\end{theorem}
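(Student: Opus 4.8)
The plan is to run the Lund--Yannakakis reduction of Section~\ref{sec:ly} in a sharpened form, feeding it (i) a cleaner hard starting point, (ii) a soundness-amplified proof system obtained from Raz's parallel repetition, and (iii) a tight combinatorial gadget --- Feige's \emph{partition system}, realized through the explicit $(n,k,b)$-anti-universal sets of Theorem~\ref{thm:nkbantiuniversalsetconstruction} --- in place of the ``special set system.'' Concretely, I would start not from plain \textsc{SAT} but from the gap version of \textsc{Max 3SAT-5} (every variable in exactly five clauses): by the PCP theorem it is NP-hard to distinguish a satisfiable instance from one in which every assignment falsifies an $\epsilon$ fraction of clauses, and the bounded occurrence makes the underlying clause/variable game smooth and regular, which is what the set-counting argument needs.

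Next I would build the proof system. From a gap \textsc{Max 3SAT-5} instance, form the standard two-prover one-round game --- prover~1 is sent a clause and returns a satisfying local assignment, prover~2 is sent one variable of that clause and returns its value, the verifier checks agreement --- whose soundness is a constant below~$1$; then repeat it $\ell$ times in parallel and invoke Lemma~\ref{lem:razrepetition} to push the error down to $2^{-c\ell}$. Choosing $\ell=\Theta(\log\log n)$ makes the error $1/\mathrm{polylog}(n)$ while keeping the randomness $O(\ell\log n)$ and the answer lengths $O(\ell)$ --- this is what keeps every later construction inside $\mathrm{DTIME}(n^{O(\log\log n)})$. The repeated game retains the projection property: for each random string $r$ and each prover-1 answer $a_1$ there is a unique prover-2 answer $\pi_r(a_1)$ the verifier would accept.

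Now the gadget. A partition system $B(m,L,d)$ is a ground set of $m$ points with $L$ distinct partitions, each into $d$ blocks, such that any subfamily of blocks that covers the ground set while using at most one block from each partition has size at least $(1-\epsilon)\,d\ln m$; Theorem~\ref{thm:nkbantiuniversalsetconstruction} provides such systems deterministically in quasi-polynomial time with $L,d$ as large as needed and $m$ polynomially bounded, and --- crucially --- the covering threshold has leading constant~$1$, not the $\tfrac12$ implicit in the special set system of Section~\ref{sec:ly}. Then I would mimic the construction of Lemma~\ref{lem:mainclaimly}: universe $R\times B$, one set per query--answer pair, with the prover-1 set for $(q_1,a_1)$ and the prover-2 sets attached so that on each random string $r$ a ``consistent pair'' $S_{q(r,1),a_1},\ S_{q(r,2),\pi_r(a_1)}$ exactly tiles $\{r\}\times B$, while any cover of $\{r\}\times B$ avoiding a consistent pair is forced by the partition-system property to be large. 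Completeness is then immediate exactly as in Section~\ref{sec:ly}: a satisfying prover strategy yields a cover whose size is the number of queries.

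The soundness direction is the step I expect to carry essentially all the difficulty. One assumes a cover $\mathcal C$ of size below $(1-o(1))\ln(|R||B|)$ times the number of queries and decodes it into provers: averaging over $r$, on almost all random strings $\mathcal C$ is too cheap on $\{r\}\times B$ to avoid a consistent pair, so those strings admit mutually accepting answers among the answers that $\mathcal C$ names at their queries; a pigeonhole / randomized decoding --- each prover answers a query by a uniformly random answer named by $\mathcal C$ at that query --- then makes the verifier accept with probability exceeding $2^{-c\ell}$, contradicting Lemma~\ref{lem:razrepetition}. Making this airtight requires choosing $m$, $d$, and $\ell$ simultaneously so that (a) $\ln|R|=o(\ln m)$, so the gap is $(1-o(1))\ln(|R||B|)$ rather than a constant factor less; (b) the decoding loss --- roughly the number of answers named per query --- stays below $2^{c\ell}$; (c) the anti-universal set of Theorem~\ref{thm:nkbantiuniversalsetconstruction} with these parameters is still buildable within the $\mathrm{DTIME}(n^{O(\log\log n)})$ budget; and it also requires the bookkeeping (new relative to Section~\ref{sec:ly}) that removes the spurious factor of $\tfrac12$ from the edge-counting. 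Once these are balanced, a polynomial-time $c\ln n$-approximation for \setcover{} with $c<1$ would decide the gap \textsc{Max 3SAT-5} instance, hence \textsc{SAT}, in $\mathrm{DTIME}(n^{O(\log\log n)})$, which is Theorem~\ref{thm:inapproxfeige}.
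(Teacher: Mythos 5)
Your overall scaffolding is right — start from gap \textsc{Max 3SAT-5}, amplify via parallel repetition, replace the LY special-set gadget with a tighter partition system built from anti-universal sets, then decode a small cover into good prover strategies — and this does match the outline of Section~\ref{sec:feige}. But there is a concrete gap in the middle that your proposal does not address, and it is exactly the gap that Feige's proof is built to close: \emph{you stay with a two-prover system, and a two-prover system cannot give $\ln n$}.

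Here is why. In the LY/Raz setting, each random seed $r$ is ``tiled'' by a pair of sets, one from each prover, which together form a partition of $\{r\}\times B$ into two pieces (a special set $C$ and its complement $\overline C$). The covering threshold of such a two-way gadget is $\approx\log_2 m$, while completeness costs $2$ sets per seed, so the achievable gap is $\tfrac12\log_2 m = \tfrac{1}{2\ln 2}\ln m \approx 0.72\ln m$. That $\tfrac12$ is \emph{not} bookkeeping slack in the edge-counting lemma that a cleverer count removes — it is forced by the two-prover, two-block structure. Your proposal introduces a partition system whose partitions have $d$ blocks, but then continues to say that ``a consistent pair $S_{q(r,1),a_1}, S_{q(r,2),\pi_r(a_1)}$ exactly tiles $\{r\}\times B$.'' Two sets cannot tile a $d$-way partition for $d>2$; and if you collapse back to $2$ blocks per partition you are in the LY regime and only get $0.72\ln m$. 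So as written the gadget and the proof system are mismatched.

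What Feige actually does, and what your proposal is missing, is the generalization to a \emph{$k$-prover} protocol (for an arbitrarily large constant $k$), built from a binary code with $k$ words of length $\rho=\Theta(\log\log n)$, weight $\rho/2$, and pairwise distance $\geq\rho/3$: prover $P_i$ is sent clauses or distinguished variables according to the bits of its codeword, so each prover holds half clauses and half variables, and ``consistency'' is checked pairwise between provers. The seed's copy of the partition system has exactly $k$ blocks per partition, one per prover, so that a correct joint answer tiles it with $k$ sets, while the threshold for any cover not using two blocks from the same partition is $d=(1-f(k))k\ln m$ with $f(k)\to 0$; the ratio $d/k\to\ln m$. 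This is the whole source of the leading constant~$1$. Alongside this, Feige introduces the weak/strong acceptance distinction (some pair of provers consistent vs.\ all pairs consistent) to control the pairwise Raz losses, which your decoding step would also need. Without the $k$-prover / $k$-block structure and the codeword-based query distribution, the argument you sketch tops out at the NSS bound $\tfrac12\log_2 n$ of Theorem~\ref{thm:inapproxnss}, not the $\ln n$ of Theorem~\ref{thm:inapproxfeige}.
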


We will proceed to give a proof of this theorem following the constructions given by Feige.

\subsection{Preliminary Constructions}

In this section, we describe several constructions and concepts necessary for Feige's proof, many of which are generalizations of constructions in Section \ref{sec:ly}.

\subsubsection{Interactive Proof System for MAX 3SAT-5}

The problem of MAX 3SAT-5 is a particular member of a family of SAT problems with restrictions on the number of clauses each variable appears in. The problem statement is:

\begin{definition}[MAX 3SAT-5]
Consider a CNF boolean formula $\varphi$ on $n$ variables and $\frac{5}{3}n$ clauses, where the following three conditions hold (such formulas are known as 3SAT-5 formulas):
\begin{itemize}
\item Each clause contains 3 literals.
\item Each variable appears in exactly 5 clauses.
\item No variable appears in the same clause more than once.
\end{itemize}
The problem of MAX 3SAT-5 is to determine the maximum number of clauses that are simultaneously satisfiable.
\end{definition}

The following is a theorem of Feige:

\begin{theorem}
\label{thm:max3sat5distinguishability}
For some $\varepsilon > 0$, it is NP-hard to distinguish between 3SAT-5 formulas that are completely satisfiable, and those with at most a $(1 - \varepsilon)$ fraction of clauses simultaneously satisfiable.
\end{theorem}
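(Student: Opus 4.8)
The plan is to obtain this gap result from the PCP theorem by a gap-preserving \emph{degree-reduction} (bounded-occurrence) reduction. We treat the PCP theorem as a black box in the following form: there is a constant $\varepsilon_0 > 0$ such that it is NP-hard to distinguish satisfiable 3CNF formulas from 3CNF formulas in which no assignment satisfies more than a $(1 - \varepsilon_0)$ fraction of the clauses (equivalently, MAX 3SAT is NP-hard to approximate within $1 - \varepsilon_0$). The obstacle is that a generic 3CNF formula produced this way may have variables of unbounded occurrence and clauses with repeated literals, so the bulk of the work is a sequence of local transformations that enforce the ``3SAT-5'' structure while losing only a constant factor in the gap.

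The central step is a standard expander-replacement argument. Given a 3CNF formula $\psi$, for each variable $x$ occurring $d_x$ times introduce $d_x$ fresh copies $x_1, \dots, x_{d_x}$ and replace the $i$-th occurrence of $x$ by $x_i$. To tie the copies together, fix a constant-degree expander $H_x$ on the vertex set $\{1, \dots, d_x\}$ and, for each edge $\{i,j\}$ of $H_x$, add the two ``equality'' clauses $(\lnot x_i \lor x_j)$ and $(x_i \lor \lnot x_j)$ encoding $x_i = x_j$; call the result $\psi'$. Completeness is immediate, since a satisfying assignment of $\psi$ lifts to $\psi'$ by setting all copies equal. For soundness, take any assignment to $\psi'$, let $b_x$ be the majority value among the copies of $x$, and let $\psi''$ set every copy of $x$ to $b_x$. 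At most $d_x/2$ copies of $x$ disagree with $b_x$, so by the edge-expansion of $H_x$ the number of violated equality clauses for $x$ is $\Omega(1)$ times the number of disagreeing copies; hence the number of original-clause occurrences that $\psi''$ ``flips'' relative to the given assignment is $O(1)$ times the total number of violated equality clauses. Consequently, if the given assignment satisfied a $(1 - \delta)$ fraction of the clauses of $\psi'$, then $\psi''$, which is a genuine (consistent) assignment to $\psi$, satisfies a $(1 - O(\delta))$ fraction of the clauses of $\psi$. Since $\psi'$ has only a constant number of clauses per original variable and per original clause, a constant gap in $\psi$ transfers to a constant gap in $\psi'$, and every literal of $\psi'$ now occurs a bounded number of times.

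It remains to normalize $\psi'$ into an exact 3SAT-5 instance: pad every clause with fewer than three literals (and eliminate repeated literals within a clause) by introducing auxiliary variables so that every clause has exactly three distinct literals; then equalize occurrence counts by ``topping up'' underused variables with harmless always-satisfiable clauses on fresh companion variables and re-splitting overused variables via a further bounded-degree expander gadget, iterating a constant number of times until every occurrence count is exactly $5$. Each step multiplies the number of clauses by a constant and perturbs the achievable satisfied fraction only by an additive $O(\delta)$, so composing all reductions gives a polynomial-time map from 3CNF formulas to 3SAT-5 formulas sending satisfiable instances to satisfiable instances and $(1 - \varepsilon_0)$-unsatisfiable instances to $(1 - \varepsilon)$-unsatisfiable instances for a suitable constant $\varepsilon > 0$, which is exactly the claimed statement.

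I expect the main obstacle to be the soundness analysis of the expander step: one must choose the expander parameters so that the rounding-to-majority argument genuinely loses only a constant factor, which requires the quantitative edge-expansion bound (every vertex subset $T$ of size at most half the vertices has $\Omega(|T|)$ edges leaving it), together with care that the constants accumulated across the several normalization steps do not conspire to drive $\varepsilon$ to $0$. Tracking these constants — and confirming that the required constant-degree expanders are polynomial-time constructible — is the delicate part; the remainder is bookkeeping.
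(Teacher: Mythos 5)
The paper itself gives no proof of Theorem \ref{thm:max3sat5distinguishability}: it is stated as ``a theorem of Feige'' and used as a black box, consistent with the paper's stated policy of treating background results this way. So there is no ``paper's proof'' to compare against; what you have written is a sketch of the argument Feige himself invokes (the PCP theorem combined with Papadimitriou--Yannakakis-style expander replacement to bound variable occurrences), and that is indeed the right and standard route to this statement.

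The high-level argument is sound: the majority-rounding analysis correctly uses edge expansion to charge each flipped copy to $\Omega(1)$ violated equality clauses, and since each copy appears in exactly one original clause, the rounding changes the satisfaction status of at most $O(1)$ original clauses per violated equality clause, which preserves a constant gap. The part of the proposal that would need real work if carried out in full is the final normalization to \emph{exactly} three distinct literals per clause and \emph{exactly} five occurrences per variable. Two points you gloss over: (i) the equality clauses have two literals, and padding $(a \lor b)$ to three literals must be done in a gap-preserving way (e.g., replacing it by the pair $(a \lor b \lor z)$, $(a \lor b \lor \lnot z)$ rather than by a single padded clause, since a single fresh literal trivializes the clause); and (ii) adding ``harmless always-satisfiable clauses'' to top up occurrence counts dilutes the unsatisfiable fraction, so one must bound their number relative to the total and verify this only costs a constant factor in the gap. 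Additionally, after a $k$-regular expander replacement each copy occurs $2k+1$ times, which already exceeds five for $k \geq 2$, so landing on exactly five is not just ``iterating a constant number of times'' but requires choosing the gadget so that occurrence counts come out right (Feige's own construction is tuned so that the $5/3$ clause-to-variable ratio falls out). These are finite bookkeeping issues, not conceptual gaps, but the proposal as written would not compile into a complete proof without addressing them.
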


It is this gap in distinguishability that will ultimately lead to the gap reduction.

Feige also presents an explicit protocol for a two-prover system for MAX 3SAT-5. (See Definition \ref{def:ips} for the structure of such a system.) This is different from the methodology of Lund and Yannakakis, who used a theorem asserting the existence of such a protocol without actually describing it.

A two-prover, one-round protocol for MAX 3SAT-5 is as follows:
\begin{itemize}
\item The first prover is a function $P_1 : \left[ \frac{5n}{3} \right] \to \{0, 1\}^3$. This can be thought of as a function receiving the index of some clause in $\varphi$, and returning a prospective assignment to the three variables in the clause in order.
\item The second prover is a function $P_2 : \{ 1, 2, 3 \} \to \{ 0, 1 \}$. This can be thought of as a function being told to look at the first, second, or third variable in the clause that $P_1$ received, and giving that variable a prospective assignment.
\item The verifier accepts if both of the following are true:
\begin{itemize}
\item The assignment given by $P_1$ satisfies the clause.
\item The assignment given by $P_2$ agrees with the assignment that $P_1$ gave the variable in question.
\end{itemize}
\end{itemize}

Analysis of this proof system results in the following theorem:

\begin{theorem}
Let $\varphi$ be a 3CNF-5 formula, and let $\varepsilon$ be the fraction of unsatisfied clauses in the assignment that satisfies the \emph{most} possible clauses. Then, under the optimal strategy of the provers, the verifier accepts with probability $1 - \frac{\varepsilon}{3}$.
\end{theorem}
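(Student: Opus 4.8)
The plan is to prove the equality by exhibiting a prover strategy that attains acceptance probability $1-\frac{\varepsilon}{3}$ (the lower bound) and then showing that no strategy can exceed it (the upper bound). Throughout I condition on the uniformly random clause $C$ drawn by the verifier and average at the end, using that $i\in\{1,2,3\}$ is chosen independently and uniformly, and that the three variables of any clause of a 3SAT-5 formula are distinct.

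For the lower bound, fix an optimal assignment $\psi^*$, i.e. one falsifying exactly an $\varepsilon$-fraction of the clauses. Let $P_2$ answer according to $\psi^*$: on the queried variable $x$ it returns $\psi^*(x)$. Let $P_1$ answer with a \emph{locally repaired} version of $\psi^*$: on clause $C$, if $\psi^*$ satisfies $C$ then $P_1$ returns the restriction of $\psi^*$ to the variables of $C$; otherwise (all three literals of $C$ are falsified) $P_1$ flips exactly one of the three variables so that the returned assignment satisfies $C$. If $\psi^*$ satisfies $C$, then $P_1(C)$ satisfies $C$ and agrees with $P_2$ on every variable of $C$, so the verifier accepts for all three values of $i$. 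If $\psi^*$ falsifies $C$, then $P_1(C)$ satisfies $C$ but differs from $\psi^*$ on exactly one of the three variables, so the verifier accepts for exactly two of the three values of $i$. Averaging gives acceptance probability $(1-\varepsilon)\cdot 1+\varepsilon\cdot\frac{2}{3}=1-\frac{\varepsilon}{3}$.

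For the upper bound, fix arbitrary strategies $P_1,P_2$ and extract the global assignment $\psi$ defined by $\psi(x)=P_2(x)$; by the definition of $\varepsilon$, $\psi$ satisfies at most a $(1-\varepsilon)$-fraction of the clauses. Conditioned on $C$: if $P_1(C)$ does not satisfy $C$ the verifier rejects; if $P_1(C)$ satisfies $C$ but $\psi$ does not, then $P_1(C)$ and $\psi$ must disagree on at least one variable of $C$ (else $\psi$ would satisfy $C$ too), so the verifier accepts for at most two of the three values of $i$; only when $\psi$ itself satisfies $C$ can the conditional acceptance probability reach $1$. Writing $s$ for the fraction of clauses satisfied by $\psi$, the overall acceptance probability is at most $s\cdot 1+(1-s)\cdot\frac{2}{3}=\frac{2}{3}+\frac{s}{3}\le 1-\frac{\varepsilon}{3}$ since $s\le 1-\varepsilon$. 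The two bounds coincide, proving the claim.

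The step I expect to require the most care is the upper-bound case analysis, specifically the observation that $P_1$ is \emph{forced} to satisfy the clause for the verifier to have any chance of accepting, so that on every clause falsified by the extracted assignment $\psi$ the two provers are necessarily out of sync on at least one coordinate. This is exactly what caps the contribution of the bad $\varepsilon$-fraction of clauses at $\frac{2}{3}$ rather than $1$; the matching lower bound is then just the remark that this cap is saturated by the local-repair strategy.
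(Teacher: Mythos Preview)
Your proof is correct and is the standard argument. The paper itself does not prove this theorem; it only writes ``Analysis of this proof system results in the following theorem'' and states the result, so there is no paper proof to compare against. Your two-sided argument---the local-repair strategy for the lower bound and the extraction of the global assignment $\psi=P_2$ for the upper bound---is precisely the analysis in Feige's original paper. One small remark: the survey's type signature $P_2:\{1,2,3\}\to\{0,1\}$ is imprecise; your proof (correctly) treats $P_2$ as receiving the actual variable name, which is the protocol the theorem is about and without which the extraction step $\psi(x)=P_2(x)$ would not make sense.
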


Note that the error in this proof system is still high. To reduce the error, the system can be run in parallel: choose $\ell$ clauses and one variable from each clause, and then have the provers provide assignments to everything at once. By Lemma \ref{lem:razrepetition}, this reduces the error to $2^{-c\ell}$ for some positive constant $c$ dependent only on the original error and the answer lengths; because the original error is independent of $n$ and the answer lengths are fixed, the new error is thus $2^{-c \ell}$ for a universal constant $c$.

\subsubsection{Generalization to \protect$k$ Provers}

Let $k$ be an arbitrarily large constant. Consider a binary code containing $k$ code words of length $\rho = \Theta (\log \log n)$, weight $\frac{\rho}{2}$, and pairwise Hamming distance at least $\frac{\rho}{3}$. Such a code can be used to construct a $k$-prover, one-round protocol for MAX 3SAT-5:
\begin{itemize}
\item The verifier selects $\rho$ clauses $C_1, \dots, C_\rho$ uniformly randomly, and selects one \emph{distinguished variable} in each clause $x_1, \dots, x_\rho$ uniformly randomly.

\item Associate each prover $P_i$ with a code word of length $\rho$. Because the code word has weight $\frac{\rho}{2}$, half of its bits are 0 and half are 1. Consider bit $j$ of the code word.
\begin{itemize}
\item If bit $j$ is 0, prover $P_i$ is sent the distinguished variable $x_j$.
\item If bit $j$ is 1, prover $P_i$ is sent the clause $C_j$.
\end{itemize}
In this way, each prover receives $\frac{\rho}{2}$ clauses and $\frac{\rho}{2}$ distinguished variables to assign, and does not receive any distinguished variables from the clauses it is assigning.

\item The answer each prover $P_i$ provides is a length $2\rho$ bitstring consisting of assignments of all variables sent to $P_i$: the $\frac{\rho}{2}$ distinguished variables, and the $\frac{3 \rho}{2}$ variables in the clauses.
\end{itemize}

We say that provers $P_i$ and $P_j$ are \emph{consistent} if their assignments to the distinguished variables agree on each coordinate of their respective code words. It does not matter whether $P_i$ and $P_j$ agree on non-distinguished variables of clauses, or whether they are each internally consistent on assigning the distinguished variable the same truth value in different clauses.

To establish a clear gap, Feige introduces the notion of weak and strong acceptance of a given formula: the system \emph{weakly accepts} a formula if \emph{some} pair of provers is consistent, and it \emph{strongly accepts} a formula if \emph{every} pair of provers is consistent. A diagram of the $k$-prover extension can be seen in \ref{fig:verifier-kprover}.

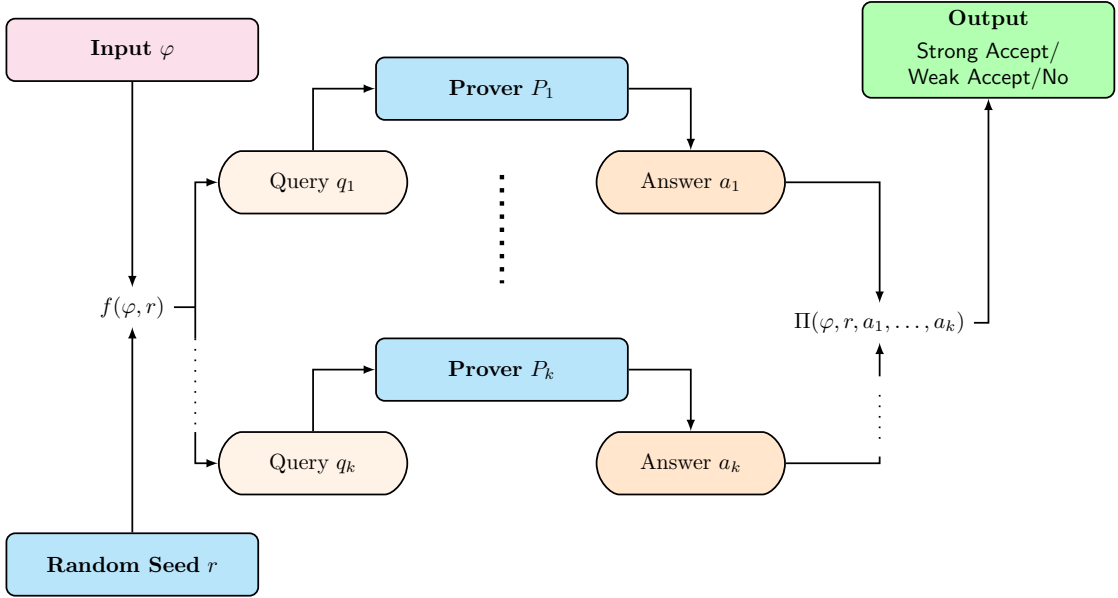
\begin{figure}[tb]
\centering
\resizebox{\textwidth}{!}{%
\begin{tikzpicture}[node distance=3cm, font=\small, thick, >=latex]

\node[inputconstruction]
	(Input)
	{\textbf{Input} $\varphi$};

\node[element, below right of = Input, xshift=0.75cm]
	(Question1)
	{Query $q_1$};
	
\node[element, below of = Question1, yshift=-1.5cm]
	(QuestionK)
	{Query $q_k$};

\node[construction, below left of = QuestionK, xshift=-0.75cm, yshift=0.5cm]
	(Seed)
	{\textbf{Random Seed} $r$};

\node[construction, right of = Question1, yshift=1.5cm]
	(Prover1)
	{\textbf{Prover} $P_1$};

\node[construction, right of = QuestionK, yshift=1.5cm]
	(ProverK)
	{\textbf{Prover} $P_k$};

\node[goalelement, right of = Prover1, yshift=-1.5cm]
	(Answer1)
	{Answer $a_1$};

\node[goalelement, right of = ProverK, yshift=-1.5cm]
	(AnswerK)
	{Answer $a_k$};
	
\path (Input) -- (Seed) node[midway] (FunctionF) {$f(\varphi, r)$};
\path (Answer1) -- (AnswerK) node[coordinate,midway] (Dummy2) {};
\path (Prover1) -- (ProverK) node[coordinate,midway] (MidDotsCenter) {};
\path (Prover1) -- (MidDotsCenter) node[coordinate,midway] (MidDotsBegin) {};
\path (MidDotsCenter) -- (ProverK) node[coordinate,midway] (MidDotsEnd) {};

\node[coordinate, right of = FunctionF, xshift=-2cm]
	(Dummy1)
	{};

\node[coordinate, below of = Dummy1, yshift=2.5cm]
	(DotsBeginLeft)
	{};

\node[coordinate, below of = DotsBeginLeft, yshift=1.5cm]
	(DotsEndLeft)
	{};

\node[right of = Dummy2]
	(FunctionPi)
	{$\Pi (\varphi, r, a_1, \dots, a_k)$};

\node[below of = FunctionPi, yshift=2cm]
	(DotsEndRight)
	{};

\node[below of = DotsEndRight, yshift=2cm]
	(DotsBeginRight)
	{};

\node[goalconstruction, right of = Input, xshift = 10.6cm]
	(Output)
	{\parbox{3cm}{\centering \textbf{Output} \\[0.3em]\textsf{Strong Accept}/\\\textsf{Weak Accept}/\textsf{No}}};

\draw[->] (Input.south) -- (FunctionF);
\draw[->] (Seed.north) -- (FunctionF);
\draw (FunctionF.east)  -- (Dummy1);
\draw[->] (Dummy1) |- (Question1);
\draw (Dummy1) -- (DotsBeginLeft);
\draw[loosely dotted] (DotsBeginLeft) -- (DotsEndLeft);
\draw[->] (DotsEndLeft) |- (QuestionK);

\draw[->] (Question1.north) |- (Prover1);
\draw[->] (QuestionK.north) |- (ProverK);

\draw[->] (Prover1.east) -| (Answer1);
\draw[->] (ProverK.east) -| (AnswerK);

\draw[->] (Answer1.east) -| (FunctionPi);
\draw (AnswerK.east) -| (DotsBeginRight);
\draw[loosely dotted] (DotsBeginRight) -- (DotsEndRight);
\draw[->] (DotsEndRight) -- (FunctionPi);

\draw[->] (FunctionPi.east) -| (Output);

\draw[line width = 2pt,loosely dotted] (MidDotsBegin) -- (MidDotsEnd);

\end{tikzpicture}
}%
\caption{A $k$-prover one-round proof system simply extends the 2-prover model.}
\label{fig:verifier-kprover}
\end{figure}

Based on this, Feige then provides the following lemma:

\begin{lemma}
\label{lem:3sat-5gap}
Consider a 3CNF-5 formula $\varphi$.
\begin{itemize}
\item If $\varphi$ is satisfiable, then there exists a strategy of the provers that always strongly accepts.
\item If at most a $(1 - \varepsilon)$ fraction of clauses of $\varphi$ are simultaneously satisfiable, then the verifier weakly accepts with probability at most $k^2 \cdot 2^{-c\rho}$, where $k$ is the number of provers and $c$ is a positive constant depending only on $\varepsilon$.
\end{itemize}
\end{lemma}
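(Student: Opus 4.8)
The plan is to handle the two bullets separately. The first (completeness) is immediate from a satisfying assignment; the second (soundness) is a union bound over pairs of provers, after which each pair is controlled by Raz's parallel repetition lemma (Lemma~\ref{lem:razrepetition}) applied to the basic two-prover MAX 3SAT-5 game.

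For completeness, suppose $\varphi$ is satisfiable and fix a satisfying assignment $\sigma$. Instruct every prover to answer according to $\sigma$: a clause query is answered by $\sigma$ restricted to the three variables of that clause, and a distinguished-variable query by the value $\sigma$ assigns that variable. For any pair $P_i,P_j$ and any coordinate $\ell$, both provers report $\sigma(x_\ell)$ for the distinguished variable of coordinate $\ell$ --- directly if they were sent $x_\ell$, and as a component of the clause answer if they were sent the clause $C_\ell$ containing it --- so the two agree on every coordinate and the pair is consistent. Since $\sigma$ also satisfies every clause, every prover passes any clause check, so the verifier strongly accepts with probability $1$.

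For soundness, assume at most a $(1-\varepsilon)$ fraction of the clauses of $\varphi$ are simultaneously satisfiable and fix an arbitrary $k$-prover strategy. The verifier weakly accepts iff some pair $(P_i,P_j)$ is consistent, so by a union bound over the $\binom{k}{2}<k^2$ pairs it is enough to bound the consistency probability of a single fixed pair by $2^{-c\rho}$. Let $w_i,w_j$ be the code words of the pair. By the code's distance property they differ on at least $\rho/3$ coordinates, and on each such coordinate exactly one of the two provers receives the clause and the other the distinguished variable; passing to the more frequent of the two ``directions'' yields a set $J$ of at least $\rho/6$ coordinates on which, say, $P_i$ always receives the clause and $P_j$ the distinguished variable. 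Because the per-coordinate choices of clause and distinguished variable are independent and uniform, the interaction of $P_i$ and $P_j$ on $J$ is exactly the $|J|$-fold parallel repetition of the basic two-prover MAX 3SAT-5 game, and a consistent pair whose clause prover passes its clause check wins every one of these copies. The extra queries the two provers see on coordinates outside $J$ are distributed independently of the sub-game on $J$, so they amount only to shared-plus-private side randomness and cannot raise the value of the repeated game; Raz's lemma therefore applies. By Theorem~\ref{thm:max3sat5distinguishability} and the analysis of the basic two-prover protocol, that game has value at most $1-\varepsilon/3<1$, bounded away from $1$ by a quantity depending only on $\varepsilon$, so Lemma~\ref{lem:razrepetition} gives consistency probability at most $2^{-c'|J|}\le 2^{-c\rho}$ with $c>0$ depending only on $\varepsilon$. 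Multiplying by the number of pairs yields the stated bound $k^2\cdot 2^{-c\rho}$.

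The main obstacle is making the reduction to parallel repetition airtight: one must check that restricting a $k$-prover strategy to one pair and to the coordinates on which their code words disagree really produces a strategy for a \emph{product} of independent copies of a single fixed two-prover game (this is where one uses that clauses and distinguished variables are drawn independently per coordinate), and that the queries a prover still receives on the remaining coordinates cannot be exploited because they are independent of the sub-game. There is also a bookkeeping point that ``consistency'' by itself would be trivially achievable, so one must use that the verifier additionally requires each prover to return a satisfying assignment to every clause it is sent (exactly as in the two-prover protocol) in order for a consistent pair to actually win the base-game copies; carrying this condition correctly through the union bound is the last detail to pin down.
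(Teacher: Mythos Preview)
Your proposal is correct and matches the paper's approach: the paper itself gives no detailed proof, only the one-line remark that ``the error term in the second case is due to pairwise invocation of Lemma~\ref{lem:razrepetition},'' and your argument is exactly the standard way to cash that out --- union bound over the $\binom{k}{2}<k^2$ pairs, then on each pair restrict to a linear-in-$\rho$ set of coordinates where the code words disagree to obtain an instance of parallel repetition of the base two-prover game. The subtlety you flag at the end (that bare consistency is trivial unless clause answers are constrained to satisfying assignments) is real and is implicit in Feige's protocol; you are right to note it, and carrying it through is routine once stated.
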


The error term in the second case is due to pairwise invocation of Lemma \ref{lem:razrepetition}.

\subsubsection{Partition System}

Feige also defines a \emph{partition system}, a generalization of the special set system in Definition \ref{def:specialset} and an invocation of the $(n, k, b)$-anti-universal set of Definition \ref{def:nkbantiuniversalset}.

\begin{definition}[Partition System]
\label{def:partitionsystem}
A \emph{partition system} $\beta(m, \Ell, k, d)$ has the following properties:
\begin{itemize}
\item There is a universe set $\beta$, with $|\beta| = m$.
\item There is a collection of $\Ell$ distinct partitions of $\beta$, $p_1, \dots, p_\Ell$.
\item For $1 \leq i \leq \Ell$, partition $p_i$ is a collection of $k$ disjoint subsets of $\beta$ whose union is $\beta$.
\item Any cover of $\beta$ by subsets such that no two subsets are from the same partition requires at least $d$ subsets.
\end{itemize}
\end{definition}

Feige obtains the following results:

\begin{lemma}
For every $c \geq 0$ and sufficiently large $m$, there exists a partition system $\beta (m, \Ell, k, d)$ such that all of the following hold:
\begin{itemize}
\item $\Ell \simeq (\log m)^c$.
\item $k$ may be chosen arbitrarily as long as $k < \ln \frac{m}{3} \ln \ln m$.
\item $d = (1 - f(k)) k \ln m$, where $\lim\limits_{k \to \infty} f(k) = 0$.
\end{itemize}
Moreover, such a partition system can be constructed in $\text{ZTIME}(m^{\mathcal{O}(\log m)})$. By Theorem \ref{thm:nkbantiuniversalsetconstruction}, there also exists a deterministic construction taking time linear in $m$ and satisfies $m = \left( \frac{k}{k - 1} \right)^d d^{\mathcal{O}(\log d)} \log \Ell$ for arbitrary constant $k$.
\end{lemma}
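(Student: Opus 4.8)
The plan is to give two separate constructions: a probabilistic one, which additionally yields the $\text{ZTIME}(m^{\mathcal{O}(\log m)})$ bound through a Las Vegas verification step, and a direct appeal to Theorem~\ref{thm:nkbantiuniversalsetconstruction} for the explicit deterministic bound. For the probabilistic construction, fix $c$, put $\Ell = \lceil(\log m)^c\rceil$, and build the system by independent random coloring: to each element $x$ of a universe $\beta$ with $|\beta| = m$ and each index $i \in [\Ell]$, assign a color $p_i(x) \in [k]$, uniformly at random and independently of all other choices. Partition $p_i$ is then the partition of $\beta$ into its $k$ color classes $\{x : p_i(x) = j\}$ for $j \in [k]$; these are automatically $k$ disjoint sets covering $\beta$, and for $k = 2$ this is precisely the special set system of Definition~\ref{def:specialset}. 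It remains to check the cover property and, with high probability, the distinctness of the partitions.

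Call a family of classes a \emph{legal cover} if it uses at most one class from each partition; a legal cover of size $t$ is specified by a set $I \subseteq [\Ell]$ with $|I| \le t$ together with a color $j_i \in [k]$ for each $i \in I$. A fixed $x$ is missed by this cover exactly when $p_i(x) \ne j_i$ for every $i \in I$, an event of probability $(1 - 1/k)^t$ by independence across $i$, so the cover succeeds (covers $\beta$) with probability $(1 - (1 - 1/k)^t)^m$. There are at most $\binom{\Ell}{t}k^t \le (\Ell k)^d$ legal covers of each size $t < d$, so by a union bound the probability that some legal cover of size below $d$ succeeds is at most $d\,(\Ell k)^d\,(1 - (1 - 1/k)^{d-1})^m$. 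Taking $d = (1 - f(k))\,k\ln m$ and $f(k) = C_0/k$ for a large enough absolute constant $C_0$ (so $f(k)\to 0$), the bound $-\ln(1 - 1/k) \le 1/k + 1/k^2$ gives $(1 - 1/k)^{d-1} \ge m^{-(1 - f(k)/2)}$, hence $(1 - (1 - 1/k)^{d-1})^m \le \exp(-m^{f(k)/2})$. The constraints $\Ell \simeq (\log m)^c$ and $k < \tfrac{\ln m}{3\ln\ln m}$ force $d < \tfrac{\ln^2 m}{3\ln\ln m}$ and $\ln(\Ell k) = O(\ln\ln m)$, so $\ln(d\,(\Ell k)^d) = O(\ln^2 m)$, which (for $C_0$ large) is dominated by $m^{f(k)/2}$; thus for all sufficiently large $m$ the displayed bound is $o(1)$. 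A far cruder union bound over the $\binom{\Ell}{2}$ pairs of indices shows the $\Ell$ partitions are pairwise distinct except with vanishing probability, so a valid partition system exists for every fixed $c$ and all large $m$. For the $\text{ZTIME}$ statement, sample such a system and check it by brute force: enumerate all $(\Ell k)^{\mathcal{O}(d)} = m^{\mathcal{O}(\log m)}$ legal covers of size below $d$, verify none covers $\beta$, and verify pairwise distinctness; each trial runs in time $m^{\mathcal{O}(\log m)}$ and succeeds with probability at least $\tfrac12$, so $O(1)$ expected trials give a Las Vegas algorithm of the stated complexity.

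For the deterministic construction, invoke Theorem~\ref{thm:nkbantiuniversalsetconstruction} with anti-universal parameters $(n, k, b) = (\Ell, d, k)$: since $k$ is a constant this produces, explicitly and in time linear in $\Ell$ (hence in $m$), a family $\mathcal{T} = \{t_x : [\Ell] \to [k]\}$ of size $m = (\tfrac{k}{k-1})^{d} d^{\mathcal{O}(\log d)}\log\Ell$ with the property that for all $u \in [\Ell]^{d}$ and $v \in [k]^{d}$ some $t_x \in \mathcal{T}$ has $t_x(u_\ell) \ne v_\ell$ for every $\ell$. Set $\beta := \mathcal{T}$, so $|\beta| = m$, and let partition $p_i$ place $x \in \beta$ into its $t_x(i)$th class. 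If a legal cover given by indices $i_1, \dots, i_t$ with $t < d$ (padded arbitrarily to a $d$-tuple $u$) and colors $j_1, \dots, j_t$ (padded to $v$) covered $\beta$, the anti-universal property would supply an $x$ with $t_x(i_s) \ne j_s$ for all $s$, i.e.\ an element lying outside every chosen class — contradiction; hence every legal cover uses at least $d$ classes, as required. Taking logarithms in the displayed size identity and using $-\ln(1 - 1/k) = \tfrac1k + O(1/k^2)$ together with $\log\Ell = O(\log\log m)$ and $d^{\mathcal{O}(\log d)} = m^{o(1)}$, one gets $\ln m = \tfrac{d}{k}(1 + O(1/k)) + o(\ln m)$, i.e.\ $d = (1 - f(k))\,k\ln m$ with $f(k)\to 0$, matching the claimed parameters.

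The main obstacle is the parameter balancing in the second paragraph: the slack $f(k)$ must be chosen small enough to vanish as $k\to\infty$, yet — because the random analysis only delivers $(1 - 1/k)^t$ in place of $e^{-t/k}$ — large enough, essentially $f(k) = \Theta(1/k)$ with a sufficiently large implied constant, that the surviving margin $\exp(m^{f(k)/2})$ genuinely beats the $m^{\mathcal{O}(\log m)}$ count of legal covers. The hypothesis $k < \tfrac{\ln m}{3\ln\ln m}$ is exactly what holds that count at $m^{\mathcal{O}(\log m)}$, and it is the interaction of these two estimates, rather than either one alone, that has to be handled with care.
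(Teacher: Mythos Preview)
The paper does not actually prove this lemma: it is stated as one of the preliminary results that the paper explicitly black-boxes, attributing the probabilistic construction to Feige~\cite{Feige-1998} and the deterministic construction to Theorem~\ref{thm:nkbantiuniversalsetconstruction} of Naor, Schulman, and Srinivasan~\cite{NSS-1995}. There is therefore nothing in the paper to compare your argument against.

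That said, your proof is correct and is essentially a reconstruction of the arguments in those original sources. The random-coloring construction with the union bound over legal covers is exactly Feige's argument, and your parameter balancing --- taking $f(k) = \Theta(1/k)$ so that the residual uncovered-probability $m^{-(1 - \Theta(1/k))}$ leaves a surviving factor $\exp(-m^{\Theta(1/k)})$ that beats the $m^{\mathcal{O}(\log m)}$ count of covers --- is the right way to extract the stated bounds, and your use of the hypothesis $k < \tfrac{\ln m}{3\ln\ln m}$ to cap that count is exactly its purpose. The Las Vegas verification is likewise the standard route to the $\text{ZTIME}$ claim. For the deterministic part, your identification of an $(\Ell, d, k)$-anti-universal set with a partition system (elements $\leftrightarrow$ functions, partitions $\leftrightarrow$ coordinates, classes $\leftrightarrow$ values) is precisely the intended reading of Theorem~\ref{thm:nkbantiuniversalsetconstruction}, and the size formula transfers verbatim. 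One cosmetic point: the inequality $-\ln(1 - 1/k) \le 1/k + 1/k^2$ you invoke holds only for $k \ge 2$, but that is harmless here.
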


\subsection{The Proof}

Because the core lemma of Feige's paper relies on the construction, we defer presenting it until after the initial setup.

\subsubsection{Initial Setup}

Consider a $k$-prover system for MAX 3SAT-5. The verifier's randomness is represented by a seed bitstring $r$ of length $\rho \log 5n$. There are $R = (5n)^{\rho}$ such seeds. Each seed induces a selection of $\rho$ clauses and $\rho$ distinguished variables, one from each of the clauses. Next, each seed $r \in R$ is associated with a distinct partition system $\beta_r (m, \Ell, k, d)$ with the following properties:

\begin{itemize}
\item $\Ell = 2^{\rho}$
\item $m = n^{\Theta(\rho)}$
\item $d = (1 - f(k))k \ln m$
\end{itemize}

Note that the corresponding universe of our \setcover{} instance will be $\mathcal{B} = \bigcup_r \beta_r$. This makes the size of the universe $N = mR = n^{\Theta(\rho)} (5n)^\rho$.

Within a given partition system $\beta_r$, each of the $2^{\rho}$ partitions $p_{r, j}$ is associated with a $\rho$-bit string $s_{r, j}$, representing assignments to the distinguished variables. Each subset within $p_{r, j}$ is associated with a unique prover $i$; denote the subset associated with prover $i$ in partition $j$ of system $\beta_r$ by $\beta_r [j, i]$.

We seek to generate a subset $S_{q, a, i}$ for each question-answer pair $(q, a)$ of prover $P_i$. As before, let $q(r, i)$ be the question sent to prover $P_i$ when using seed $r$. For all $r$ such that $q = q(r, i)$, consider the induced selection of distinguished variables given by the bitstring $r$. From the corresponding answer $a$, consider the induced assignment $a_r$ to the distinguished variables on each coordinate. Because every assignment is represented by some partition in the system $\beta_r$, some partition $p^*_r$ in that system has $a_r$ as its associated string. Consider the subset $\beta_r [p^*_r, i]$. Define

$$S_{q, a, i} = \bigcup_{\{r \mid q = q(r, i)\}} \beta_r [p^*_r, i]$$

or the union of the subset corresponding to prover $P_i$ in the partition corresponding to the answers of $a$, across all partition systems whose seed asks question $q$. This generates the sets of our \setcover{} instance. A diagram of the dependencies in this construction can be found in Figure \ref{fig:construction-feige}.

\begin{figure}[tb!]
\centering
\begin{tikzpicture}[node distance=1.5cm, font=\small,thick, >=Latex]

\node[goalconstruction]
	(SetCover)
	{\textbf{Set Cover Instance}};

\node[goalelement, below of = SetCover]
	(SetCoverU)
	{Universe Set};

\node[goalelement, below of = SetCoverU]
	(SetCoverS)
	{Subsets};

\node[construction, below right of = SetCover, xshift=4.25cm, yshift=0.5cm]
	(Partitions)
	{\textbf{Collection of Partition Systems}};

\node[element, below of = Partitions]
	(PartitionsBeta)
	{Universe Sets};

\node[element, below of = PartitionsBeta]
	(PartitionsPi)
	{Partitions};

\node[element, below of = PartitionsPi]
	(PartitionSets)
	{Sets in Partitions};

\node[construction, below left of = SetCover, xshift=-4cm, yshift=0.5cm] 
	(Verifier)
	{\textbf{Verifier for MAX 3SAT-5}};

\node[element, below of = Verifier]
	(VerifierR)
	{Random Seeds};

\node[element, below of = VerifierR]
	(VerifierQ)
	{Question Sets};

\node[element, below of = VerifierQ]
	(VerifierA)
	{Answer Sets};

\node[element, below of = VerifierA]
	(VerifierP)
	{Provers};

\path (VerifierQ) -- (VerifierA) node[coordinate,midway] (Dummy1Helper) {};

\node [coordinate, right of = Dummy1Helper, xshift = 1cm] (Dummy1) {};
\node [coordinate, right of = VerifierR, xshift = 1cm] (Dummy2) {};
\node [coordinate, left of = PartitionsBeta, xshift = -1cm] (Dummy3) {};

\path (VerifierA) -- (PartitionSets) node[coordinate,midway] (Dummy4Helper) {};
\path (Dummy4Helper) -- (PartitionSets) node[coordinate, midway] (Dummy4) {};

\node [coordinate, right of = VerifierQ, xshift = 1cm] (Dummy5) {};
\node [coordinate, left of = PartitionsPi, xshift = -1cm] (Dummy6) {};

\draw (VerifierR.east) -- (Dummy2);
\draw[->] (Dummy2) |- (SetCoverU);

\draw (PartitionsBeta.west) -- (Dummy3);
\draw[->] (Dummy3) |- (SetCoverU);

\draw[->] (VerifierP) -| (PartitionSets);
\draw (VerifierA) -| (Dummy4);
\draw[->] (Dummy4) -- (PartitionsPi.south west);
\draw[->] (PartitionSets) -- (PartitionsPi);

\draw (VerifierQ.east) -- (Dummy5);
\draw[->] (Dummy5) |- (SetCoverS);

\draw (PartitionsPi.west) -- (Dummy6);
\draw[->] (Dummy6) |- (SetCoverS);

\end{tikzpicture}
\caption{A diagram of Feige's proof structure.}
\label{fig:construction-feige}
\end{figure}
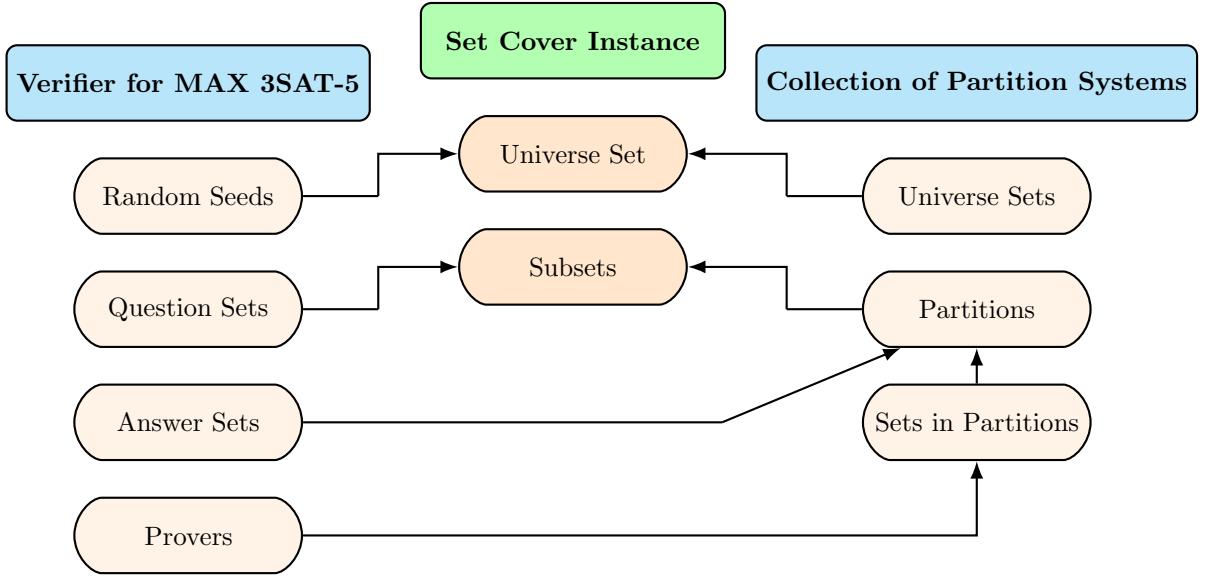

Finally, observe that each prover can be asked a total of $Q = n^{\rho / 2} \cdot \left( \frac{5n}{3} \right)^{\rho / 2}$ questions. We now state the core lemma:

\begin{lemma}
\label{lem:mainclaimfeige}
Let $\varphi$ be a 3CNF-5 formula.
\begin{itemize}
\item If $\varphi$ is satisfiable, then the universe set $\mathcal{B}$ of $N = mR$ points can be covered by $kQ$ subsets.
\item If at most a $(1 - \varepsilon)$ fraction of the clauses of $\varphi$ are simultaneously satisfiable, then $\mathcal{B}$ requires $(1 - 2f(k))kQ \ln m$ subsets to be covered, where $\lim\limits_{k \to \infty} f(k) = 0$.
\end{itemize}
\end{lemma}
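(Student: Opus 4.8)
The plan is to prove the two bullets separately, mirroring the two halves of the Lund--Yannakakis argument: the satisfiable case by exhibiting an explicit cover of size $kQ$, and the far-from-satisfiable case by contradiction against the weak-acceptance bound of Lemma~\ref{lem:3sat-5gap}. Fix at the outset the large constant $k$, let $\varepsilon$ be as in Theorem~\ref{thm:max3sat5distinguishability}, let $c=c(\varepsilon)$ be the constant of Lemma~\ref{lem:3sat-5gap}, and set $\rho=C\log\log n$ for a constant $C=C(c)$ to be pinned down at the end; recall $d=(1-f(k))k\ln m$ and $\Ell=2^{\rho}$.

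\textbf{Completeness.} Assume $\varphi$ is satisfiable. By Lemma~\ref{lem:3sat-5gap} fix a prover strategy that always strongly accepts; for each prover $P_i$ and question $q$ it prescribes one answer $a_i(q)$, so the $kQ$ sets $\{S_{q,a_i(q),i}\}_{i,q}$ are well defined. Fix a seed $r$. Strong acceptance forces the answers $a_1(q(r,1)),\dots,a_k(q(r,k))$ to be pairwise consistent, hence each induces the \emph{same} assignment $s\in\{0,1\}^{\rho}$ to the $\rho$ distinguished variables selected by $r$; so for every $i$ the partition $p^*_r$ attached to $a_i(q(r,i))$ is the one partition $p$ of $\beta_r$ labelled by $s$, and $S_{q(r,i),a_i(q(r,i)),i}\supseteq\beta_r[p,i]$. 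Since $\beta_r[p,1],\dots,\beta_r[p,k]$ is a partition of $\beta_r$, these $k$ of the chosen sets cover $\beta_r$; ranging over all $r$ covers $\mathcal{B}=\bigcup_r\beta_r$ with $kQ$ sets.

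\textbf{Soundness.} Suppose at most a $(1-\varepsilon)$ fraction of the clauses of $\varphi$ are simultaneously satisfiable, and let $\mathcal{C}$ be any cover of $\mathcal{B}$ with $|\mathcal{C}|<(1-2f(k))kQ\ln m$; we derive a contradiction. For a question $q$ to $P_i$ set $A_{q,i}=\{a:S_{q,a,i}\in\mathcal{C}\}$, $t_{q,i}=|A_{q,i}|$, and for a seed $r$ let $\kappa(r)=\sum_{i=1}^{k}t_{q(r,i),i}$, the number of sets of $\mathcal{C}$ meeting $\beta_r$. Restricted to a fixed $\beta_r$, every $S_{q,a,i}$ is empty or equals the block $\beta_r[p,i]$ of the partition $p$ labelled by the distinguished-variable assignment that $a$ induces on $r$; hence $\beta_r$ is covered by at most $\kappa(r)$ blocks. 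Call $r$ \emph{good} if $\kappa(r)<d$. If $r$ is good, then fewer than $d$ blocks cover $\beta_r$, so by the defining property of the partition system (Definition~\ref{def:partitionsystem}) two of them, say $\beta_r[p,i]$ and $\beta_r[p,i']$ with $i\neq i'$, lie in one partition $p$; unwinding the construction yields $a\in A_{q(r,i),i}$ and $a'\in A_{q(r,i'),i'}$ that induce the same assignment to $r$'s distinguished variables, i.e.\ $P_i\mapsto a$, $P_{i'}\mapsto a'$ is a consistent pair for $r$. Now each of $P_i$'s $Q$ questions is asked by exactly $R/Q$ seeds, so $\sum_r\kappa(r)=\tfrac{R}{Q}|\mathcal{C}|$, while each bad seed contributes at least $d$; writing $\gamma$ for the fraction of good seeds this gives $|\mathcal{C}|\ge(1-\gamma)Qd$, and combined with $|\mathcal{C}|<(1-2f(k))kQ\ln m=\tfrac{1-2f(k)}{1-f(k)}Qd$ it forces $\gamma>f(k)$. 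Let each prover answer a uniformly random element of the corresponding $A_{q,i}$ (arbitrarily if empty). For a good seed $r$ the consistent pair above is produced with probability $\tfrac{1}{t_{q(r,i),i}\,t_{q(r,i'),i'}}\ge\tfrac{4}{d^2}$, since $t_{q(r,i),i}+t_{q(r,i'),i'}\le\kappa(r)<d$ and AM--GM; hence the verifier weakly accepts with probability at least $\gamma\cdot\tfrac{4}{d^2}>\tfrac{4f(k)}{d^2}$. With $d=(1-f(k))k\ln m$ and $\ln m=\Theta(\rho\log n)$, $\rho=C\log\log n$, the right-hand side is $\Theta_{k,C}\!\big((\log\log n)^{-2}(\log n)^{-2}\big)$, whereas $k^2 2^{-c\rho}=k^2(\log n)^{-cC}$; taking $C>2/c$ makes $\tfrac{4f(k)}{d^2}>k^2 2^{-c\rho}$ for all large $n$, contradicting Lemma~\ref{lem:3sat-5gap}. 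Therefore $|\mathcal{C}|\ge(1-2f(k))kQ\ln m$.

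\textbf{Main obstacle.} Completeness and the ``good seed $\Rightarrow$ consistent pair'' step are bookkeeping with the definitions of $S_{q,a,i}$ and of the partition system. The crux is the quantitative soundness estimate, where the partition parameter $d\approx k\ln m$, the per-good-seed success probability $\approx 1/d^{2}$, and the parallel-repetition error $2^{-c\rho}$ from Lemma~\ref{lem:razrepetition} (via Lemma~\ref{lem:3sat-5gap}) must be balanced. The inequalities close only because $c$ and $k$ are fixed while $\rho$ may be any constant multiple of $\log\log n$, so $2^{-c\rho}$ can be driven below an inverse polylogarithm of arbitrarily large degree while $1/d^{2}$ remains an inverse polylogarithm of bounded degree; this is precisely the tension that forces $\rho=\Theta(\log\log n)$ and hence the ``$\text{NP}\subseteq\text{DTIME}(n^{\mathcal{O}(\log\log n)})$'' form of the hypothesis, while letting $k\to\infty$ afterwards makes $f(k)\to 0$ so the loss $(1-2f(k))$ approaches $1$.
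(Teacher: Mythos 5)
Your proof is correct and follows essentially the same approach as the paper: completeness by exhibiting the $kQ$ sets induced by a strongly-accepting prover strategy, soundness by contradiction against the weak-acceptance bound of Lemma~\ref{lem:3sat-5gap} using a per-seed cost function and the partition-system property. The only differences are minor bookkeeping choices — you take $\kappa(r)<d$ directly as the good-seed threshold, derive $\gamma>f(k)$ by rearrangement rather than by the paper's contradiction argument, and bound the per-good-seed success probability by $4/d^2$ (via AM--GM on $\kappa(r)<d$) rather than the paper's slightly looser $4/(k\ln m)^2$ — none of which changes the structure or the final numerical tension that forces $\rho=\Theta(\log\log n)$.
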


We will prove each part of Lemma \ref{lem:mainclaimfeige} separately.

\subsubsection{Proving the First Part}

The first part is straighforward. Consider a satisfiable 3CNF-5 formula $\varphi$. Then by Lemma \ref{lem:3sat-5gap}, there is a strategy that always strongly accepts; in particular, it is one that corresponds to a satisfying assignment of $\varphi$. Fix the provers' strategy to be consistent with a satisfying assignment $a$ for $\varphi$. Consider the sets $S_{q, a, i}$ for which prover $P_i$ replies to question $q$ with the satisfying assignment $a$. Fix a seed $r$ and, from among the considered sets, look at the sets $S_{q_i, a_i, i}$ for all $1 \leq i \leq k$, corresponding to the questions induced by seed $r$ and the answers in alignment with the satisfying assignment for each prover. The universe $\mathcal{B}$ is completely covered by these $k$ sets because, for the partition $p^*_r$ corresponding to $a$, the subset $\beta_r [p^*_r, i]$ is in $S_{q_i, a_i, i}$ for every $i$ by construction, and their union thus completes the partition. 

This argument applies to every seed $r$. This means that each seed's portion is coverable by $k$ sets. By the uniformity of the system, each question $q$ must be askable, so each question must have a corresponding satisfying strategy; this means that $kQ$ sets are used in the covering.

\subsubsection{Proving the Second Part}

Consider now a 3CNF-5 formula $\varphi$ with only a $(1 - \varepsilon)$ fraction of the clauses simultaneously satisfiable. Then by Lemma \ref{lem:3sat-5gap}, any strategy of the provers weakly accepts with probability at most $k^2 \cdot 2^{-c\rho}$. Set $\delta = 2f(k)$ and assume by way of contradiction that there exists a cover $\mathcal{C}$ of size $|\mathcal{C}| \leq (1 - \delta)kQ \ln m$. As in the proof of Lemma \ref{lem:mainclaimly}, we will use this to define a cost function $c$ over seeds:

\begin{itemize}
\item Let $q$ be a question asked to prover $P_i$. The cost of that question $c(q, i)$ is the number of sets $S_{q, a, i}$ in $\mathcal{C}$ for which the query is $q$.
\item The cost of a seed $c(r)$ is $\sum_{i = 1}^{k} c(q(r, i), i)$, or the sum of all individual costs of the $k$ questions induced by that seed.
\end{itemize}

Let a seed $r$ be called \emph{good} if $c(r) <\left(1 - \frac{\delta}{2} \right) k \ln m$, and let it be called \emph{bad} otherwise. Unlike the proof of Lund and Yannakakis for Lemma \ref{lem:mainclaimly}, where we began with deducing an acceptance probability and then reasoned about the size of the covering set, here we assume an upper bound on the size of the covering set and then are able to deduce a lower bound on the fraction of good seeds. We will then use this to prove a lemma on the acceptance probability.

\begin{lemma}
\label{lem:goodseeds}
The fraction of good seeds is at least $\frac{\delta}{2}$.
\end{lemma}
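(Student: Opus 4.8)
The plan is to run a counting argument on seeds, directly mirroring the structure of Lemma \ref{lem:coversize} but phrased in the contrapositive direction. We start from the assumed bound $|\mathcal{C}| \le (1-\delta)kQ\ln m$ and relate $|\mathcal{C}|$ to the total seed cost. First I would observe that summing $c(r)$ over all $R$ seeds counts each set $S_{q,a,i}\in\mathcal{C}$ once for every seed $r$ with $q = q(r,i)$; by the uniformity property of the proof system (each prover's queries are spread uniformly over its question set), every question $q\in Q_i$ is induced by exactly $R/Q$ seeds. Hence $\sum_{r\in R} c(r) = \frac{R}{Q}\sum_{i=1}^k\sum_{q\in Q_i} c(q,i) = \frac{R}{Q}\,|\mathcal{C}|$, since the double sum over all question costs is precisely the number of sets in the cover. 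Substituting the assumed upper bound gives $\sum_{r} c(r) \le \frac{R}{Q}(1-\delta)kQ\ln m = (1-\delta)kR\ln m$.

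Next I would bound the same sum from below using only the bad seeds. A seed is bad iff $c(r) \ge (1-\tfrac{\delta}{2})k\ln m$, so if $\gamma$ denotes the fraction of good seeds, the number of bad seeds is $(1-\gamma)R$ and therefore $\sum_r c(r) \ge (1-\gamma)R\,(1-\tfrac{\delta}{2})k\ln m$. Combining the two bounds and cancelling the common factor $kR\ln m$ yields $(1-\gamma)(1-\tfrac{\delta}{2}) \le (1-\delta)$, i.e.
\[
1-\gamma \;\le\; \frac{1-\delta}{1-\frac{\delta}{2}}.
\]
A short manipulation shows $\frac{1-\delta}{1-\delta/2} = 1 - \frac{\delta/2}{1-\delta/2} \le 1 - \frac{\delta}{2}$ (since $1-\delta/2 \le 1$), so $1-\gamma \le 1-\frac{\delta}{2}$, giving $\gamma \ge \frac{\delta}{2}$ as claimed.

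The only genuinely delicate point is the regularity step: asserting that every question is asked by exactly $R/Q$ seeds. This is the $k$-prover analogue of the $\frac{|R|}{|Q_1|}$-regularity used by Lund and Yannakakis, and it relies on the explicit structure of the MAX 3SAT-5 protocol — each prover receives $\rho/2$ clauses and $\rho/2$ distinguished variables chosen uniformly and independently, so by symmetry each of the $Q = n^{\rho/2}(5n/3)^{\rho/2}$ possible question strings arises from the same number of the $R = (5n)^\rho$ seeds. I would state this as the $k$-prover uniformity property and cite the protocol construction; everything else is the routine double-counting above. One should also note that the argument is purely combinatorial and uses nothing about satisfiability of $\varphi$ — the hypothesis that only a $(1-\varepsilon)$ fraction of clauses is satisfiable enters only in the \emph{next} lemma, where the lower bound on good seeds is converted into a lower bound on the weak acceptance probability and contradicts Lemma \ref{lem:3sat-5gap}.
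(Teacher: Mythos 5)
Your proposal is correct and mirrors the paper's proof: both double-count the total seed cost $\sum_r c(r)$, lower-bounding it via the bad seeds and upper-bounding it via the uniformity of the proof system (each question induced by $R/Q$ seeds) and the assumed bound on $|\mathcal{C}|$. The only cosmetic difference is that you derive $\gamma \ge \delta/2$ directly from the inequality $(1-\gamma)(1-\delta/2)\le 1-\delta$, whereas the paper phrases the same comparison as a contradiction, assuming the good fraction is below $\delta/2$ and invoking $(1-\delta/2)^2 > 1-\delta$.
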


\begin{proof}[Proof of Lemma \ref{lem:goodseeds}] Assume not. Then consider the sum of costs of all seeds. Note that in this case, $\left( 1 - \frac{\delta}{2} \right)$ is the lower bound on the fraction of bad seeds. This yields

$$\sum_{r \in R} c(r) \geq \left( 1 - \frac{\delta}{2} \right) R \cdot \left( 1 - \frac{\delta}{2} \right) k \ln m = \left( 1 - \frac{\delta}{2} \right)^2 kR \ln m$$

Note that $\left( 1 - \frac{\delta}{2} \right)^2 > (1 - \delta)$, resulting in

$$\sum_{r \in R} c(r) > (1 - \delta) kR \ln m$$

However, consider that we can also sum the costs of all seeds by adding the costs of each question asked to each prover by each seed, as in

$$\sum_{r \in R} c(r) = \sum_{r \in R} \sum_{i = 1}^{k} c(q(r, i), i)$$

Note that by uniformity of the proof system, each question is asked $\frac{R}{Q}$ times across the whole seed set. Also, by construction, the sum of weights of all questions across all provers must be $|\mathcal{C}|$. Combining these facts yields

$$\sum_{r \in R} c(r) = \sum_{r \in R} \sum_{i = 1}^{k} c(q(r, i), i) = \frac{R}{Q} \sum_{(q, i)} c(q, i) = \frac{R}{Q} |\mathcal{C}|$$

Substituting this equality into the result of the first sum gives

$$|\mathcal{C}| > (1 - \delta) kR \ln m \geq |\mathcal{C}|$$

which is a contradiction. 
\end{proof}

\begin{lemma}
\label{lem:weakaccept}
Let $\mathcal{C}$ be a covering of $\mathcal{B}$ with $|\mathcal{C}| \leq (1 - \delta)kQ \ln m$. Then, for some strategy of the provers, the verifier weakly accepts $\varphi$ with probability at least $\frac{2 \delta}{(k \ln m)^2}$.
\end{lemma}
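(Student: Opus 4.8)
The plan is to use the probabilistic method: build a \emph{randomized} strategy for the $k$ provers directly out of the cover $\mathcal{C}$, show that this random strategy weakly accepts every good seed with decent probability, and then average, invoking Lemma \ref{lem:goodseeds} to lower-bound the good-seed fraction by $\delta/2$.

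First I would define the strategy. For each prover $P_i$ and each admissible question $q$, set $A(q,i)=\{a : S_{q,a,i}\in\mathcal{C}\}$; if $A(q,i)\neq\emptyset$, prover $P_i$ answers $q$ with a uniformly random element of $A(q,i)$, and otherwise answers arbitrarily. The choices are independent across distinct pairs $(q,i)$, and by definition $|A(q,i)|=c(q,i)$, the cost function used before Lemma \ref{lem:goodseeds}. Next I would fix a good seed $r$, so $c(r)=\sum_{i=1}^k c(q(r,i),i)<\bigl(1-\tfrac{\delta}{2}\bigr)k\ln m$. Since $\delta=2f(k)$, this threshold is exactly $d=(1-f(k))k\ln m$. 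The $\mathcal{C}$-sets meeting the block $\beta_r$ of the universe are precisely the $S_{q(r,i),a,i}$ with $i\in[k]$ and $a\in A(q(r,i),i)$, so there are $c(r)<d$ of them, and their restrictions to $\beta_r$ are blocks of the form $\beta_r[p^*_r,i]$ that together cover $\beta_r$. By the defining property of the partition system $\beta_r(m,\Ell,k,d)$, fewer than $d$ blocks covering $\beta_r$ cannot all lie in distinct partitions; hence two distinct such blocks share a partition $p_{r,j}$, and because a partition contains exactly one block per prover, these come from two distinct provers $P_i,P_{i'}$. Tracing the construction back, the answers $a_i,a_{i'}$ that produced these blocks both induce the partition-string $s_{r,j}$, i.e.\ the same distinguished-variable assignment on seed $r$ --- which is exactly the statement that $P_i$ and $P_{i'}$ are consistent on $r$.

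It then remains to bound the probability that the random strategy reproduces this consistent pair. Independently, $P_i$ answers $q(r,i)$ with $a_i$ with probability $1/c(q(r,i),i)$ and $P_{i'}$ answers $q(r,i')$ with $a_{i'}$ with probability $1/c(q(r,i'),i')$; so, conditioned on the verifier drawing seed $r$, the strategy weakly accepts with probability at least $1/\bigl(c(q(r,i),i)\,c(q(r,i'),i')\bigr)$. By AM--GM and $c(q(r,i),i)+c(q(r,i'),i')\le c(r)<k\ln m$, this is at least $4/(k\ln m)^2$. Averaging over the verifier's seed and using Lemma \ref{lem:goodseeds}, the expected (over the strategy's coins) weak-acceptance probability is at least $\tfrac{\delta}{2}\cdot\tfrac{4}{(k\ln m)^2}=\tfrac{2\delta}{(k\ln m)^2}$, so fixing the coins to their best outcome yields a deterministic strategy achieving weak acceptance with probability at least $\tfrac{2\delta}{(k\ln m)^2}$, as claimed.

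The main obstacle is the middle step: turning ``a good seed is covered by fewer than $d$ blocks'' into ``two provers are consistent on that seed.'' One must check carefully that only the restrictions of $\mathcal{C}$-sets to $\beta_r$ matter and that each such restriction is a single partition block; that the partition-system parameter $d=(1-f(k))k\ln m$ is matched \emph{exactly} by the good-seed threshold once $\delta=2f(k)$; and that two blocks sharing a partition necessarily belong to different provers, so that the resulting coincidence of induced assignments is a genuine consistency event rather than a self-collision of one prover. Once that chain is nailed down, the rounding of the random strategy and the averaging are routine.
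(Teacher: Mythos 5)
Your proof is correct and follows essentially the same route as the paper: build the uniformly-random-over-the-cover strategy, show that for a good seed the fewer-than-$d$ blocks covering $\beta_r$ must contain two from the same partition (hence a consistent pair of provers), bound the joint hit probability by AM--GM using $c(q(r,i),i)+c(q(r,i'),i')\le c(r)< k\ln m$, and average over seeds via Lemma~\ref{lem:goodseeds} before derandomizing. The only difference is expositional --- you make explicit the identification of $\mathcal{C}$-sets restricted to $\beta_r$ with partition blocks and the observation that two blocks sharing a partition must belong to distinct provers, both of which the paper states more tersely (calling it a ``bijection''); the bounds and the structure are identical.
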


\begin{proof}
Consider a randomized strategy of the provers where, on question $q$, prover $P_i$ selects an answer uniformly randomly from the answers $a$ whose sets $S_{q, a, i}$ are in the covering $|\mathcal{C}|$. Note that, for a fixed seed $r$, the corresponding partition system $\beta_r$ admits a bijection between the sets $\beta_r [i, j]$ participating in the cover of $\beta_r$, and sets $S_{q, a, i}$ in the covering $\mathcal{C}$.

Next, consider a good seed $r$. By the property of the partition system, any covering that does not use two subsets from the same partition requires at least $d = (1 - f(k)) k \ln m$ subsets. Recall that $\delta = 2f(k)$ by definition. Since $r$ is good, its partition system $\beta_r$ is covered by $c(r) < \left( 1 - \frac{\delta}{2} \right) k \ln m = (1 - f(k)) k \ln m$ subsets. This is fewer than $d$, so by contraposition, $\beta_r$ must contain two subsets from the same partition in its covering, namely $\beta_r [p_r^*, i]$ and $\beta_r [p_r^*, j]$ for some provers $P_i$ and $P_j$. By the earlier bijection, these correspond to two subsets $S_{q_i, a_i, i}$ and $S_{q_j, a_j, j}$ in the covering $\mathcal{C}$.

Assume the verifier chooses $r$ as its seed. Then prover $P_i$ gets question $q_i$ and chooses an answer uniformly randomly from the set $A_{r, i}$ of all answers $a$ for which $S_{q_i, a, i}$ is in $\mathcal{C}$. The answer $a_i$ is among these. Prover $P_j$ behaves symmetrically and has a corresponding set $A_{r, j}$. Because $r$ is a good seed, $|A_{r, i}| + |A_{r, j}| < k \ln m$. This means that the joint probability the provers will choose to answer with the specific answers $a_i$ and $a_j$ is at least $\left( \frac{1}{(k\ln m) / 2)} \right)^2 = \frac{4}{(k \ln m)^2}$. These answers are consistent, so the verifier will weakly accept if they are chosen.

By Lemma \ref{lem:goodseeds}, the probability $r$ is good is at least $\frac{\delta}{2}$. So the strategy weakly accepts with a probability at least $\frac{\delta}{2} \cdot \frac{4}{(k \ln m)^2} = \frac{2 \delta}{(k \ln m)^2}$. This strategy can be made deterministic by fixing the optimal coin tosses in the randomized strategy, which does not change this lower bound.
\end{proof}

To complete the proof of the second bullet of Lemma \ref{lem:mainclaimfeige}, let $\rho = \Theta(\log \log n)$ and $m = n^{\Theta (\rho)}$. Then

$$ \frac{2\delta}{(k \ln m)^2} > k^2 \cdot 2^{-c \rho}$$

which is a contradiction as the lower bound for weak acceptance probability is higher than the upper bound. Therefore, the size of the cover cannot be as stated and it must be strictly larger than $(1 - \delta)kQ \ln m$. This proves Lemma $\ref{lem:mainclaimfeige}$ and thus Theorem \ref{thm:inapproxfeige}.

\section{Moshkovitz, 2015 \cite{Moshkovitz-2015}}

Moshkovitz further generalizes the framework of Feige to produce even more refined results. She shifts the underlying framework from multi-prover interactive systems to \emph{probabilistically checkable proofs} (PCP), for which many new analysis techniques had been developed in the decade since Feige's lower bound. Notably, PCPs are capable of being analyzed in terms of projection games, which is a framework that lends itself well to \setcover{}. She also begins from the more general \textsc{Label Cover} problem rather than MAX 3SAT-5, which enables an alternative model of \setcover{}. The partition system is exchanged for a bipartite graph coloring with unique properties. Finally, she introduces a modified variant of Conjecture \ref{conj:slidingscale}, the \emph{Projection Games Conjecture}, which the new result is contingent on a weakened version of. These allow her to achieve the following result:

\begin{theorem}
\label{thm:inapproxmoshkovitz}
Unless $\text{P} = \text{NP}$ and assuming the Projection Games Conjecture holds, \setcover{} cannot be approximated within a ratio of $c \ln n$ for any positive $c < 1$.
\end{theorem}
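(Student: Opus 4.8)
The plan is to reproduce the architecture of Feige's reduction (Section~\ref{sec:feige}) but to feed it a qualitatively stronger starting point and a quantitatively tighter gadget, so that the overall blow-up becomes polynomial rather than quasi-polynomial. First I would invoke the (weakened) Projection Games Conjecture to obtain, for every $\varepsilon > 0$, a polynomial-time reduction from an NP-complete problem to \textsc{Label Cover} instances with projection constraints on a bipartite graph, where the alphabet size is $\mathrm{poly}(1/\varepsilon)$, the soundness error is at most $\varepsilon$, and the instance size is $n^{1 + o(1)}$. This near-linear size blow-up is the crux of why the final hardness can rest on $\text{P} \neq \text{NP}$ rather than on $\text{NP} \not\subseteq \text{DTIME}(n^{\mathrm{polylog}(n)})$ as in Sections~\ref{sec:ly} and~\ref{sec:feige}; every subsequent step must preserve it.

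Second I would build the \setcover{} instance essentially as in Feige, with two substitutions. The universe is a disjoint union, over the vertices (or constraints) of the \textsc{Label Cover} graph, of a copy of a combinatorial gadget; in place of Feige's partition system this gadget is Moshkovitz's bipartite graph coloring, which still supplies a family of ``partitions'' indexed by labels such that a single partition's classes tile the gadget, while any cover that reuses two classes of the same partition must use $\gtrsim (1 - o(1)) \ell \ln m$ classes, where $\ell$ is the alphabet size and $m$ the gadget size. For each pair of labels consistent under a projection constraint we add a set, in direct analogy with the $S_{q,a,i}$ of Section~\ref{sec:feige}. Completeness is then immediate: a satisfying labeling induces, per vertex, one partition whose classes tile that vertex's gadget, giving a cover whose size is exactly the number of (query, chosen-answer) pairs in the labeling.

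Third, and this is the heart, I would argue soundness by the cost-function / contrapositive method of Lemmas~\ref{lem:goodseeds} and~\ref{lem:weakaccept}. Assuming a cover of size below $(1 - \delta)$ times the intended value, define the cost of a label-query as the number of participating sets; a counting argument forces an $\Omega(\delta)$ fraction of the constraints to be ``good'' in the sense of being cheaply covered; the anti-cheating property of the bipartite-coloring gadget then extracts from each good constraint a pair of labels satisfying that projection; and averaging turns these into a randomized labeling satisfying the \textsc{Label Cover} instance with probability exceeding the soundness error $\varepsilon$ — a contradiction. The parameters must be tuned so that $m$ stays polynomial in $n$ while $\ln m = (1 - o(1)) \ln N$ for the universe size $N$, which pins $\ell$ and the gadget's number of partitions to be only mildly superpolylogarithmic — precisely the regime in which the anti-universal-set constructions of Theorem~\ref{thm:nkbantiuniversalsetconstruction} are explicit and run in deterministic polynomial time.

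The main obstacle I anticipate is not the skeleton, which is structurally identical to Feige's, but the simultaneous quantitative balancing: (i) the soundness error $\varepsilon$ must be driven small enough to overcome the $1/(\ell \ln m)^2$ loss incurred by the consistency checks across a non-constant number of gadget partitions; (ii) the gadget must nonetheless be constructible in $\mathrm{poly}(n)$ time; and (iii) $\ln m$ must track $\ln N$ closely enough that the leading constant tends to $1$. In Feige's proof this balance was only attainable with a quasi-polynomial gadget; here it is the strengthened input from the Projection Games Conjecture — arbitrarily small soundness, $\mathrm{poly}(1/\varepsilon)$ alphabet, near-linear size — that makes a polynomial-size gadget sufficient, and checking that every inequality still closes under these tightened budgets is the delicate part.
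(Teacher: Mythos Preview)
Your high-level plan --- feed Feige's reduction a stronger \textsc{Label Cover} instance coming from the Projection Games Conjecture so that the blow-up stays polynomial --- is correct in spirit, but you have misidentified a key component of Moshkovitz's argument. The ``bipartite graph coloring'' you mention is \emph{not} a replacement for Feige's partition system; the paper still uses a partition system (Definition~\ref{def:partitionsystem}) as the \setcover{} gadget, with universe $\beta \times B$ and sets $S_{a,\sigma}$ indexed by $a \in A$ and $\sigma \in \Sigma_A$. The bi-regular graph of Lemma~\ref{lem:biregular} serves an entirely different purpose: it is composed with the \textsc{Label Cover} instance in a \emph{preprocessing step} (Lemma~\ref{lem:soundnesscomparison}) that forces the $B$-degree down to a chosen constant $D$ and converts ordinary soundness into \emph{agreement soundness} --- the guarantee that, for any labeling of $A$, the neighbors of almost every $b \in B$ project to pairwise distinct values. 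This is then upgraded to \emph{list-agreement soundness} (Lemma~\ref{lem:agreementsoundnesscomparison}), and the final soundness proof (Lemma~\ref{lem:soundness}) is a short Markov argument that directly constructs an $\ell$-labeling and invokes that notion, rather than the randomized-strategy extraction of Lemmas~\ref{lem:goodseeds}--\ref{lem:weakaccept} that you propose.

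This misidentification is not cosmetic. First, the partition system needs a \emph{constant} number $k$ of classes per partition (Theorem~\ref{thm:nkbantiuniversalsetconstruction} is stated for fixed $b$), and $k$ must equal the $B$-degree of the projection game; the raw instance from the Projection Games Conjecture carries no such degree bound, so without the Lemma~\ref{lem:soundnesscomparison} preprocessing your gadget cannot even be instantiated with the right parameters. Relatedly, your identification of the number of classes per partition with the alphabet size is off: in the paper $k = D_B$ is a constant, while it is the number of \emph{partitions} $\Ell = |\Sigma_B|$ that tracks the alphabet. Second, the Feige-style extraction you sketch would, for each ``good'' $b$, recover two neighbors whose projections coincide --- but that is precisely the negation of total disagreement, not the satisfaction of an edge in the ordinary \textsc{Label Cover} sense. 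To turn this into a contradiction you need the game to have small \emph{agreement} soundness error, which again requires the preprocessing step you omitted. (As a minor aside, you also state the partition-system property backwards: it is covers that \emph{avoid} two classes of the same partition that are forced to be large.)
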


We will proceed to give a proof of this result following the constructions given by Moshkovitz.

\subsection{Preliminary Constructions}

In this section, we describe constructions and concepts needed for the proof. There are many parallels to prior constructions.

\subsubsection{Projection Games}

\begin{definition}[Projection Game]
A \emph{projection game} takes the following inputs:
\begin{itemize}
\item A bipartite graph $G = ((A, B), E)$.
\item Finite alphabets $\Sigma_A, \Sigma_B$.
\item A set $\Phi$ containing a function $\pi_e : \Sigma_A \to \Sigma_B$ for every edge $e \in E$. These functions $\pi_e$ are the \emph{projections}.
\end{itemize}
The \emph{size} of the game is defined as $|A| + |B| + |E|$. The following two concepts are also introduced:
\begin{itemize}
\item A \emph{labeling} is a function mapping a vertex set to its corresponding alphabet, either $\phi_A : A \to \Sigma_A$ or $\phi_B : B \to \Sigma_B$. 
\item A pair of labelings $\phi_A$ and $\phi_B$ \emph{satisfies} edge $e = (a, b)$ if $\pi_e (\phi_A (a)) = \phi_B (b)$; that is, if the projection $\pi_e$ takes the label of $a$ to the label of $b$.
\end{itemize}
The goal of the game is to find labelings $\phi_A$ and $\phi_B$ that satisfy as many edges as possible. An example of a small game is given in Figure \ref{fig:projectiongamesexample}.
\end{definition}

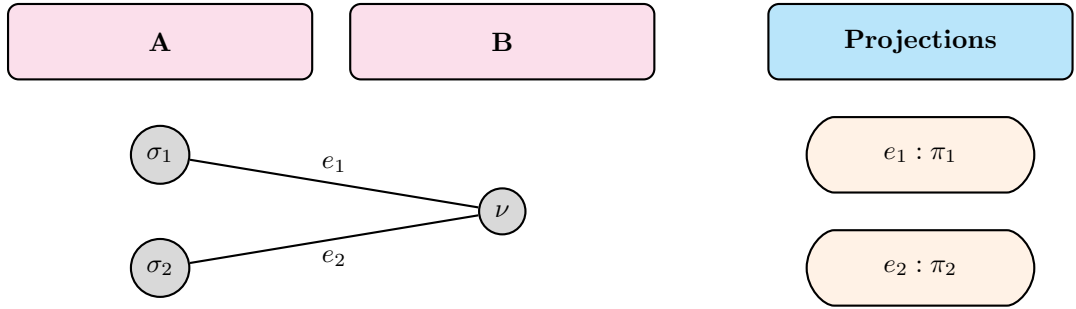
\begin{figure}[tb!]
\centering
\begin{tikzpicture}[node distance=1.5cm, font=\small,thick, >=Latex]

\node[inputconstruction]
	(SetA)
	{$\mathbf{A}$};

\node[inputconstruction, right of = SetA, xshift = 3cm]
	(SetB)
	{$\mathbf{B}$};

\node[construction, right of = SetB, xshift = 4cm]
	(Projections)
	{\textbf{Projections}};

\node[graphnode, below of = SetA]
	(A1)
	{$\sigma_{1}$};

\node[graphnode, below of = A1]
	(A2)
	{$\sigma_{2}$};

\node[graphnode, below of = SetB, yshift = -0.75cm]
	(B)
	{$\nu$};

\node[element, below of = Projections]
	(Pi1)
	{$e_1: \pi_1$};

\node[element, below of = Pi1]
	(Pi2)
	{$e_2: \pi_2$};

\draw (A1) -- (B) node[midway, yshift = 0.25cm] (LabelE1) {$e_1$};
\draw (A2) -- (B) node[midway, yshift = -0.25cm] (LabelE2) {$e_2$};

\end{tikzpicture}
\caption{A small projection game with labelings $\sigma_1, \sigma_2 \in \Sigma_A$ and $\nu \in \Sigma_B$. The goal is to try to choose labels such that $\pi_{1} (\sigma_1) = \pi_{2} (\sigma_2) = \nu$.}
\label{fig:projectiongamesexample}
\end{figure}

The following theorem is due to Moshkovitz:

\begin{theorem}
\label{thm:projectiongamehardness}
Let $\mathcal{P}$ be a projection game of size $n$ with alphabets of size $k$ and soundness error $\varepsilon$, where $k$ and $\varepsilon$ may be functions of $n$. Then, it is NP-hard to distinguish between the case where all edges can be satisfied, and the case where at most a $\varepsilon$ fraction of edges can be satisfied.
\end{theorem}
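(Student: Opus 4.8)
The plan is to build the hard projection game by starting from a known NP-hard constraint satisfaction problem and then driving its soundness error down by parallel repetition, tracking how the alphabet size and instance size scale along the way. First I would invoke the PCP theorem in its projection (two-query) form: there is a polynomial-time reduction from \textsc{3SAT} to a projection game $\mathcal{P}_0 = ((A, B), E, \Sigma_A^0, \Sigma_B^0, \Phi_0)$ with perfect completeness, a fixed constant soundness error $\varepsilon_0 < 1$, and constant alphabet size $k_0$ — this is equivalent to the NP-hardness of gap \textsc{Label Cover}, and it can be taken as a black box in exactly the way the Feige--Lov\'asz \textsc{SAT} proof system was black-boxed in Section \ref{sec:ly}. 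The game $\mathcal{P}_0$ is precisely a two-prover one-round proof system whose accept predicate, for each random string, checks that the two prover answers are consistent under a fixed function; read as a projection game, $A$ and $B$ are the two query sets, the edges are the random strings, and $\pi_e$ records which $B$-label is forced by each $A$-label on the edge corresponding to a given random string.

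Next I would reduce the soundness error to the target $\varepsilon$ by passing to the $\ell$-fold parallel repetition $\mathcal{P}_0^{\otimes \ell}$: one vertex for every $\ell$-tuple of old vertices on each side, one edge for every $\ell$-tuple of old edges, alphabets $(\Sigma_A^0)^\ell$ and $(\Sigma_B^0)^\ell$, and the coordinatewise product of the old projections — which is again a projection, so the class of projection games is closed under this operation. Perfect completeness is preserved, and by the Raz Repetition Lemma (Lemma \ref{lem:razrepetition}) the soundness error of the repeated game is at most $2^{-c\ell}$ for a constant $c > 0$ depending only on $\varepsilon_0$ and $\log k_0$; choosing $\ell = \Theta(\log(1/\varepsilon))$ thus yields soundness at most $\varepsilon$. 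Under this choice the alphabet size is $k = k_0^{\ell} = \mathrm{poly}(1/\varepsilon)$ and the game size is $n = n_0^{O(\ell)}$, where $n_0$ is the size of $\mathcal{P}_0$. Since deciding $\mathcal{P}_0$ is NP-hard and the whole construction is a many-one reduction, distinguishing the perfectly-satisfiable case of the resulting projection game from the case in which at most an $\varepsilon$ fraction of its edges can be satisfied is NP-hard as well, which is the claim.

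The main obstacle — and the reason this theorem is at the frontier rather than elementary — is the size blowup. Naive parallel repetition raises the instance size to $n_0^{\Theta(\log(1/\varepsilon))}$, which is polynomial only when $\varepsilon$ is a fixed constant and already quasi-polynomial when $\varepsilon = 1/\mathrm{polylog}\,n$; in that sub-constant regime the argument above only yields hardness against quasi-polynomial time, not genuine NP-hardness. Obtaining a truly polynomial-size reduction with sub-constant $\varepsilon$ and $k = \mathrm{poly}(1/\varepsilon)$ is exactly what the near-linear-size low-error PCP machinery that Moshkovitz develops is for (and, in the strongest parameter regime, what the Projection Games Conjecture asserts); the delicate part of the proof is engineering the PCP and the error-reduction step so that the size grows only polynomially — ideally $n^{1+o(1)}$ — while the alphabet stays polynomially bounded in $1/\varepsilon$. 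Everything else (closure of projection games under parallel repetition, preservation of perfect completeness, and the soundness bound itself) is routine bookkeeping layered on top of the black-boxed PCP theorem and Lemma \ref{lem:razrepetition}.
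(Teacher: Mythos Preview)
The paper does not actually prove this theorem: it is stated as a black box attributed to Moshkovitz, with the single remark that ``here, the error is explicitly related to the soundness error of the underlying projection game,'' and is then used downstream without further justification. So there is no proof in the paper to compare against.

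Your proposal is the standard and correct outline of how such a result is obtained --- PCP theorem in projection form to get a constant-gap instance, then Raz repetition to amplify --- and you correctly flag the genuine issue: naive $\ell$-fold repetition blows the size up to $n_0^{\Theta(\log(1/\varepsilon))}$, so the argument as written only yields NP-hardness for constant $\varepsilon$ and degrades to quasi-polynomial hardness in the sub-constant regime. You also correctly identify that closing this gap is precisely the content of the Moshkovitz--Raz almost-linear-size PCP (and, in its strongest form, the Projection Games Conjecture). In that sense your write-up is more informative than the paper's treatment, which simply imports the statement. One small note: the theorem as stated in the paper is somewhat loosely quantified (it reads as though any projection game with those parameters is hard, when the intended meaning is the existence of a hard family with those parameters); your interpretation silently fixes this, which is the right call.
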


Unlike the result of Feige, here, the error is explicitly related to the soundness error of the underlying projection game.

\subsubsection{The Projection Games Conjecture}

In \cite{MR-2008}, Moshkovitz and Raz proved the following theorem

\begin{theorem}[Almost-Linear-Size PCP Theorem]
There exists $c > 0$ such that for every $\varepsilon \geq xN^{-c}$, SAT with an input size of $n$ can be reduced to a projection game of input size $N = n^{1 + o(1)} \cdot \text{poly}\left( \frac{1}{\varepsilon} \right)$ over an alphabet of size $\text{exp}\left( \frac{1}{\varepsilon} \right)$ and soundness error $\varepsilon$.
\end{theorem}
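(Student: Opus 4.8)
The plan is to reconstruct the Moshkovitz--Raz argument of \cite{MR-2008}, and the first thing to observe is why the easy route fails: amplifying the soundness of a nearly-linear-size constant-error projection game down to $\varepsilon$ by parallel repetition (Lemma~\ref{lem:razrepetition}) would cost a factor of $\Theta(\log(1/\varepsilon))$ in the number of rounds and hence raise the size to $N^{\Theta(\log(1/\varepsilon))}$, which is not almost linear. So the low-soundness game must be built essentially from scratch out of an algebraic encoding. Step one is arithmetization: reduce the size-$n$ \textsc{SAT} instance to an algebraic constraint satisfaction problem over a finite field $\mathbb{F}$ of size $\mathrm{poly}(1/\varepsilon)$, representing a satisfying assignment as the low-degree polynomial in roughly $\log n / \log |\mathbb{F}|$ variables interpolating it, so that this honest ``proof'' has only $n^{1+o(1)}$ field symbols.

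Step two is to turn this algebraic instance into a two-query \emph{projection} game using a low-degree test of ``manifold-versus-point'' type. The $A$-side of the game indexes random low-dimensional affine subspaces (manifolds), whose labels are the restrictions of the purported polynomial to them; the $B$-side indexes points, whose labels are claimed values; an edge joins a manifold to a point lying on it, and its constraint asks that the point-label equal the manifold-polynomial evaluated at that point (also folding in the local algebraic check). This constraint is, by design, the graph of a function of the $A$-label, so it is a projection $\pi_e : \Sigma_A \to \Sigma_B$. That this test has soundness error a prescribed sub-constant $\varepsilon$ is exactly the Raz--Safra / Arora--Sudan low-degree testing theorem in the sharp quantitative form developed by Dinur--Fischer--Kindler--Raz--Safra and by Moshkovitz--Raz; and choosing $|\mathbb{F}|$ and the manifold dimension in terms of $\varepsilon$ and $n$ keeps the underlying bipartite graph of size $n^{1+o(1)}\,\mathrm{poly}(1/\varepsilon)$. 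The one defect is that $\Sigma_A$, a space of low-degree polynomials on a manifold, is far larger than the target alphabet $\exp(1/\varepsilon)$.

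Step three, the heart of the proof, is to shrink the alphabet by \emph{composition} while preserving the projection property and near-linear size. The obstacle is that textbook PCP composition replaces each outer edge by an inner gadget whose set of accepting label pairs is not the graph of a function, so projections are destroyed. The technical centerpiece to be rebuilt is a composition operation whose inner object is a \emph{decodable} PCP (an assignment tester that list-decodes the outer $A$-label rather than merely verifying it), engineered so that the composed instance is again a two-query projection game, with only a near-linear multiplicative blow-up in size and only an additive, still sub-constant, loss in soundness. Applying this composition a constant number of times, each pass contracting the alphabet roughly exponentially, drives the alphabet down to $\exp(1/\varepsilon)$; the innermost object for the final passes can be a small, explicitly analyzed Hadamard- or Reed--Muller-based projection PCP.

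Step four is bookkeeping. Perfect completeness survives every stage, since the honest low-degree extension satisfies all edges and composition with a complete inner PCP preserves this. The soundness errors add over the $O(1)$ composition levels, so the final error is $O(\varepsilon)$, and the constant can be absorbed by rescaling $\varepsilon$. Multiplying the size overheads gives total size $N = n^{1+o(1)}\,\mathrm{poly}(1/\varepsilon)$, which is what forces the admissible range $\varepsilon \ge N^{-c}$ for a suitable constant $c>0$. This reduction, together with the NP-hardness of \textsc{SAT}, is precisely what yields Theorem~\ref{thm:projectiongamehardness}. I expect step three --- projection-preserving composition with nearly-linear blow-up and sub-constant soundness --- to be by far the main obstacle; step two is quotable essentially as a black box, and steps one and four are routine once the parameters are pinned down.
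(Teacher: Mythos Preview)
The paper does not prove this theorem at all: it is quoted verbatim as a result of Moshkovitz and Raz \cite{MR-2008} and used as a black box, in keeping with the survey's stated policy of ``treat[ing] many preliminary results as black-boxes.'' So there is no proof in the paper to compare your proposal against.

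That said, your outline is a faithful high-level reconstruction of the actual Moshkovitz--Raz argument: arithmetization to a low-degree polynomial encoding, a manifold-versus-point low-degree test giving a projection game with sub-constant soundness, and then the projection-preserving composition with decodable inner PCPs to shrink the alphabet while keeping near-linear size. You are also correct that step three is where essentially all the new content of \cite{MR-2008} lives, and that ordinary PCP composition would destroy the projection structure. For the purposes of this survey, however, the intended ``proof'' is simply a citation; reproducing the full Moshkovitz--Raz construction is well beyond the paper's scope.
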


In \cite{Moshkovitz-2015}, Moshkovitz adapts the Sliding Scale Conjecture of Bellare \emph{et al.} (Conjecture \ref{conj:slidingscale}) to this framework, and conjectures that the alphabet size can be made polynomial in $\frac{1}{\varepsilon}$ rather than exponential. The formal statement of the conjecture is as follows:

\begin{conjecture}[Projection Games Conjecture]
\label{conj:projectiongames}
There exists $c > 0$ such that for every $\varepsilon \geq N^{-c}$, SAT with an input size of $n$ can be reduced to a projection game of input size $N = n^{1 + o(1)} \cdot \text{poly}\left( \frac{1}{\varepsilon} \right)$ over an alphabet of size $\text{poly}\left( \frac{1}{\varepsilon} \right)$ and soundness error $\varepsilon$.
\end{conjecture}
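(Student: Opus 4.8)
The plan is to reproduce Feige's reduction (Section~\ref{sec:feige}) inside the projection-games framework, making two substitutions: the hard instance of MAX~3SAT-5 together with its multi-prover protocol is replaced by the hard projection game furnished by the Projection Games Conjecture, and Feige's partition system (Definition~\ref{def:partitionsystem}) is replaced by Moshkovitz's combinatorial gadget, a bipartite-graph colouring with an analogous covering property whose number of blocks per partition is allowed to grow. Two things are gained. The reduction becomes polynomial time, because Conjecture~\ref{conj:projectiongames} produces a projection game of near-linear size by a polynomial-time reduction, so the hypothesis weakens from $\text{NP}\subseteq\text{DTIME}(n^{\mathcal{O}(\log\log n)})$ to $\text{P}=\text{NP}$. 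And the constant in front of $\ln n$ can be pushed all the way to $1$: the conjectured alphabet size $\mathrm{poly}(1/\varepsilon)$ makes $\ln|\Sigma|$ negligible next to the quantities $\ln m$ and $\ln(1/\varepsilon)$ that govern the gap, whereas the exponential alphabet of the unconditional almost-linear-size PCP theorem would contribute a term comparable to those and spoil the constant.

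First I would invoke the hypotheses. Fix a target soundness error $\varepsilon=\varepsilon(N)$ tending to $0$ (chosen explicitly at the end, e.g.\ $\varepsilon = 1/\mathrm{polyloglog}$, small enough that all error terms below vanish). Assuming the Projection Games Conjecture, together with Theorem~\ref{thm:projectiongamehardness}, SAT on inputs of length $n$ reduces in polynomial time to a projection game $\mathcal{P}=(G=((A,B),E),\Sigma_A,\Sigma_B,\Phi)$ of size $N = n^{1+o(1)}\,\mathrm{poly}(1/\varepsilon)$ with $|\Sigma_A|,|\Sigma_B|\le\mathrm{poly}(1/\varepsilon)$ and soundness error $\varepsilon$, for which it is NP-hard to distinguish ``all edges satisfiable'' from ``at most an $\varepsilon$-fraction satisfiable''; after a standard regularization I may assume $G$ is biregular. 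Next I build the \setcover{} instance $\mathcal{S}_\varphi$. For each $b\in B$ attach a fresh copy of Moshkovitz's gadget: a universe $\mathcal{U}_b$ of size $m$ and, for each label $\tau\in\Sigma_B$, a partition of $\mathcal{U}_b$ into $\deg(b)$ blocks $\mathcal{U}_b[\tau,e]$ indexed by the edges $e$ incident to $b$, with the property that any subfamily of blocks covering $\mathcal{U}_b$ and containing no two blocks of the same partition has size at least $d = (1-o(1))\deg(b)\ln m$. This is the bipartite-colouring analogue of the partition system, now with a non-constant number of blocks; its randomized construction succeeds with overwhelming probability and is derandomizable in time $\mathrm{poly}(m)$ via the $(n,k,b)$-anti-universal set machinery of Theorem~\ref{thm:nkbantiuniversalsetconstruction}, provided $m$ is chosen large relative to $|\Sigma_B|$ and the degrees (say $m = N^{\Theta(1)}$ with a large constant, so $\ln m$ dominates $\ln|B|$, $\ln|\Sigma_A|$, $\ln|\Sigma_B|$). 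The universe of $\mathcal{S}_\varphi$ is $\mathcal{U}=\bigcup_{b\in B}\mathcal{U}_b$, of size $N_{\mathrm{SC}}=m|B|$, and for each $a\in A$ and $\sigma\in\Sigma_A$ I include the set $S_{a,\sigma}=\bigcup_{e=(a,b)\in E}\mathcal{U}_b[\pi_e(\sigma),e]$, which on each neighbour $b$ of $a$ grabs the block belonging to $e$ in the partition indexed by the projected label $\pi_e(\sigma)$.

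The two directions then run parallel to Lemma~\ref{lem:mainclaimfeige}. For completeness, if labelings $\phi_A,\phi_B$ satisfy all edges, then $\{S_{a,\phi_A(a)}:a\in A\}$ covers $\mathcal{U}$, because on each $b$ the blocks $\mathcal{U}_b[\phi_B(b),e]$ over $e\ni b$ are exactly the partition $p_{\phi_B(b)}$ and each lies in the chosen set since $\pi_e(\phi_A(a))=\phi_B(b)$; hence $\textsf{OPT}(\mathcal{S}_\varphi)\le|A|$. For soundness, suppose $\mathcal{C}$ covers $\mathcal{U}$ with $|\mathcal{C}|\le(1-\delta)|A|\ln m$, $\delta=o(1)$. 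As in Lemma~\ref{lem:goodseeds}, set $c(a)=|\{\sigma:S_{a,\sigma}\in\mathcal{C}\}|$, call $b$ \emph{good} when the cover of $\mathcal{U}_b$ induced by $\mathcal{C}$ (which uses exactly $\sum_{e\ni b}c(a_e)$ blocks) has fewer than $d=(1-o(1))\deg(b)\ln m$ of them, and double-count $\sum_b\sum_{a\sim b}c(a)=\sum_a\deg(a)c(a)=\frac{|E|}{|A|}|\mathcal{C}|$ against the bad-$b$ contribution to conclude that a $(\delta-o(1))$-fraction of $B$ is good. For a good $b$ the covering property forces two blocks of its induced cover into one partition $p_\tau$, i.e.\ there are edges $(a_1,b),(a_2,b)$ and labels $\sigma_1,\sigma_2$ with $S_{a_i,\sigma_i}\in\mathcal{C}$ and $\pi_{(a_1,b)}(\sigma_1)=\pi_{(a_2,b)}(\sigma_2)=\tau$. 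Taking, as in Lemma~\ref{lem:weakaccept}, the randomized labeling that picks $\phi_A(a)$ uniformly from $\{\sigma:S_{a,\sigma}\in\mathcal{C}\}$ and $\phi_B(b)$ consistently with $\mathcal{C}$'s cover of $\mathcal{U}_b$, at a good $b$ the pair $(\sigma_1,\sigma_2)$ is chosen with probability at least $1/(c(a_1)c(a_2))\ge 4/d^2$ (AM--GM, using $c(a_1)+c(a_2)<d$), so a $\Omega((\delta-o(1))/(\deg(b)\ln m)^2)$ fraction of edges is satisfied in expectation; choosing $\varepsilon$ below this threshold contradicts the soundness of $\mathcal{P}$, so $\textsf{OPT}(\mathcal{S}_\varphi)>(1-\delta)|A|\ln m$, and fixing best coins de-randomizes the labeling.

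Finally, since $m=N^{\Theta(1)}$ dominates $|B|$ we have $\ln N_{\mathrm{SC}}=(1+o(1))\ln m$, so in the yes case $\textsf{OPT}(\mathcal{S}_\varphi)\le|A|$ and in the no case $\textsf{OPT}(\mathcal{S}_\varphi)\ge(1-o(1))|A|\ln N_{\mathrm{SC}}$; given any $c<1$, choosing $\varepsilon$ (hence $\delta$ and the $o(1)$'s) small enough that $(1-o(1))>c$, a polynomial-time $c\ln N_{\mathrm{SC}}$-approximation for \setcover{} would separate the two cases and hence decide SAT, forcing $\text{P}=\text{NP}$. The main obstacle is the quantitative bookkeeping that keeps every error term $o(1)$ at once: the gadget's covering bound must be $d=(1-o(1))\deg(b)\ln m$ with the $o(1)$ uniform, the double-counting slack and the derandomization loss must also be $o(1)$, and --- crucially --- the $\ln|\Sigma|$ term must stay negligible, which is exactly what forces the alphabet to be $\mathrm{poly}(1/\varepsilon)$ and hence why the Projection Games Conjecture, rather than the unconditional almost-linear-size PCP theorem, is the hypothesis. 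Constructing and derandomizing the bipartite-colouring gadget with these tight non-constant parameters (pushing the anti-universal-set machinery of Theorem~\ref{thm:nkbantiuniversalsetconstruction} past its constant-$k$ comfort zone) is the other delicate point, while completeness and the double-counting step are routine adaptations of Feige's argument.
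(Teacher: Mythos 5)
There is a fundamental mismatch between what you have written and the statement you were asked to prove. The statement is the Projection Games Conjecture itself (Conjecture~\ref{conj:projectiongames}): the assertion that SAT reduces to a projection game with alphabet size $\text{poly}\left(\frac{1}{\varepsilon}\right)$ (rather than the $\text{exp}\left(\frac{1}{\varepsilon}\right)$ of the Moshkovitz--Raz almost-linear-size PCP theorem) for soundness error as small as $\varepsilon \geq N^{-c}$. The paper does not prove this statement --- it is presented as a conjecture, adapted from the Sliding Scale Conjecture of Bellare et al.\ (Conjecture~\ref{conj:slidingscale}), and the paper explicitly notes that only a weaker version of it was later established by Dinur and Steurer. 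Your proposal never engages with this: you do not construct any PCP or projection game with polynomial-in-$\frac{1}{\varepsilon}$ alphabet, which is the entire content of the conjecture and the reason it is hard.

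What you have instead sketched is Moshkovitz's reduction, i.e.\ essentially Lemma~\ref{lem:mainclaimmoshkovitz} and Theorem~\ref{thm:inapproxmoshkovitz}: you \emph{assume} the Projection Games Conjecture as a hypothesis (``Assuming the Projection Games Conjecture, together with Theorem~\ref{thm:projectiongamehardness}, SAT on inputs of length $n$ reduces \dots'') and then derive the $(1-o(1))\ln N$ hardness of \setcover{} via the partition-system/bipartite-colouring gadget. That derivation is a reasonable account of the paper's Section on Moshkovitz, but as a proof of the statement in question it is vacuous: a proof that takes the conjecture as a premise cannot establish the conjecture. Your own closing remarks even identify the $\text{poly}\left(\frac{1}{\varepsilon}\right)$ alphabet as ``the hypothesis'' rather than something to be proved, which makes the gap explicit --- the key idea needed (a PCP construction beating the exponential alphabet of \cite{MR-2008} at subconstant soundness) is entirely absent, and indeed is not known.
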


In both of the above, we may assume without loss of generality that the projection game is \emph{bi-regular}, meaning that all $a \in A$ share a common degree and so do all $b \in B$, though they need not be the same degree for both $A$ and $B$.

We will ultimately formulate the core lemma of Moshkovitz in terms of this conjecture. The full form of the conjecture is not needed to prove that \setcover{} is inapproximable; rather, only a certain weaker version of it is necessary. This version in particular was proven by Dinur and Steurer in \cite{DS-2014}, which leads to the tightest possible lower bound of $ln(n)$ and concludes development of the bound.

\subsubsection{Agreement}

Let $G = ((A, B), E)$ and assume that $(G, \Sigma_A, \Sigma_B, \Phi)$ is a projection game for deciding SAT. Let $\phi_A$ be a labeling of $A$. We say that $A$'s vertices \emph{totally disagree} on a vertex $b \in B$ if, for all pairs of neighbors $a_1, a_2 \in A$ of $b$, the edges $e_1 = (a_1, b)$ and $e_2 = (a_2, b)$ are associated with projections $\pi_1$ and $\pi_2$ for which $\pi_1 (\phi_A (a_1)) \neq \pi_2 (\phi_A (a_2))$. In effect, this means that every pair of neighbors of $b$ is connected by edges whose projections map those neighbors' labels to different elements of $\Sigma_b$ (this is independent of whether or not they agree with the labeling of $b$). This induces the following definition:

\begin{definition}[Agreement Soundness Error]
Consider an unsatisfiable formula $\varphi$ and the projection game $(G, \Sigma_A, \Sigma_B, \Phi)$ as described above. The graph $G$ has \emph{agreement soundness error} $\varepsilon$ if, for any labeling $\phi_A$, $A$'s vertices totally disagree on at least a $(1 - \varepsilon)$ fraction of the $b \in B$.
\end{definition}

Moshkovitz presents two lemmas about agreement. The first asserts that there is a way to construct bi-regular graphs such that any fine-enough partition of the first vertex set forces most vertices of the second vertex set to be connected to many different subsets of the partition. The second connects agreement soundness to the overall soundness of a projection game.

\begin{lemma}
\label{lem:biregular}
For $0 < \varepsilon < 1$, for infinitely many $n$ and $D$, there is an explicit construction of a bi-regular graph $H = ((U, V), E)$ with $|U| = n$, $V$-degree $D$, and $|V| \leq n^{\mathcal{O}(1)}$ such that the following property holds:
\begin{itemize}
\item For every partition of $U$ into subsets $U_1, \dots, U_{m}$ such that $|U_i| \leq \varepsilon |U|$ for all $i$, the fraction of vertices $v \in V$ with more than one neighbor in any $U_i$ is at most $\varepsilon D^2$.
\end{itemize}
\end{lemma}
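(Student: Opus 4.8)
The plan is to build $H$ so that the right-hand vertices act as a \emph{sampler}: each $v \in V$ selects a $D$-element subset of the left-vertices that behaves, with respect to any single set of density at most $\varepsilon$, almost like $D$ independent uniform samples from $U$. In the clean case where $D$ is a fixed constant (not growing with $n$), I would simply take $V = \binom{U}{D}$, the family of all $D$-subsets of $U$, and join $v$ to its $D$ members. This $H$ is explicit, every $v$ has degree exactly $D$, every $u \in U$ lies in exactly $\binom{n-1}{D-1}$ such subsets so the left side is regular too, and $|V| = \binom{n}{D} \le n^{D} = n^{\mathcal{O}(1)}$. If instead the set-cover reduction forces $D$ to grow with $n$, I would replace $\binom{U}{D}$ by an off-the-shelf explicit sampler of seed length $\mathcal{O}(\log n)$ (e.g.\ from random walks on an expander, or from evaluations of low-degree polynomials), so that $|V| = \mathrm{poly}(n)$ while a uniform $v$ still lands each of its $D$ neighbors in any fixed density-$\le\varepsilon$ set with probability $\varepsilon(1+o(1))$ and with pairwise near-independence; a symmetrization step (a disjoint union over cyclic shifts of the index set) restores left-regularity if the sampler is not already left-regular.

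For the stated property, fix an arbitrary partition $U = U_1 \cup \dots \cup U_m$ with $|U_i| \le \varepsilon|U|$ for every $i$, and call $v \in V$ \emph{bad} if some part $U_i$ contains two or more neighbors of $v$. Label the $D$ edges at $v$ as \emph{slots} $1, \dots, D$. For one fixed pair of slots $\{s,t\}$, the probability over a uniform $v$ that both slots fall into a common part is, up to the sampler's error, $\sum_i \Pr[\text{slot } s \in U_i]\,\Pr[\text{slot } t \in U_i] = \sum_i (|U_i|/n)^2$, and since the $U_i$ partition $U$ with each $|U_i|/n \le \varepsilon$, this is at most $\varepsilon \sum_i (|U_i|/n) = \varepsilon$. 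A union bound over the $\binom{D}{2}$ slot-pairs then bounds the fraction of bad $v$ by $\binom{D}{2}\varepsilon \le \tfrac12 D^2\varepsilon < \varepsilon D^2$, and the factor-of-two slack comfortably absorbs both the without-replacement correction $\tfrac{n}{n-1}$ in the $\binom{U}{D}$ construction and the lower-order sampling error when a pseudorandom sampler is used. This gives exactly the claimed bound, and it holds for \emph{every} admissible partition because the argument never unions over partitions — the estimate is uniform in the $U_i$.

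The step I expect to be the real obstacle is producing the explicit construction with all the parameters right simultaneously: $|V|$ polynomial in $n$, exact $V$-degree $D$ for the values of $D$ the reduction actually calls for, left-regularity, and a two-point tail strong enough that, after the union bound over $\binom{D}{2}$ slot-pairs, the bad fraction still sits below $\varepsilon D^2$ uniformly over all density-$\le\varepsilon$ subsets of $U$. When $D = \mathcal{O}(1)$ this is immediate from $\binom{U}{D}$, so the only delicate regime is large $D$, where I would invoke a known explicit sampler/disperser rather than build one from scratch; the one genuinely computational check is verifying that its pairwise concentration is strong enough to survive the slot-pair union bound with room to spare, and that the symmetrization needed for bi-regularity does not spoil it.
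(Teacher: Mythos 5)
The paper itself does not prove this lemma; it is imported as a black box from Moshkovitz's paper, so your argument has to stand on its own. It does. The crucial observation you should make explicit — and the one that resolves the structural tension in your writeup — is that in the downstream use (Corollary~\ref{cor:keycorollarymoshkovitz} and Lemma~\ref{lem:soundnesscomparison}) the $V$-degree $D$ is a \emph{constant}. So the ``delicate regime'' you worry about never arises here, and your clean construction $V=\binom{U}{D}$ is not a warm-up but the whole proof. For that construction, the argument is correct: ordering the $D$ members of $v$ via a uniformly random injection $\pi:[D]\to U$, exchangeability gives, for any pair of positions $s<t$,
\begin{equation*}
\Pr\bigl[\pi(s),\pi(t)\text{ in the same }U_i\bigr]
=\sum_i\frac{|U_i|(|U_i|-1)}{n(n-1)}
\le\sum_i\Bigl(\frac{|U_i|}{n}\Bigr)^{\!2}
\le\varepsilon,
\end{equation*}
and a union bound over the $\binom{D}{2}$ pairs of positions bounds the fraction of bad $v$ by $\binom{D}{2}\varepsilon<\varepsilon D^2$. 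Because this estimate never unions over partitions, it holds uniformly for every admissible partition, exactly as you say. Bi-regularity and explicitness are immediate: the $U$-degree is $\binom{n-1}{D-1}$ and $|V|=\binom{n}{D}\le n^D$, which is $n^{\mathcal{O}(1)}$ precisely because $D$ is constant.

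Two small points. First, your phrase ``slot $s$, slot $t$'' is slightly informal since $v$ is a set; the exchangeability argument above is the precise version, and you should invoke it rather than an implicit independence heuristic, since the with-replacement product formula $\sum_i(|U_i|/n)^2$ is an \emph{upper} bound on the without-replacement probability rather than an equality (this is the ``helps, not hurts'' direction you allude to, and it is worth saying outright). Second, the whole second half of your proposal — the sampler-plus-symmetrization fallback for growing $D$ — is not needed here, and leaving it in dilutes an otherwise tight argument; if you keep it, flag it clearly as out of scope for the parameter regime the reduction actually uses.
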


\begin{lemma}
\label{lem:soundnesscomparison}
Let $D \geq 2$ and $\varepsilon > 0$. Given a projection game with soundness error $\varepsilon^2 D^2$, we can construct a projection game with agreement soundness error $2\varepsilon D^2$ and $B$-degree $D$ in its second vertex set. The transformation preserves the alphabets and increases the game size by only a constant factor.
\end{lemma}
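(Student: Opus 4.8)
The plan is to obtain the new game from the given projection game $G_0 = ((A, B_0), E_0, \Sigma_A, \Sigma_B, \Phi_0)$ --- which, as remarked above, we may take to be bi-regular, say with $B_0$-degree $d_0$ --- by ``refining'' each $B_0$-vertex's neighborhood with the help of the graph supplied by Lemma~\ref{lem:biregular}. Apply that lemma with $U = [d_0]$ and the given $\varepsilon$ to get a bi-regular graph $H = (([d_0], W), E_H)$ of $W$-degree $D$ with $|W| \leq d_0^{\mathcal{O}(1)}$ (arranging $d_0$ beforehand, if necessary, so that the lemma applies for the pair $(d_0, D)$). For each $b \in B_0$ fix a bijection between $[d_0]$ and the neighborhood $N_0(b) \subseteq A$. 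The new game $G$ keeps the first vertex set $A$ and both alphabets; its second vertex set is $B := B_0 \times W$; and it places an edge between $a \in N_0(b)$ and $(b, w)$ exactly when the index of $a$ is an $H$-neighbor of $w$, carrying the original projection $\pi_{(a,b)} \in \Phi_0$ on that edge. Then every $(b,w)$ has degree exactly $D$, the alphabets are preserved, and $|B| = |B_0|\,|W|$ with $|E| = D|B|$, so the size grows by the factor $D\,|W|/d_0 = d_0^{\mathcal{O}(1)}$, a constant in the regime where $D$ and $d_0$ are constants. Completeness is immediate: if $(\phi_A, \phi_B)$ satisfies all of $E_0$, then $\pi_{(a,b)}(\phi_A(a)) = \phi_B(b)$ for every $a \in N_0(b)$, so labeling $(b,w)$ by $\phi_B(b)$ satisfies every edge of $G$.

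For soundness, fix an arbitrary labeling $\phi_A$ of $A$ and, for each $b$, let the \emph{projection partition at $b$} be the partition of $N_0(b)$ grouping vertices by the value of $\pi_{(a,b)}(\phi_A(a)) \in \Sigma_B$; call $b$ \emph{heavy} if some class has size $> \varepsilon d_0$ and \emph{light} otherwise. For a heavy $b$, labeling $b$ by the $\Sigma_B$-value of its largest class satisfies more than an $\varepsilon$-fraction of the $d_0$ edges of $G_0$ incident to $b$, so, by bi-regularity, if a $\beta$-fraction of $B_0$ were heavy there would be a labeling of $G_0$ satisfying more than an $\varepsilon\beta$-fraction of $E_0$; the soundness error $\varepsilon^2 D^2$ of $G_0$ then forces $\varepsilon\beta < \varepsilon^2 D^2$, i.e.\ $\beta < \varepsilon D^2$. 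For a light $b$ every class of its projection partition has size at most $\varepsilon d_0 = \varepsilon|U|$, so the defining property of $H$ from Lemma~\ref{lem:biregular} applies to this partition (transported along the fixed bijection): at most an $\varepsilon D^2$-fraction of $w \in W$ have two $H$-neighbors in a common class, and for every other $w$ the $D$ neighbors of $(b,w)$ lie in $D$ distinct classes, hence their labels are sent by the corresponding projections to $D$ distinct elements of $\Sigma_B$ --- which is exactly total disagreement at $(b,w)$. Summing over $b \in B_0$, the fraction of $(b,w) \in B$ on which $A$'s vertices fail to totally disagree is at most $\beta \cdot 1 + (1-\beta)\cdot \varepsilon D^2 < \varepsilon D^2 + \varepsilon D^2 = 2\varepsilon D^2$, so $G$ has agreement soundness error $2\varepsilon D^2$.

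The main obstacle is the calibration in this soundness argument: the heavy/light threshold must be set so that a heavy $b$ genuinely certifies a violation of the $\varepsilon^2 D^2$ soundness of $G_0$ (this is where bi-regularity is used and where one factor of $\varepsilon$ is ``spent''), while simultaneously a light $b$ is fine enough for Lemma~\ref{lem:biregular} to be invoked with the \emph{same} $\varepsilon$; after that one only has to check that the two error sources --- heavy $b$'s, and the bad $w$'s lying over light $b$'s --- add up to at most $2\varepsilon D^2$, and that the size blowup $d_0^{\mathcal{O}(1)}$ is indeed a constant factor in the intended parameter regime.
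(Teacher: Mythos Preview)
The paper does not actually prove Lemma~\ref{lem:soundnesscomparison}; it is stated (together with Lemma~\ref{lem:biregular}) as a black-box result imported from Moshkovitz's paper, so there is no in-paper proof to compare against. That said, your reconstruction is exactly the intended argument: blow up each $b \in B_0$ into $\{b\} \times W$ via the auxiliary graph $H$ of Lemma~\ref{lem:biregular}, inherit the projections, and for soundness split $B_0$ into heavy/light according to whether some projection class at $b$ exceeds $\varepsilon d_0$. The heavy vertices are bounded via the original soundness (one factor of $\varepsilon$ spent), the light vertices feed directly into the partition property of $H$ (the other $\varepsilon D^2$), and the two contributions sum to $2\varepsilon D^2$. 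This is correct and matches Moshkovitz's own proof.

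One point worth flagging, which you already gesture at: the ``constant factor'' size increase hinges on $|W| \le d_0^{\mathcal{O}(1)}$ being a constant, i.e.\ on $d_0$ being a constant. You phrase this as ``in the regime where $D$ and $d_0$ are constants,'' which is honest but leaves it to the reader to check that the projection game coming out of the PCP/Projection Games Conjecture can indeed be taken with constant $B_0$-degree (it can, after a standard degree-reduction step, and Moshkovitz arranges this). It would strengthen the write-up to state explicitly that bi-regularity with constant $B_0$-degree is being assumed of the input game, since otherwise the size claim does not follow from your construction as written.
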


\subsubsection{List-Agreement}

There is a generalization of agreement known as list-agreement, which operates under \emph{$\ell$-labelings}, a more general version of labelings that assign $\ell$ labels to each vertex rather than just one. As before, let $G = ((A, B), E)$ and assume that $(G, \Sigma_A, \Sigma_B, \Phi)$ is a projection game for deciding SAT. For $\ell \geq 1$, let $\widehat{\phi_A} : A \to \binom{\Sigma_A}{\ell}$ be an $\ell$-labeling of $A$. We say that $A$'s vertices \emph{totally list-disagree} on a vertex $b \in B$ if, for all pairs of neighbors $a_1, a_2 \in A$ of $b$, the edges $e_1 = (a_1, b)$ and $e_2 = (a_2, b)$ are associated with projections $\pi_1$ and $\pi_2$ such that, for all pairs of labels $\sigma_1 \in \widehat{\phi_A} (a_1)$ and $\sigma_2 \in \widehat{\phi_A} (a_2)$, $\pi_1 (\sigma_1) \neq \pi_2 (\sigma_2)$. In the case of $\ell = 1$, this is exactly the original notion of total disagreement; in general, it is the statement that the images of $\pi_1$ on the labels of $a_1$ and $\pi_2$ on the labels of $a_2$ are disjoint. As before, this induces a corresponding notion of soundness:

\begin{definition}[List-Agreement Soundness Error]
Consider an unsatisfiable formula $\varphi$ and the projection game $(G, \Sigma_A, \Sigma_B, \Phi)$ as described above. The graph $G$ has \emph{list-agreement soundness error} $(\ell, \varepsilon)$ if, for any $\ell$-labeling $\widehat{\phi_A}$, $A$'s vertices totally list-disagree on at least a $(1 - \varepsilon)$ fraction of the $b \in B$.
\end{definition}

There is a relationship between the agreement soundness error of a given projection game and the corresponding list-agreement soundness error, codified in the following lemma. The proof is short, so we present the proof as well.

\begin{lemma}
\label{lem:agreementsoundnesscomparison}
Let $\ell \geq 1$ and $0 < \varepsilon < 1$. A projection game with agreement soundness error $\varepsilon$ has list-agreement soundness error $(\ell, \varepsilon \ell^2)$.
\end{lemma}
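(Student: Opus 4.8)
The plan is to argue in the counting direction. Fix an arbitrary $\ell$-labeling $\widehat{\phi_A} : A \to \binom{\Sigma_A}{\ell}$ and call a vertex $b \in B$ \emph{bad} if $A$'s vertices fail to totally list-disagree on $b$; the goal is to show that at most an $\varepsilon\ell^2$ fraction of $B$ is bad. For a single labeling $\phi_A$, write $\mathrm{Bad}(\phi_A)$ for the set of $b \in B$ on which $A$'s vertices fail to totally disagree. The hypothesis of agreement soundness error $\varepsilon$ says exactly that $|\mathrm{Bad}(\phi_A)| \le \varepsilon |B|$ for \emph{every} single labeling $\phi_A$, so the whole task reduces to relating the bad set of $\widehat{\phi_A}$ to the bad sets of cleverly chosen single labelings.

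The device is a random rounding of $\widehat{\phi_A}$ to a single labeling $\phi_A$: independently for each $a \in A$, let $\phi_A(a)$ be chosen uniformly at random from the $\ell$ labels in $\widehat{\phi_A}(a)$. First I would observe that if $b$ is bad, then by definition there are neighbors $a_1 \ne a_2$ of $b$, with incident edges $e_1 = (a_1,b)$, $e_2 = (a_2,b)$ and projections $\pi_1, \pi_2$, and labels $\sigma_1 \in \widehat{\phi_A}(a_1)$, $\sigma_2 \in \widehat{\phi_A}(a_2)$ such that $\pi_1(\sigma_1) = \pi_2(\sigma_2)$. Since $a_1 \ne a_2$, the events $\{\phi_A(a_1) = \sigma_1\}$ and $\{\phi_A(a_2) = \sigma_2\}$ are independent, each of probability $1/\ell$; and on their intersection the pair $a_1, a_2$ witnesses that $A$'s vertices do not totally disagree on $b$ under $\phi_A$. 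Hence $\Pr[\,b \in \mathrm{Bad}(\phi_A)\,] \ge 1/\ell^2$ for every bad $b$.

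Now linearity of expectation gives $\mathbb{E}\bigl[\,|\mathrm{Bad}(\phi_A)|\,\bigr] \ge \frac{1}{\ell^2}\cdot\#\{\text{bad } b\}$. On the other hand, \emph{every} realization of $\phi_A$ is an ordinary single labeling, so $|\mathrm{Bad}(\phi_A)| \le \varepsilon|B|$ deterministically, hence also in expectation. Combining the two bounds yields $\#\{\text{bad } b\} \le \varepsilon\ell^2|B|$, i.e.\ $A$'s vertices totally list-disagree on at least a $(1-\varepsilon\ell^2)$ fraction of $B$. Since $\widehat{\phi_A}$ was arbitrary, the game has list-agreement soundness error $(\ell,\varepsilon\ell^2)$, as claimed.

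I do not expect a genuine obstacle; the only delicate point is the elementary probability computation — specifically ensuring the witnessing pair consists of two \emph{distinct} neighbors so that the two rounding events are independent and the $1/\ell^2$ bound is legitimate (the definition of total list-disagreement supplies this). If one prefers to avoid probabilistic language, the same argument can be run by averaging over all $\ell^{|A|}$ roundings and selecting the one with the largest bad set, but the randomized phrasing is cleaner.
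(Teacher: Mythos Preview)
Your proof is correct and follows essentially the same approach as the paper: randomly round the $\ell$-labeling to a single labeling, observe that each bad vertex survives as a disagreement witness with probability at least $1/\ell^2$, and compare expectations against the agreement soundness bound. The paper phrases it as a proof by contradiction and is terser about the independence point you flag, but the argument is the same.
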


\begin{proof}[Proof of Lemma \ref{lem:agreementsoundnesscomparison}]
Assume not. Then there exists some $\ell$-labeling $\widehat{\phi_A} : A \to \binom{\Sigma_A}{\ell}$ such that, on a fraction of of $B$'s vertices strictly larger than $\varepsilon \ell^2$, $A$'s vertices do not totally list-disagree.

Define a labeling $\phi_A : A \to \Sigma_a$ by assigning each vertex $a \in A$ one of the labels in $\widehat{\phi_A} (a)$ at random. If some $b \in B$ has neighbors $a_1, a_2$ that do not list-disagree on $b$ under $\widehat{\phi_A}$, then they agree on $b$ under $\phi_A$ with probability at least $\frac{1}{\ell^2}$. Therefore, the expected fraction $b \in B$ that has at least two neighbors in agreement is strictly larger than $\varepsilon \ell^2 \cdot \frac{1}{\ell^2} = \varepsilon$. However, this fraction is also $\varepsilon$ by assumption. Contradiction as this implies $\varepsilon > \varepsilon$.
\end{proof}

This progressive relationship between the three notions of soundness will allow us to state and prove the core lemma of Moshkovitz's proof.

\subsection{The Proof}

In terms of projection games, Moshkovitz seeks to prove the following lemma:

\begin{lemma}
\label{lem:mainclaimmoshkovitz}
For every $0 < \alpha < 1$, there exists $c$ a function of $\alpha$ such that, if the Projection Games Conjecture holds with soundness error $\varepsilon = \frac{c}{(\log_2 n)^4}$, then it is NP-hard to approximate Set Cover on inputs of size $N = n^{\mathcal{O}(1 / \alpha)}$ better than $(1 - \alpha) \ln N$.
\end{lemma}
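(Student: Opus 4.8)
The plan is to run a Feige-style gap reduction in which every ingredient is replaced by its projection-game analogue from the preliminaries. Starting from the Projection Games Conjecture instantiated with soundness error $\varepsilon_0 = c/(\log_2 n)^4$, I would first regularize and strengthen the game: apply Lemma \ref{lem:soundnesscomparison}, fed by the explicit bi-regular graphs of Lemma \ref{lem:biregular}, to convert it into a bi-regular projection game $(G = ((A,B),E),\Sigma_A,\Sigma_B,\Phi)$ with a prescribed constant $B$-degree $D$ and \emph{agreement} soundness error on the order of $\sqrt{\varepsilon_0}\,D$, paying only a constant-factor increase in size; then apply Lemma \ref{lem:agreementsoundnesscomparison} to get \emph{list}-agreement soundness error $(\ell,\varepsilon')$ with $\varepsilon' = \Theta(\sqrt{\varepsilon_0}\,D\,\ell^2)$, where $\ell$ is fixed later. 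The alphabet size granted by the conjecture is $\mathrm{poly}(1/\varepsilon_0) = \mathrm{polylog}(n)$, so $|A|,|B|,|\Sigma_A|,|\Sigma_B|$ are all $\mathrm{poly}(n)$ and the construction below runs in polynomial time.

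Next I would build the \setcover{} instance. Take the universe to be $\mathcal{U} = B\times M$ for a ground set $M$ with $|M| = m$, where $M$ carries, for each $b\in B$, the bipartite-coloring analogue of Feige's partition system (Definition \ref{def:partitionsystem}, constructed via Lemma \ref{lem:biregular}): for every short-side label $\tau\in\Sigma_B$ the parts $\{M_{b,a,\tau}\}_{a\sim b}$ partition $M$, while any subfamily of parts covering $M$ that uses no two parts of the same $\tau$ has size at least $d = (1-f(D))\,D\ln m$. For each $(a,\sigma)\in A\times\Sigma_A$ include the set $S_{a,\sigma} = \bigcup_{b\sim a}\{b\}\times M_{b,a,\pi_{(a,b)}(\sigma)}$. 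The universe size is $N = |B|\,m$; I would choose $m$ to be a polynomial in $n$ large enough that $\ln m \ge (1-\alpha/3)\ln N$, which forces $m \ge |B|^{\Theta(1/\alpha)}$ and hence $N = n^{\mathcal{O}(1/\alpha)}$.

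For completeness, given a satisfying pair of labelings $(\phi_A,\phi_B)$ I would take $\{S_{a,\phi_A(a)} : a\in A\}$; block $\{b\}\times M$ is then covered by $\bigcup_{a\sim b}M_{b,a,\pi_{(a,b)}(\phi_A(a))} = \bigcup_{a\sim b}M_{b,a,\phi_B(b)} = M$, so $|A|$ sets suffice and the optimum is at most $|A|$. For soundness, suppose $\varphi$ is unsatisfiable and $\mathcal{C}$ is a cover with $|\mathcal{C}| < (1-\alpha)\ln N\cdot|A|$. Read off an $\ell$-labeling $\widehat{\phi_A}(a) = \{\sigma : S_{a,\sigma}\in\mathcal{C}\}$ on the vertices $a$ incident to at most $\ell$ sets of $\mathcal{C}$; an averaging bound on $|\mathcal{C}|/|A|$ bounds the number of heavier $a$ by $|\mathcal{C}|/\ell$, and bi-regularity turns this into a small (at most $D|\mathcal{C}|/(\ell|A|)$) fraction of the $b$ they touch. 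By list-agreement soundness, on a $(1-\varepsilon')$-fraction of $b\in B$ the neighbors of $b$ totally list-disagree, so for such $b$ every two distinct parts covering $\{b\}\times M$ carry distinct $\tau$; hence the coloring property forces block $b$ to need at least $d$ sets of $\mathcal{C}$ incident to it. Summing incidences and using bi-regularity (the total incidence count is $|\mathcal{C}|$ times the common $A$-degree, and it is at least $(1-\varepsilon'-o(1))|B|\,d$) gives $|\mathcal{C}| \ge (1-\varepsilon'-o(1))(1-f(D))\ln m\cdot|A|$, which, with $\varepsilon'<\alpha/3$, $f(D)<\alpha/3$, and $\ln m \ge (1-\alpha/3)\ln N$, exceeds $(1-\alpha)\ln N\cdot|A|$ — a contradiction. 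Thus satisfiable $\varphi$ give optimum $\le|A|$ and unsatisfiable ones give optimum $>(1-\alpha)\ln N\cdot|A|$, so by Theorem \ref{thm:projectiongamehardness} it is NP-hard to approximate \setcover{} on these size-$N$ instances within $(1-\alpha)\ln N$.

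The crux is closing the parameter chain. We need $\ell$ large enough to make the heavy-$a$ fraction of $B$ negligible, which forces $\ell = \Theta_\alpha(D\ln N) = \Theta_\alpha(\ln n)$, and we need $\varepsilon' = \Theta(\sqrt{\varepsilon_0}\,D\,\ell^2) < \alpha/3$. This is exactly where the exponent $4$ is dictated: Lemma \ref{lem:soundnesscomparison} costs a square root, so $\sqrt{\varepsilon_0} = \sqrt{c}/(\log_2 n)^2$, and Lemma \ref{lem:agreementsoundnesscomparison} costs a factor $\ell^2 = \Theta((\log n)^2)$, so the two powers of $\log$ cancel and $\varepsilon' = \Theta(\sqrt{c}\cdot g(\alpha))$ for a function $g$ depending only on $\alpha$ (through $D$ and the hidden $\alpha$-constants in $\ell$). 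Choosing the constant $c = c(\alpha)$ small enough makes $\varepsilon' < \alpha/3$, while independently one fixes $D = D(\alpha)$ large enough that $f(D) < \alpha/3$ and $m$ large enough that $\ln m \ge (1-\alpha/3)\ln N$; verifying that these choices are mutually consistent is the main obstacle, but once the construction is in place it reduces to a short sequence of elementary inequalities.
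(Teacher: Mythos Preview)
Your proposal is correct and follows the paper's proof essentially step for step: the same lemma chain (Lemma~\ref{lem:soundnesscomparison} then Lemma~\ref{lem:agreementsoundnesscomparison}) to pass from the PGC to list-agreement soundness, the same partition-system \setcover{} construction with universe $B\times M$ and sets $S_{a,\sigma}$, the same completeness argument, and the same parameter trace that forces the exponent~$4$ on $\log n$. The only cosmetic difference is in the soundness bookkeeping: the paper sets $\ell=d$ and applies Markov's inequality directly to the block costs $c(b)=\sum_{a\sim b}c(a)$ (so that $c(b)\le d$ automatically makes every neighbor of $b$ light, and the $d$-labeling with padding captures all relevant $\sigma$'s), whereas you split off the heavy $a$'s first and separately bound the fraction of $b$ they touch---both arguments work, and the paper's version is marginally cleaner.
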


This lemma would, contingent on the Projection Games Conjecture, close the gap between the algorithm of Johnson given in \cite{Johnson-1974} and the result of Feige. The following corollary comes from progressively applying Lemmas \ref{lem:soundnesscomparison} and \ref{lem:agreementsoundnesscomparison} to Conjecture \ref{conj:projectiongames}:

\begin{corollary}
\label{cor:keycorollarymoshkovitz}
For any $\ell = \text{polylog}(n)$, for any constant prime power $D \geq 2$ and constant $0 < \alpha < 1$, SAT on input size $n$ can be reduced to a projection game of size $N = \text{poly}(n)$ with alphabet size $\text{poly}(n)$ and $B$-degree $D$, and where the list-agreement soundness error is $(\ell, \alpha)$.
\end{corollary}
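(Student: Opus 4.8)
The plan is to chain the two soundness-conversion lemmas on top of the Projection Games Conjecture, choosing the conjecture's soundness parameter small enough that the losses incurred by each conversion still leave us with list-agreement soundness error $(\ell,\alpha)$. It is cleanest to fix the parameters by working backwards. We want the final game to have list-agreement soundness error $(\ell,\alpha)$; by Lemma~\ref{lem:agreementsoundnesscomparison} it suffices to produce a game with ordinary agreement soundness error at most $\alpha/\ell^2$. By Lemma~\ref{lem:soundnesscomparison}, to obtain agreement soundness error $2\varepsilon D^2$ together with $B$-degree $D$ we must start from a projection game of soundness error $\varepsilon^2 D^2$; so setting $\varepsilon := \alpha/(2D^2\ell^2)$ makes $2\varepsilon D^2 = \alpha/\ell^2$ as required, and the soundness error we must demand from the conjecture is $\varepsilon_0 := \varepsilon^2 D^2 = \alpha^2/(4D^2\ell^4)$.

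First I would invoke the Projection Games Conjecture (Conjecture~\ref{conj:projectiongames}) with soundness error $\varepsilon_0$. Since $D$ and $\alpha$ are constants and $\ell = \mathrm{polylog}(n)$, we have $\varepsilon_0 = 1/\mathrm{polylog}(n)$; the resulting game has size $N = n^{1+o(1)}\cdot\mathrm{poly}(1/\varepsilon_0) = n^{1+o(1)}$, so the feasibility condition $\varepsilon_0 \ge N^{-c}$ of the conjecture holds for all large $n$ (a polynomially small quantity lies far below $1/\mathrm{polylog}(n)$). The conjecture thus reduces SAT on input size $n$ to a bi-regular projection game of size $\mathrm{poly}(n)$ over an alphabet of size $\mathrm{poly}(1/\varepsilon_0) = \mathrm{polylog}(n) \subseteq \mathrm{poly}(n)$ and soundness error $\varepsilon_0$. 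Next I would apply Lemma~\ref{lem:soundnesscomparison} with $D$ and $\varepsilon = \alpha/(2D^2\ell^2)$: since $\varepsilon^2 D^2 = \varepsilon_0$ the hypothesis is met, and the lemma yields a projection game with agreement soundness error $2\varepsilon D^2 = \alpha/\ell^2$ and $B$-degree $D$, preserving the alphabets and increasing the size only by a constant factor (hence still $\mathrm{poly}(n)$). Finally I would apply Lemma~\ref{lem:agreementsoundnesscomparison} with this $\ell$: agreement soundness error $\alpha/\ell^2$ gives list-agreement soundness error $(\ell,\ (\alpha/\ell^2)\cdot\ell^2) = (\ell,\alpha)$. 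Since that lemma only reasons about $\ell$-labelings and leaves the graph untouched, the $B$-degree is still $D$ and the size and alphabet bounds are unchanged, giving exactly the claimed reduction.

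The main obstacle is parameter bookkeeping rather than any new idea: one must verify that shrinking the conjecture's soundness error all the way down to $\varepsilon_0 = \alpha^2/(4D^2\ell^4)$ keeps $N$ polynomial in $n$ — so that $\varepsilon_0 \ge N^{-c}$ remains valid and the alphabet stays $\mathrm{poly}(n)$ — which works precisely because $\ell$ is only polylogarithmic. One also has to thread the structural hypotheses through the chain: the ``without loss of generality bi-regular'' remark is what allows Lemma~\ref{lem:soundnesscomparison} to be applied at all, the prime-power hypothesis on $D$ is what is needed by the explicit bi-regular graph construction of Lemma~\ref{lem:biregular} underlying Lemma~\ref{lem:soundnesscomparison}, and one should check that the constant-factor size blow-up does not disturb the degree or alphabet claims. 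None of these are difficult, but the quantifier order must be handled carefully, since $c$ in the conjecture is a fixed universal constant whereas $\varepsilon_0$ is chosen as a function of $\ell$, $D$, and $\alpha$.
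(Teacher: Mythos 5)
Your proposal is correct and is exactly the argument the paper intends: the paper itself merely says the corollary ``comes from progressively applying Lemmas~\ref{lem:soundnesscomparison} and~\ref{lem:agreementsoundnesscomparison} to Conjecture~\ref{conj:projectiongames}'' without spelling out the parameter choices, and your backward derivation of $\varepsilon_0 = \alpha^2/(4D^2\ell^4)$ is precisely the bookkeeping that makes it go through. You correctly verify the chain $\varepsilon_0 \to$ agreement soundness $\alpha/\ell^2 \to$ list-agreement soundness $(\ell, \alpha)$, confirm that the sizes and alphabets stay polynomial (in fact the alphabet is only polylogarithmic, which is stronger than claimed), and you properly identify where the bi-regularity assumption and the prime-power hypothesis on $D$ enter.
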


With this corollary, we are able to set up and prove Lemma \ref{lem:mainclaimmoshkovitz}.

\subsubsection{Initial Setup}

Let $\mathcal{G}$ be a projection game for SAT as in Corollary \ref{cor:keycorollarymoshkovitz}. Assume $\mathcal{G}$ is bi-regular with $A$-degree $D_A$ and $B$-degree $D_B$. We will construct a partition system $\beta (m, \Ell, k, d)$ as in Definition \ref{def:partitionsystem} corresponding to $\mathcal{G}$. Recall that this construction is deterministic and efficient by Theorem \ref{thm:nkbantiuniversalsetconstruction}. Additionally, unlike Feige, this proof requires just one partition system. The parameters will be as follows:

\begin{itemize}
\item The universe set will be $\beta$, with $|\beta| = m \leq \text{poly}(k^{\log k}, \log \ell)$.
\item The number of partitions will be $\Ell = |\Sigma_B|$.
\item The number of sets per partition will be $k = D_B$.
\item The minimum set requirement $d$ will be $k \ln m (1 - \alpha)$ for $\alpha \leq \frac{2}{k}$.
\end{itemize}

Next, associate each partition of $\beta$ with a character in $\Sigma_B$, as in $P_{\sigma_1}, \dots, P_{\sigma_\Ell}$. Associate each set within each partition with an index in $[k]$. Similarly, for each vertex $b \in B$, associate each incident edge $e = (a, b)$ with an index in $[k]$.

We will construct a \setcover{} instance $\mathcal{S}$. The universe will be $\beta \times B$. We will include a subset $S_{a, \sigma}$ for every vertex $a \in A$ and every character $\sigma \in \Sigma_A$. To construct these subsets, do the following for each edge $e = (a, b)$ adjacent to $a$: Look at the index $j_e$ associated with that edge by vertex $b$. Then, take the subset $S_{j_e}$ associated with index $j_e$ in the partition corresponding to the character that is the image of $\sigma$ under the projection $\pi_e$. The subset $S_{a, \sigma}$ can then be defined as

$$S_{a, \sigma} = \bigcup_{e = (a, b) \text{ for some } b} S_{j_e}$$

which completes the construction. A diagram of Moshkovitz's construction can be seen in Figure \ref{fig:construction-moshkovitz}.

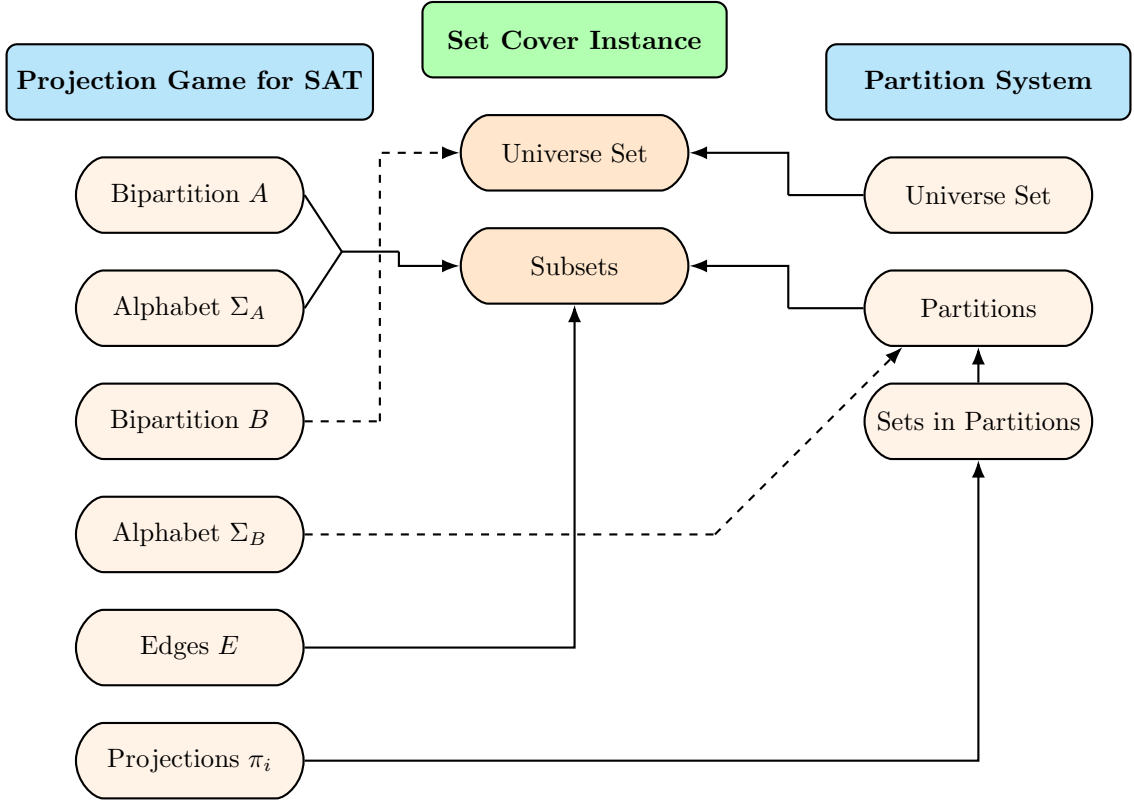
\begin{figure}[tb!]
\centering
\begin{tikzpicture}[node distance=1.5cm, font=\small,thick, >=Latex]

\node[goalconstruction]
	(SetCover)
	{\textbf{Set Cover Instance}};

\node[goalelement, below of = SetCover]
	(SetCoverU)
	{Universe Set};

\node[goalelement, below of = SetCoverU]
	(SetCoverS)
	{Subsets};

\node[construction, below right of = SetCover, xshift=4.25cm, yshift=0.5cm]
	(Partition)
	{\textbf{Partition System}};

\node[element, below of = Partition]
	(PartitionBeta)
	{Universe Set};

\node[element, below of = PartitionBeta]
	(PartitionPi)
	{Partitions};

\node[element, below of = PartitionPi]
	(PartitionSets)
	{Sets in Partitions};

\node[construction, below left of = SetCover, xshift=-4cm, yshift=0.5cm] 
	(ProjGame)
	{\textbf{Projection Game for SAT}};

\node[element, below of = ProjGame]
	(ProjGameGraphA)
	{Bipartition $A$};

\node[element, below of = ProjGameGraphA]
	(ProjGameSigmaA)
	{Alphabet $\Sigma_A$};

\node[element, below of = ProjGameSigmaA]
	(ProjGameGraphB)
	{Bipartition $B$};

\node[element, below of = ProjGameGraphB]
	(ProjGameSigmaB)
	{Alphabet $\Sigma_B$};

\node[element, below of = ProjGameSigmaB]
	(ProjGameEdges)
	{Edges $E$};

\node[element, below of = ProjGameEdges]
	(ProjGameProjections)
	{Projections $\pi_i$};

\path (ProjGameSigmaA) -- (ProjGameGraphA) node[coordinate, midway] (Dummy1Helper) {};
\node [coordinate, right of = Dummy1Helper, xshift = 0.5cm] (Dummy1) {};
\node [coordinate, right of = Dummy1, xshift = -0.75cm] (Dummy1Ext) {};

\node [coordinate, right of = ProjGameGraphB, xshift = 1cm] (Dummy2) {};
\node [coordinate, left of = PartitionBeta, xshift = -1cm] (Dummy3) {};

\node[coordinate, below of = PartitionSets] (Dummy4Helper) {};
\path (Dummy4Helper) -- (ProjGameSigmaB) node[coordinate, midway] (Dummy4Ext) {};
\path (Dummy4Ext) -- (Dummy4Helper) node[coordinate, midway, xshift=-1.25cm] (Dummy4) {};

\node [coordinate, right of = VerifierQ, xshift = 1cm] (Dummy5) {};
\node [coordinate, left of = PartitionsPi, xshift = -1cm] (Dummy6) {};

\draw (ProjGameGraphA.east) -- (Dummy1);
\draw (ProjGameSigmaA.east) -- (Dummy1);
\draw (Dummy1) -- (Dummy1Ext);
\draw[->] (Dummy1Ext) |- (SetCoverS);

\draw[dashed] (ProjGameGraphB.east) -- (Dummy2);
\draw[dashed, ->] (Dummy2) |- (SetCoverU);

\draw (PartitionBeta.west) -- (Dummy3);
\draw[->] (Dummy3) |- (SetCoverU);

\draw[->] (ProjGameProjections) -| (PartitionSets);

\draw[dashed] (ProjGameSigmaB) -| (Dummy4);
\draw[dashed, ->] (Dummy4) -- (PartitionsPi.south west);
\draw[->] (PartitionSets) -- (PartitionsPi);

\draw[->] (ProjGameEdges) -| (SetCoverS.south);

\draw (PartitionsPi.west) -- (Dummy6);
\draw[->] (Dummy6) |- (SetCoverS);

\end{tikzpicture}
\caption{A diagram of Moshkovitz's proof structure.}
\label{fig:construction-moshkovitz}
\end{figure}

\subsubsection{Completeness and Soundness}

We seek to prove that this reduction satisfies completeness and soundness. Completeness is relatively trivial; the statement is as follows:

\begin{lemma}[Completeness]
If all the edges in projection game $\mathcal{G}$ can be satisfied, then $\mathcal{S}$ has a covering of size $|A|$.
\end{lemma}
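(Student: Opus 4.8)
The plan is to produce an explicit cover of the claimed size directly from a satisfying assignment, so that the completeness direction carries no gap. Suppose every edge of the projection game $\mathcal{G} = ((A,B),E)$ can be satisfied, and fix labelings $\phi_A : A \to \Sigma_A$ and $\phi_B : B \to \Sigma_B$ with $\pi_e(\phi_A(a)) = \phi_B(b)$ for every edge $e = (a,b) \in E$. I claim that $\mathcal{C} = \{\, S_{a,\phi_A(a)} : a \in A \,\}$ covers the universe $\beta \times B$ of $\mathcal{S}$. Since $\mathcal{C}$ has one set per vertex of $A$ it has at most $|A|$ members, and padding with redundant sets (the instance has $|A|\cdot|\Sigma_A|$ of them) gives a cover of size exactly $|A|$, so the size bound is immediate.

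The heart of the argument is a pointwise check. Fix a universe element $(x,b)$ with $x \in \beta$, $b \in B$, and look at the partition $P_{\phi_B(b)}$ of $\beta$ that the construction attaches to the character $\phi_B(b) \in \Sigma_B$. By the definition of a partition system, $P_{\phi_B(b)}$ is a family of $k = D_B$ pairwise disjoint sets whose union is all of $\beta$, so $x$ lies in exactly one of them, say the one carrying index $j \in [k]$. Because $\mathcal{G}$ is bi-regular with $B$-degree $D_B$, vertex $b$ has exactly $k$ incident edges, and the construction labels these by $[k]$ bijectively; hence there is an edge $e = (a,b)$ incident to $b$ with $j_e = j$.

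Now I would unwind the definition of $S_{a,\phi_A(a)}$ on the fiber $\beta \times \{b\}$: its contribution there is $S_{j_e} \times \{b\}$, where $S_{j_e}$ is the index-$j_e$ set of the partition associated with the character $\pi_e(\phi_A(a))$. Since $e$ is satisfied, $\pi_e(\phi_A(a)) = \phi_B(b)$, so that partition is precisely $P_{\phi_B(b)}$ and $S_{j_e}$ is precisely its index-$j$ set, which contains $x$. Therefore $(x,b) \in S_{a,\phi_A(a)} \in \mathcal{C}$, and as $(x,b)$ was arbitrary, $\mathcal{C}$ is a cover.

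The only delicate point — and the one place I would be careful — is the bookkeeping of the per-$b$ edge indexing: the argument needs that the $D_B$ edges at each $b$ are labeled by $[k]$ bijectively, so that every index $j$ produced by the partition is realized by some incident edge. This is exactly where bi-regularity (fixed $B$-degree $k = D_B$) enters; beyond it, nothing quantitative is needed, in sharp contrast to the soundness direction, where the parameters $d = k\ln m\,(1-\alpha)$, the universe size $m$, and the list-agreement soundness error from Corollary~\ref{cor:keycorollarymoshkovitz} will do the real work.
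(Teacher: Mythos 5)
Your proposal is correct and takes exactly the approach the paper sketches (the paper states only that "the proof is simple and is obtained by taking the cover corresponding to the satisfying labeling"); you supply the pointwise verification the paper omits, correctly identifying that bi-regularity with $B$-degree $D_B = k$ and the per-$b$ bijection between incident edges and $[k]$ is what makes every cell of the partition $P_{\phi_B(b)}$ reachable by some $S_{a,\phi_A(a)}$.
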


The proof is simple and is obtained by taking the cover corresponding to the satisfying labeling. The question of soundness is more difficult:

\begin{lemma}[Soundness]
\label{lem:soundness}
Recall $d = k \ln m (1 - \alpha)$. If $\mathcal{G}$ has list-agreement soundness $(d, \alpha)$, then every covering of $\mathcal{S}$ is strictly larger than $|A| \ln m (1 - 2\alpha)$.
\end{lemma}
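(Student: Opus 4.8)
The plan is to run a double-counting argument paralleling Feige's analysis of Lemma~\ref{lem:mainclaimfeige}, but phrased through the list-agreement soundness of the projection game. Assume for contradiction that $\mathcal{S}$ has a cover $\mathcal{C}$ with $|\mathcal{C}| \le |A|\ln m(1-2\alpha)$. For each $a \in A$ I would record the list $L(a) = \{\sigma \in \Sigma_A : S_{a,\sigma} \in \mathcal{C}\}$, so that $\sum_{a\in A}|L(a)| = |\mathcal{C}|$; the goal is to turn $L$ into a $d$-labeling witnessing too few total list-disagreements, contradicting list-agreement soundness $(d,\alpha)$.

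The structural core is the following observation. Fix $b\in B$; the fibre $\{b\}\times\beta$ can only be covered by sets $S_{a,\sigma}$ with $a\sim b$, and each such set contributes to the fibre exactly the index-$j_{(a,b)}$ part of the partition $P_{\pi_{(a,b)}(\sigma)}$. Since the $k$ edges at $b$ receive $k$ distinct indices, two contributing parts can lie in the same partition of $\beta$ only when they come from two different neighbours $a_1\neq a_2$ of $b$ with $\pi_{(a_1,b)}(\sigma_1)=\pi_{(a_2,b)}(\sigma_2)$ for some $\sigma_i\in L(a_i)$ — and that is exactly the statement that $A$'s vertices fail to totally list-disagree on $b$ (under the $\ell$-labeling $L$). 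Hence, by the defining property of $\beta(m,\Ell,k,d)$ — a cover of $\beta$ by parts from pairwise distinct partitions needs at least $d$ parts — whenever $A$'s vertices \emph{do} totally list-disagree on $b$, the fibre over $b$ consumes at least $d$ of the sets $S_{a,\sigma}$ with $a\sim b$, so $\sum_{a\sim b}|L(a)|\ge d$.

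Summing this over the set $B_{\mathrm{dis}}$ of vertices on which $A$'s vertices totally list-disagree, and using bi-regularity ($\sum_{b}\sum_{a\sim b}|L(a)| = D_A\,|\mathcal{C}|$ and $D_A|A| = D_B|B| = k|B|$), I get
$$ d\,|B_{\mathrm{dis}}| \;\le\; D_A\,|\mathcal{C}| \;\le\; k|B|\ln m(1-2\alpha). $$
Dividing by $d=k\ln m(1-\alpha)$ yields $|B_{\mathrm{dis}}|\le |B|\cdot\frac{1-2\alpha}{1-\alpha}$, and since $\frac{1-2\alpha}{1-\alpha}<1-\alpha$ for $0<\alpha<1$, this gives $|B_{\mathrm{dis}}|<(1-\alpha)|B|$; modulo turning $L$ into a bona fide $d$-labeling (discussed next), this contradicts the hypothesis that $\mathcal{G}$ has list-agreement soundness $(d,\alpha)$, which forces total list-disagreement on at least a $(1-\alpha)$ fraction of $B$.

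The step I expect to be the main obstacle is exactly this last conversion: the lists $L(a)$ need not have size $d$. Short lists are harmless — padding with arbitrary symbols only makes total list-disagreement harder, so $B_{\mathrm{dis}}$ for the padded labeling is contained in $B_{\mathrm{dis}}$ for $L$ and the bound above is preserved. The delicate case is a ``heavy'' vertex with $|L(a)|>d$, for which truncating to $d$ symbols is not monotone (the cover still uses the parts indexed by all of $L(a)$). One has to exploit that heavy vertices are scarce — there are at most $|\mathcal{C}|/d$ of them, a $\frac{1-2\alpha}{k(1-\alpha)}$ fraction of $A$ — and then account for the $b$'s whose disagreement witnesses unavoidably run through a heavy neighbour, choosing the retained symbols (or charging heavy vertices against the fibres they help cover) so that the extra loss stays inside the gap between the $1-2\alpha$ of the conclusion and the $1-\alpha$ of the hypothesis; this slack is precisely why the theorem sacrifices a factor of two in the ratio. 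Pinning down this accounting cleanly is the crux; everything else is the routine double-counting above together with the partition-system property and bi-regularity.
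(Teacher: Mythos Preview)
Your approach is essentially the paper's: the same double-counting identity $\sum_b s_b = D_A|\mathcal{C}|$, the same use of the partition-system property to link $s_b$ with list-(dis)agreement, and the same padded $d$-labeling at the end. The only difference is the order in which you run the argument, and that order is what creates the ``heavy vertex'' headache you are worried about.

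The paper sidesteps your obstacle with one observation you are missing. Instead of bounding $|B_{\mathrm{dis}}|$ and then trying to repair the labeling, apply Markov directly to $s_b=\sum_{a\sim b}|L(a)|$: since $\mathbb{E}[s_b]\le D_B\ln m(1-2\alpha)$ and $d=D_B\ln m(1-\alpha)$, strictly more than an $\alpha$-fraction of $b\in B$ satisfy $s_b<d$. For each such $b$ two things happen simultaneously: (i) the partition property forces a list-agreement witness $(a_1,\sigma_1),(a_2,\sigma_2)$ with $\sigma_i\in L(a_i)$; and (ii) because $s_b=\sum_{a\sim b}|L(a)|<d$, \emph{every} neighbour $a$ of $b$ already has $|L(a)|<d$, so after padding to a genuine $d$-labeling the witness is preserved intact --- no truncation is ever needed on the neighbours that matter. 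That single line replaces your entire proposed charging scheme; the slack between $1-2\alpha$ and $1-\alpha$ is consumed by Markov, not by accounting for heavy vertices.
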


\begin{proof}[Proof of Lemma \ref{lem:soundness}]
Assume not. Then, there exists a covering $\mathcal{C}$ of size at most $|A| \ln m (1 - 2\alpha)$. Define costs $c(a)$ and $c(b)$ for every $a \in A$ and $b \in B$ as follows:

\begin{itemize}
\item For every $a \in A$, let $c(a)$ be the number of sets $S_{a, \sigma}$ in the covering $\mathcal{C}$. Note that by construction, $\sum_{a \in A} s_a = |\mathcal{C}|$.
\item For every $b \in B$, let $c(b)$ be the number of sets in the covering $\mathcal{C}$ that participate in covering the portion $\beta \times \{b\}$. 
\end{itemize}

Observe that

$$\sum_{b \in B} s_b = D_A \sum_{a \in A} s_a$$

by construction. By the upper bound on the covering size, and the fact that $D_A |A| = D_B |B|$ due to bipartiteness, we get

$$\sum_{b \in B} s_b = D_A \sum_{a \in A} s_a \leq D_A |A| \ln m (1 - 2 \alpha) = D_B |B| \ln m (1 - 2 \alpha)$$

This means that, in expectation over all $b \in B$, each universe $\beta \times \{ b \}$ is covered by at most $D_B \ln m (1 - 2 \alpha)$ sets. Recall that the minimum set requirement $d$ of the partition system is equal to $D_B \ln m (1 - alpha)$. By Markov's Inequality, at least an $\alpha$ fraction of $b \in B$ have their corresponding universe $\beta \times \{ b \}$ covered by at most $d$ sets. By the property of the partition system, for these $b$, each must have two incident edges whose projections agree on some character in the labelings of the two paired vertices of $A$. 

Now, consider the $\ell$-labeling $\widehat{\phi_A} : A \to \binom{\Sigma_A}{d}$ where, for every $a \in A$, the label is made of $d$ symbols $\sigma$ for which $S_{a, \sigma}$ is in $\mathcal{C}$, with arbitrary symbols added as padding if there are not enough. Then, for the $\alpha$-fraction of $b$ described above, $A$'s vertices will not totally list-disagree. Contradiction as $\mathcal{G}$ has list-agreement soundness $(d, \alpha)$.
\end{proof}

To complete the reduction, fix a constant $0 < \alpha < 1$. Assume that the Projection Games Conjecture holds for $\varepsilon = \frac{c}{(\log_2 n)^4}$ for some constant $c$ depending only on $\alpha$. This enables us to use Corollary \ref{cor:keycorollarymoshkovitz} to generate a projection game for SAT with list-agreement soundness $(d, \alpha)$. Let $N = m|B|$ be the number of elements in the universe set of the \setcover{} instance $\mathcal{S}$. Take $m = \Theta (|B|^{1 / \alpha})$; this may require duplicating elements. Then $\ln N = (1 + \alpha) \ln m$ with an approximation ratio of at least $(1 - 3\alpha) \ln N$. The entire reduction is polynomial in $n$. This proves Lemma \ref{lem:mainclaimmoshkovitz}.

\section{Dinur and Steurer, 2014 \cite{DS-2014}}

The contribution of Dinur and Steurer is a proof of the weaker version of Conjecture \ref{conj:projectiongames} necessary to achieve the results of Moshkovitz. In particular, they prove the following result about the \textsc{Label Cover} problem:

\begin{theorem}
For every constant $c > 0$, given a \textsc{Label Cover} instance of size $n$ with alphabet size at most $n$, it is NP-hard to decide if its value is 1 or at most $\varepsilon = (\log n)^{-c}$.
\end{theorem}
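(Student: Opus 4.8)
The plan is to obtain the claimed \textsc{Label Cover} instance as the output of a short polynomial-time pipeline from SAT, combining the Almost-Linear-Size PCP Theorem of Moshkovitz and Raz with the parallel-repetition theorem for projection games that is the technical core of \cite{DS-2014}. Since every projection game is a \textsc{Label Cover} instance, it suffices to produce a hard \emph{projection} game with the stated parameters. The tension is that neither ingredient works alone: invoking the Almost-Linear-Size PCP Theorem directly with soundness $\varepsilon = (\log n)^{-c}$ keeps the instance polynomial but pushes the alphabet to $\exp((\log n)^c)$, which is quasi-polynomial for $c>1$ and not cheaply reducible (the usual alphabet-reduction gadgets cost a factor polynomial in the alphabet size); whereas amplifying a constant-soundness game purely by parallel repetition needs $\Theta(\log\log n)$ repetitions to reach soundness $(\log n)^{-c}$, inflating the instance to quasi-polynomial size $n^{\Theta(\log\log n)}$ and losing NP-hardness.

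The pipeline has three stages. \emph{Stage 1}: run the Almost-Linear-Size PCP Theorem with the \emph{mild} soundness $\varepsilon_0 = 1/\log n$, reducing a SAT instance of size $n$ in polynomial time to a projection game $\mathcal{G}_0$ of size $N_0 = n^{1+o(1)}$ over an alphabet of size $\exp(1/\varepsilon_0) = n^{O(1)}$, with value $1$ in the completeness case and value at most $1/\log n$ in the soundness case. \emph{Stage 2}: apply $t$-fold parallel repetition for a constant $t$ to be fixed in terms of $c$. The game $\mathcal{G}_0^{\otimes t}$ is again a projection game (a product of projections is a projection), its completeness is exactly $1$, and by the projection-game repetition bound of \cite{DS-2014} its value is at most $\varepsilon_0^{\Omega(t)} = (\log n)^{-\Omega(t)}$, which we force below $(\log n)^{-2c}$ by choosing $t$ a large enough constant; since $t$ is constant the instance has size $N_0^{O(t)} = \mathrm{poly}(n)$ and alphabet $n^{O(t)} = \mathrm{poly}(n)$ and is built in polynomial time. \emph{Stage 3}: replace $\mathcal{G}_0^{\otimes t}$ by enough disjoint copies of itself that the new size $N$ is at least the alphabet size; disjoint copies leave the value unchanged (distinct components are labeled independently), so the alphabet is now at most $N$, completeness is still $1$, and since $N = \mathrm{poly}(n)$ we have $\log N = \Theta(\log n)$, making the soundness $(\log n)^{-2c}$ at most $(\log N)^{-c}$ for all large $n$. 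Composing the three stages is a polynomial-time reduction from SAT to the gap problem ``value $1$ versus value $\le (\log N)^{-c}$'' for \textsc{Label Cover} of size $N$ with alphabet at most $N$ — the theorem (and, plugged into Corollary \ref{cor:keycorollarymoshkovitz}, precisely the weak form of Conjecture \ref{conj:projectiongames} used by Moshkovitz's reduction).

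The main obstacle is the repetition bound used in Stage 2, which is the substance of \cite{DS-2014}. The classical theorems — Raz's, and Rao's sharpening for projection games — only give $\mathrm{val}(\mathcal{G}_0^{\otimes t}) \le 2^{-\Omega(t)}$ once $\mathrm{val}(\mathcal{G}_0)$ is bounded away from $1$, with no further benefit from $\mathrm{val}(\mathcal{G}_0)$ already being $1/\log n$; under such a bound Stage 2 would again demand $t = \Theta(c\log\log n)$ repetitions and the construction would revert to quasi-polynomial size. One needs instead a bound of the form $\mathrm{val}(\mathcal{G}^{\otimes t}) \le \mathrm{val}(\mathcal{G})^{\Omega(t)}$ for projection games, so that the value decays by a fixed power per repetition and a \emph{constant} number of repetitions suffices. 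Dinur and Steurer get this analytically: to each projection game $\mathcal{G}$ they attach a relaxation $\mathrm{val}^{+}(\mathcal{G})$ — morally the $2$-norm of a linear operator assembled from the constraint graph and the projection maps — which (i) upper-bounds $\mathrm{val}(\mathcal{G})$, (ii) is \emph{exactly} multiplicative under tensoring, so $\mathrm{val}^{+}(\mathcal{G}^{\otimes t}) = \mathrm{val}^{+}(\mathcal{G})^{t}$, and (iii) on projection games is at most a fixed power of $\mathrm{val}(\mathcal{G})$; chaining these gives $\mathrm{val}(\mathcal{G}^{\otimes t}) \le \mathrm{val}^{+}(\mathcal{G})^{t} \le \mathrm{val}(\mathcal{G})^{\Omega(t)}$. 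Proving (ii), and above all (iii) — that passing to the analytic relaxation costs only a bounded power of the value for projection games — is the delicate part; the rest of the chain is parameter bookkeeping.
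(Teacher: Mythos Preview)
The paper does not prove this theorem at all: Section~8 simply quotes it as the contribution of \cite{DS-2014} and immediately plugs it into Lemma~\ref{lem:mainclaimmoshkovitz} as a black box, consistent with the paper's stated methodology of ``treat[ing] many preliminary results as black-boxes.'' There is therefore no ``paper's own proof'' to compare against.

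Your sketch, by contrast, actually outlines the Dinur--Steurer argument, and does so accurately. The three-stage pipeline (Moshkovitz--Raz PCP at mild soundness $1/\log n$ to keep the alphabet polynomial, then a \emph{constant} number of parallel repetitions using the $\mathrm{val}(\mathcal{G}^{\otimes t}) \le \mathrm{val}(\mathcal{G})^{\Omega(t)}$ bound for projection games, then padding so the alphabet is at most the instance size) is exactly the route taken in \cite{DS-2014}, and your identification of the analytic relaxation $\mathrm{val}^{+}$ with its exact multiplicativity under tensoring as the technical heart is correct. You also correctly diagnose why neither ingredient alone suffices and why the older Raz/Rao repetition bounds are inadequate here. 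If anything, your write-up goes well beyond what the paper asks for; within the paper's own framework, a one-line citation would have been in keeping with its treatment of every other external result.
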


This is sufficient to prove Lemma \ref{lem:mainclaimmoshkovitz} without reliance on a conjecture, hence proving that \setcover{} is inapproximable within $(1 - \varepsilon) \ln n$ for any $0 < \varepsilon < 1$. We therefore arrive at the current result of Theorem \ref{thm:primary}.

\newpage
\appendix
\section*{Appendix A: Chronological Timelines}
\makeatletter
\def\@currentlabel{A}
\makeatother
\label{app:timeline}
\renewcommand{\thesubsection}{\Roman{subsection}}

\subsection{Chronological Timeline with Results as Stated}
The following timeline presents approximate order of \emph{initial} results, with the years representing the final dates of publication of each paper. The values given are the new lower bounds on approximability of \setcover{}; note that the main paper presents theorems largely in terms of \emph{upper} bounds on \emph{inapproximability}. See Section \ref{sec:timeline} for more.

\begin{tikzpicture}[font=\small,thick, >=latex]

\node[draw,
	rounded rectangle,
	minimum width=2.5cm,
	minimum height=1cm] (LY) {\parbox{6cm}{\centering \textbf{Lund and Yannakakis, 1994} \cite{LY-1994} \\[0.3em] $\frac{1}{4} \log_2 (n)$ if $\text{NP} \nsubseteq \text{DTIME}(n^{\text{polylog }n})$ \\[0.3em]  $\frac{1}{2} \log_2 (n)$ if  $\text{NP} \nsubseteq \text{ZTIME}(n^{\text{polylog }n})$}};

\node[draw,
	rounded rectangle,
	below left=5cm and -1.25cm of LY,
	minimum width=2.5cm,
	minimum height=1cm] (BGLR) {\parbox{6cm}{\centering \textbf{Bellare, Goldwasser, Lund} \\ \textbf{and Russell, 1993} \cite{BGLR-1993} \\[0.3em] $c$ for any constant $c$ if $\text{P} \neq \text{NP}$ \\[0.3em] $\frac{1}{8} \log_2 (n)$ if $\text{NP} \nsubseteq \text{DTIME}(n^{\mathcal{O} (\log \log n)})$}};

\node[draw,
	rounded rectangle,
	below right=1.5cm and -1.25cm of LY,
	minimum width=2.5cm,
	minimum height=1cm] (Raz) {\parbox{6cm}{\centering \textbf{Raz, 1998} \cite{Raz-1998} \\[0.3em] $\frac{1}{4} \log_2 (n)$ if $\text{NP} \nsubseteq \text{DTIME}(n^{\mathcal{O} (\log \log n)})$ \\[0.3em]  $\frac{1}{2} \log_2 (n)$ if $\text{NP} \nsubseteq \text{ZTIME}(n^{\mathcal{O} (\log \log n)})$}};

\node[coordinate, left = 2.5cm of Raz] (Dummy) {};

\node[draw,
	rounded rectangle,
	below=1.5cm of Raz,
	minimum width=2.5cm,
	minimum height=1cm] (Naor) {\parbox{6cm}{\centering \textbf{Naor, Schulman, and} \\ \textbf{Srinivasan, 1995} \cite{NSS-1995} \\[0.3em] $\frac{1}{2} \log_2 (n)$ if $\text{NP} \nsubseteq \text{DTIME}(n^{\mathcal{O} (\log \log n)})$}};

\node[draw,
	rounded rectangle,
	below=1.5cm of Naor,
	minimum width=2.5cm,
	minimum height=1cm] (Feige) {\parbox{6cm}{\centering \textbf{Feige, 1998} \cite{Feige-1998} \\[0.3em] $\ln(n)$ if $\text{NP} \nsubseteq \text{DTIME}(n^{\mathcal{O} (\log \log n)})$}};

\node[coordinate, left=2.5cm of Feige] (Dummy2) {};

\node[draw,
	rounded rectangle,
	below left=2cm and -1.25cm of Feige,
	minimum width=2.5cm,
	minimum height=1cm] (Mosh) {\parbox{6cm}{\centering \textbf{Moshkovitz, 2015} \cite{Moshkovitz-2015} \\[0.3em] $\ln(n)$ if $\text{P} \neq \text{NP}$ and the \\ Projection Games Conjecture holds}};

\node[draw,
	rounded rectangle,
	below=1.5cm of Mosh,
	minimum width=2.5cm,
	minimum height=1cm] (DS) {\parbox{6cm}{\centering \textbf{Dinur and Steurer, 2014} \cite{DS-2014} \\[0.3em] $\ln(n)$ if $\text{P} \neq \text{NP}$}};

\draw (LY) |- (Dummy);
\draw[->] (Dummy) -| (BGLR);
\draw[->] (LY) |- (Raz);
\draw[->] (Raz) -- (Naor);
\draw[->] (Naor) -- (Feige);
\draw[->] (Feige) -| (Mosh);
\draw (BGLR) |- (Dummy2);
\draw[->] (Dummy2) -| (Mosh);
\draw[->] (Mosh) -- (DS);

\end{tikzpicture}

\newpage
\subsection{Chronological Timeline in Terms of \protect$\ln(n)$}
The following timeline again presents approximate order of \emph{initial} results, with dates given being actual publication dates. Because logarithms differ from one another by only a constant factor, it is possible to present all results of the timeline in terms of $\ln(n)$. This timeline is presented as a way of clearly observing the improvement in results. Decimals are rounded to the nearest thousandth.
\vspace{0.15in}

\begin{tikzpicture}[font=\small,thick, >=latex]

\node[draw,
	rounded rectangle,
	minimum width=2.5cm,
	minimum height=1cm] (LY) {\parbox{6cm}{\centering \textbf{Lund and Yannakakis, 1994} \cite{LY-1994} \\[0.3em] $0.361 \ln (n)$ if $\text{NP} \nsubseteq \text{DTIME}(n^{\text{polylog }n})$ \\[0.3em]  $0.721 \ln (n)$ if  $\text{NP} \nsubseteq \text{ZTIME}(n^{\text{polylog }n})$}};

\node[draw,
	rounded rectangle,
	below left=5cm and -1.25cm of LY,
	minimum width=2.5cm,
	minimum height=1cm] (BGLR) {\parbox{6.25cm}{\centering \textbf{Bellare, Goldwasser, Lund} \\ \textbf{and Russell, 1993} \cite{BGLR-1993} \\[0.3em] $c$ for any constant $c$ if $\text{P} \neq \text{NP}$ \\[0.3em] $0.180 \ln (n)$ if $\text{NP} \nsubseteq \text{DTIME}(n^{\mathcal{O} (\log \log n)})$}};

\node[draw,
	rounded rectangle,
	below right=1.5cm and -1.25cm of LY,
	minimum width=2.5cm,
	minimum height=1cm] (Raz) {\parbox{6.25cm}{\centering \textbf{Raz, 1998} \cite{Raz-1998} \\[0.3em] $0.361 \ln (n)$ if $\text{NP} \nsubseteq \text{DTIME}(n^{\mathcal{O} (\log \log n)})$ \\[0.3em]  $0.721 \ln (n)$ if $\text{NP} \nsubseteq \text{ZTIME}(n^{\mathcal{O} (\log \log n)})$}};

\node[coordinate, left = 2.5cm of Raz] (Dummy) {};

\node[draw,
	rounded rectangle,
	below=1.5cm of Raz,
	minimum width=2.5cm,
	minimum height=1cm] (Naor) {\parbox{6.25cm}{\centering \textbf{Naor, Schulman, and} \\ \textbf{Srinivasan, 1995} \cite{NSS-1995} \\[0.3em] $0.721 \ln (n)$ if $\text{NP} \nsubseteq \text{DTIME}(n^{\mathcal{O} (\log \log n)})$}};

\node[draw,
	rounded rectangle,
	below=1.5cm of Naor,
	minimum width=2.5cm,
	minimum height=1cm] (Feige) {\parbox{6cm}{\centering \textbf{Feige, 1998} \cite{Feige-1998} \\[0.3em] $\ln(n)$ if $\text{NP} \nsubseteq \text{DTIME}(n^{\mathcal{O} (\log \log n)})$}};

\node[coordinate, left=2.5cm of Feige] (Dummy2) {};

\node[draw,
	rounded rectangle,
	below left=2cm and -1.25cm of Feige,
	minimum width=2.5cm,
	minimum height=1cm] (Mosh) {\parbox{6cm}{\centering \textbf{Moshkovitz, 2015} \cite{Moshkovitz-2015} \\[0.3em] $\ln(n)$ if $\text{P} \neq \text{NP}$ and the \\ Projection Games Conjecture holds}};

\node[draw,
	rounded rectangle,
	below=1.5cm of Mosh,
	minimum width=2.5cm,
	minimum height=1cm] (DS) {\parbox{6cm}{\centering \textbf{Dinur and Steurer, 2014} \cite{DS-2014} \\[0.3em] $\ln(n)$ if $\text{P} \neq \text{NP}$}};

\draw (LY) |- (Dummy);
\draw[->] (Dummy) -| (BGLR);
\draw[->] (LY) |- (Raz);
\draw[->] (Raz) -- (Naor);
\draw[->] (Naor) -- (Feige);
\draw[->] (Feige) -| (Mosh);
\draw (BGLR) |- (Dummy2);
\draw[->] (Dummy2) -| (Mosh);
\draw[->] (Mosh) -- (DS);

\end{tikzpicture}

\newpage





\begin{thebibliography}{10}

\bibitem{Arora-1994}
S.~Arora.
\newblock {\em Probabilistic Checking of Proofs and Hardness of Approximation
  Problems}.
\newblock PhD thesis, University of California at Berkeley, USA, 1995.
\newblock UMI Order No. GAX95-30468.

\bibitem{AS-1998}
S.~Arora and S.~Safra.
\newblock Probabilistic checking of proofs: A new characterization of np.
\newblock {\em J. ACM}, 45(1):70–122, Jan. 1998.

\bibitem{BM-1988}
L.~Babai and S.~Moran.
\newblock Arthur-merlin games: A randomized proof system, and a hierarchy of
  complexity classes.
\newblock {\em J. Comput. Syst. Sci.}, 36:254--276, 1988.

\bibitem{BGLR-1993}
M.~Bellare, S.~Goldwasser, C.~Lund, and A.~Russell.
\newblock Efficient probabilistically checkable proofs and applications to
  approximations.
\newblock In {\em Proceedings of the Twenty-Fifth Annual ACM Symposium on
  Theory of Computing}, STOC '93, page 294–304, New York, NY, USA, 1993.
  Association for Computing Machinery.

\bibitem{DS-2014}
I.~Dinur and D.~Steurer.
\newblock Analytical approach to parallel repetition.
\newblock In {\em Proceedings of the Forty-Sixth Annual ACM Symposium on
  Theory of Computing}, STOC '14, page 624-633, New York, NY, USA, 2014.
  Association for Computing Machinery.

\bibitem{Feige-1998}
U.~Feige.
\newblock A threshold of ln n for approximating set cover.
\newblock {\em J. ACM}, 45(4):634–652, July 1998.

\bibitem{FL-1992}
U.~Feige and L.~Lov\'{a}sz.
\newblock Two-prover one-round proof systems: Their power and their problems
  (extended abstract).
\newblock In {\em Proceedings of the Twenty-Fourth Annual ACM Symposium on
  Theory of Computing}, STOC '92, page 733–744, New York, NY, USA, 1992.
  Association for Computing Machinery.

\bibitem{GMR-1989}
S.~Goldwasser, S.~Micali, and C.~Rackoff.
\newblock The knowledge complexity of interactive proof systems.
\newblock {\em SIAM J. Comput.}, 18(1):186–208, Feb. 1989.

\bibitem{Johnson-1974}
D.~S. Johnson.
\newblock Approximation algorithms for combinatorial problems.
\newblock {\em Journal of Computer and System Sciences}, 9(3):256--278, 1974.

\bibitem{Karp-1972}
R.~M. Karp.
\newblock Reducibility among combinatorial problems.
\newblock In R.~E. Miller, J.~W. Thatcher, and J.~D. Bohlinger, editors, {\em
  Complexity of Computer Computations: Proceedings of a symposium on the
  Complexity of Computer Computations, held March 20--22, 1972, at the IBM
  Thomas J. Watson Research Center, Yorktown Heights, New York, and sponsored
  by the Office of Naval Research, Mathematics Program, IBM World Trade
  Corporation, and the IBM Research Mathematical Sciences Department}, pages
  85--103, Boston, MA, 1972. Springer US.

\bibitem{Lovasz-1975}
L.~Lov\'asz.
\newblock On the ratio of optimal integral and fractional covers.
\newblock {\em Discrete Mathematics}, 13(4):383--390, 1975.

\bibitem{LY-1994}
C.~Lund and M.~Yannakakis.
\newblock On the hardness of approximating minimization problems.
\newblock {\em Journal of the {ACM}}, 41(5):960--981, 1994.

\bibitem{Moshkovitz-2015}
D.~Moshkovitz.
\newblock The projection games conjecture and the np-hardness of ln
  $n$-approximating set-cover.
\newblock {\em Theory of Computing}, 11(7):221--235, 2015.

\bibitem{MR-2008}
D.~Moshkovitz and R.~Raz.
\newblock Two-query pcp with subconstant error.
\newblock {\em J. ACM}, 57(5), June 2008.

\bibitem{NSS-1995}
M.~Naor, L.~Schulman, and A.~Srinivasan.
\newblock Splitters and near-optimal derandomization.
\newblock In {\em Proceedings of IEEE 36th Annual Foundations of Computer
  Science}, pages 182--191, 1995.

\bibitem{Raz-1998}
R.~Raz.
\newblock A parallel repetition theorem.
\newblock {\em SIAM Journal on Computing}, 27(3):763--803, 1998.

\end{thebibliography}

\end{document}